\newtheorem{theorem}{Theorem}[section]
\newtheorem{lemma}[theorem]{Lemma}
\newtheorem{corollary}[theorem]{Corollary}
\newtheorem{conjecture}[theorem]{Conjecture}
\newtheorem{definition}[theorem]{Definition}
\newtheorem{remark}[theorem]{Remark}
\newtheorem{example}[theorem]{Example}
\newtheorem{question}[theorem]{Question}
\newcommand{\ma}{\mathcal}
\newcommand{\s}{\subseteq}
\newcommand{\fr}{\frac}
\newcommand{\lc}{\lceil}
\newcommand{\rc}{\rceil}
\newcommand{\lf}{\lfloor}
\newcommand{\rf}{\rfloor}
\begin{document}
\begin{CJK*}{GBK}{song}

\title{Separating hash families: A Johnson-type bound and new constructions}
\author{
Chong Shangguan$^{\text{a}}$, Gennian Ge$^{\text{b,c,}}$\thanks{Corresponding author. Email address: gnge@zju.edu.cn. Research supported by the National Natural Science Foundation of China under Grant Nos. 11431003 and 61571310.
}
\\
\footnotesize $^{\text{a}}$ School of Mathematical Sciences, Zhejiang University, Hangzhou 310027, Zhejiang, China\\
\footnotesize $^{\text{b}}$ School of Mathematical Sciences, Capital Normal University, Beijing 100048, China\\
\footnotesize $^{\text{c}}$ Beijing Center for Mathematics and Information
Interdisciplinary Sciences,
Beijing 100048, China\\
}
\date{}
\maketitle

\begin{abstract}
    Separating hash families are useful combinatorial structures which are generalizations of many well-studied objects in combinatorics, cryptography and coding theory.
    In this paper, using tools from graph theory and additive number theory, we solve several open problems and conjectures concerning bounds and constructions for separating hash families.

    Firstly, we discover that the cardinality of a separating hash family satisfies a Johnson-type inequality. As a result, we obtain a new upper bound, which is superior to all previous ones.

    Secondly, we present a construction for an infinite class of perfect hash families. It is based on the Hamming graphs in coding theory and generalizes many constructions that appeared before. It provides an affirmative answer to both Bazrafshan-Trung's open problem on separating hash families and Alon-Stav's conjecture on parent-identifying codes.

    Thirdly, let $p_t(N,q)$ denote the maximal cardinality of a $t$-perfect hash family of length $N$ over an alphabet of size $q$. Walker and Colbourn conjectured that $p_3(3,q)=o(q^2)$. We verify this conjecture by proving $q^{2-o(1)}<p_3(3,q)=o(q^2)$. Our proof can be viewed as an application of Ruzsa-Szemer{\'e}di's (6,3)-theorem. We also prove $q^{2-o(1)}<p_4(4,q)=o(q^2)$. Two new notions in graph theory and additive number theory, namely rainbow cycles and $R$-sum-free sets, are introduced to prove this result. These two bounds support a question of Blackburn, Etzion, Stinson and Zaverucha.

    Finally, we establish a bridge between perfect hash families and hypergraph Tur{\'a}n problems. This connection has not been noticed before. As a consequence, many new results and problems arise.
\end{abstract}

{\it Keywords:} separating hash family, perfect hash family, Johnson-type bound, rainbow cycle, $R$-sum-free set.

{\it Mathematics subject classifications:} 05B30, 94A60, 68R05, 94B60

\section{Introduction}

 Separating hash families are useful combinatorial structures introduced by Stinson, Wei and Chen \cite{ST08}. They are generalizations of many combinatorial objects, for example, perfect hash families, frameproof codes and codes with the identifiable parent property.

     Let us begin with some definitions.

   \begin{definition}

     Let $X$ and $Y$ be sets of cardinalities $n$ and $q$, respectively. We call a set $\ma{F}$ of $N$ functions $f:X\rightarrow Y$ an $(N;n,q)$-hash family.

   \end{definition}

   \begin{definition}

     Let $f:X\rightarrow Y$ be a function, and let pairwise disjoint subsets $C_1,C_2,\ldots,C_t\s X$. We say that $f$ separates $C_1,C_2,\ldots,C_t$ if $f(C_1),\ldots,f(C_t)$ are pairwise disjoint. In particular, we say that $f$ separates a subset $C\s X$ if $f(C)\s Y$ has $|C|$ distinct values.

   \end{definition}

   \begin{definition}

     Let $X$ and $Y$ be sets of cardinalities $n$ and $q$, respectively, and let $\ma{F}$ be an $(N;n,q)$-hash family of functions from $X$ to $Y$. We say that $\ma{F}$ is an $(N;n,q,\{w_1,\ldots,w_t\})$-separating hash family (which we will also denote as an $SHF(N;n,q,\{w_1,\ldots,w_t\})$) if it satisfies the following property: for all pairwise disjoint subsets $C_1,C_2,\ldots,C_t\s X$ with $|C_i|=w_i$ for $1\le i\le t$, there exists at least one function $f\in\ma{F}$ that separates $C_1,C_2,\ldots,C_t$. We call the multiset $\{w_1,\ldots,w_t\}$ the type of this separating hash family.

   \end{definition}

    For a positive integer $q$, we denote $[q]$ for the set $\{1,\ldots,q\}$. Without loss of generality, we may fix the alphabet set $Y$ to be the set of first $q$ positive integers. And for the sake of simplicity, we set $u=\sum_{i=1}^{t} w_i$ throughout this paper. To avoid trivial cases, we assume that $n>q$, $q\ge t\ge 2$ and $u\le n$.

    The concept of separating hash families was first introduced in the special case $t=2$ by Stinson, Trung and Wei \cite{secure} and then generalized by Stinson, Wei and Chen \cite{ST08}. This notion has relations with many well-studied objects in combinatorics, cryptography and coding theory, see \cite{blackburn08,ST08} for a detailed introduction. We will summarise some objects in which we are interested.

     \begin{itemize}
      \item If $w_1=w_2=\cdots w_t=1$, an $SHF(N;n,q,\{1,\ldots,1\})$ is known as a $t$-{\it perfect hash family}, which will be denoted as $PHF(N;n,q,t)$.
            Perfect hash families are basic combinatorial structures and have important applications in cryptography \cite{cyr1,cry2,ST01,secure}, database management \cite{data1}, circuit design \cite{circuit} and the design of deterministic analogues of probabilistic algorithms \cite{alog}.

      \item If $t=2$ with $w_1=1$ and $w_2=w$, an $SHF(N;n,q,\{1,w\})$ is known as a $w$-{\it frameproof code}. 
            The frameproof code is a kind of fingerprinting codes and has applications in the protection of copyrighted materials. See \cite{BL03,fpc1,ST01,fpc2} for results on frameproof codes.

      \item Codes with the {\it identifiable parent property} (or 2-$IPP$ codes) are separating hash families which are simultaneously of type $\{1,1,1\}$ and
            $\{2,2\}$, see \cite{ippa,ippe,ippc,ippd,ipp}.

     \end{itemize}

      Bounds and constructions for separating hash families are central problems in this research area. Given positive integers $N$, $q$ and $w_1,\ldots,w_t$,
      it is of interest how large the cardinality $n$ of the preimage set $X$ can be. We use $C(N,q,\{w_1,\ldots,w_t\})$ to denote this maximal cardinality.

      By a method known as grouping coordinates, the problem of bounding $C(N,q,\{w_1,\ldots,w_t\})$ can be reduced to bounding $C(u-1,q,\{w_1,\ldots,w_t\})$, since it has been observed in \cite{Trung2011,blackburn08,ST08} that $C(N,q,\{w_1,\ldots,w_t\})\le C(u-1,q^{\lceil N/(u-1)\rceil},\{w_1,\ldots,w_t\})$.

     In the literature, researchers are seeking for the minimal positive real number $\gamma$ such that $C(u-1,q,\{w_1,\ldots,w_t\})\le\gamma q$ holds for arbitrary $q$. The reader is referred to \cite{Trung2011,blackburn08,ST01,secure,ST08} for the attempts that have been made. In 2008, Stinson, Wei and Chen \cite{ST08} proved $C(3,q,\{1,1,2\})\le3q+2-2\sqrt{3m+1}$ and $C(3,q,\{2,2\})\le 4q-3$ for two special cases. In the same year, Blackburn, Etizon, Stinson and Zaverucha \cite{blackburn08} proved $C(u-1,q,\{w_1,\ldots,w_t\})\le(w_1w_2+u-w_1-w_2) q$, where $w_1,w_2\le w_i$ for $3\le i\le t$. In 2011, Bazrafshan and Trung \cite{Trung2011} proved the following theorem:

     \begin{theorem} {\rm(\cite{Trung2011})}\label{shf0}
        $C(u-1,q,\{w_1,\ldots,w_t\})\le(u-1)q.$
     \end{theorem}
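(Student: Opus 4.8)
The plan is to argue by contradiction: suppose $\mathcal F$ is an $SHF(u-1;n,q,\{w_1,\dots,w_t\})$ with $n>(u-1)q$, and derive a family of pairwise disjoint sets $C_1,\dots,C_t$ with $|C_i|=w_i$ that no single function $f\in\mathcal F$ separates. Each $f\in\mathcal F$ is a map $X\to[q]$, so it partitions $X$ into at most $q$ fibers (preimages of points of $Y$). First I would record the elementary observation that a function $f$ \emph{fails} to separate $C_1,\dots,C_t$ precisely when either some $C_i$ is not separated (two of its elements share an $f$-value) or two elements from different blocks $C_i,C_j$ share an $f$-value; equivalently, $f$ separates $C_1,\dots,C_t$ iff the $u=\sum w_i$ chosen elements all receive distinct $f$-values \emph{and} are grouped correctly — but since the blocks are just sets, the cleanest reformulation is: $f$ separates the configuration iff no two of the $u$ points lying in the \emph{same} block collide and no two points in \emph{different} blocks collide, i.e. iff the $u$ points hit $u$ distinct fibers, \emph{except} that collisions \emph{within} a block are the only thing one must additionally allow — I would sort out this bookkeeping carefully, since the type $\{w_1,\dots,w_t\}$ matters only through $u$ in the final bound.

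The heart of the argument is a counting/pigeonhole step on the $N=u-1$ coordinates. Since $n>(u-1)q$, by pigeonhole at least one function $f_1\in\mathcal F$ has a fiber of size $\ge n/q>u-1$, hence a fiber containing $\ge u$ points of $X$; call such a fiber "large''. The idea is to build the bad configuration coordinate by coordinate: I would try to find $u$ points $x_1,\dots,x_u\in X$ such that for \emph{every} $f\in\mathcal F$, at least two of them lie in a common fiber of $f$ and moreover can be placed into blocks so that the collision is "illegal'' (either within a block or across two blocks). With only $u-1$ functions available and needing to "kill'' each one with one collision among $u$ points, a careful greedy selection — at step $k$ choosing a point forced into a large fiber of $f_k$ among the already-selected points — should work, the point being that $u$ elements admit enough pairs ($\binom u2\ge u-1$ for $u\ge 2$) to be spoiled by $u-1$ functions, while the blocks $C_1,\dots,C_t$ can be chosen after the fact to route each collision into a forbidden position. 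Because $t\ge 2$ and each $w_i\ge 1$, there is always a legal assignment of the $u$ points to blocks that turns any prescribed pair-collision into a violation.

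I expect the main obstacle to be the simultaneity of the selection: one must choose a \emph{single} set of $u$ points together with a \emph{single} partition into blocks $C_1,\dots,C_t$ that defeats all $u-1$ functions at once, rather than defeating them one at a time with different configurations. The delicate point is that a collision useful against $f_k$ (two points in a common $f_k$-fiber) must be placed in a block-configuration that is forbidden, and this placement must not accidentally be "legal'' for the purpose of some other coordinate — but since we only need \emph{one} violating pair per function and we are free to choose the block structure last, I would handle this by first selecting the $u$ points via iterated pigeonhole (ensuring for each $k$ a monochromatic pair under $f_k$ among the selected points), then invoking a short combinatorial lemma that any graph on $u$ vertices with at most $u-1$ edges — here the "collision graph'' whose edges are the guaranteed monochromatic pairs, one per function — can be properly handled by a suitable block partition making every edge forbidden; with $\sum w_i=u$ and $t\ge2$ this is where the bound $(u-1)q$ is exactly tight. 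Closing the gap between "$n>(u-1)q$'' and the existence of the full violating configuration, i.e. verifying that $u-1$ coordinates genuinely suffice to force the points but not to separate them, is the crux and should be presented as the key lemma.
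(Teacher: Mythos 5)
Your proposal has a genuine gap, in fact two. First, the ``routing lemma'' you invoke at the end is false. You want: for any graph on $u$ vertices with $u-1$ edges (one guaranteed monochromatic pair per row) there is a partition into blocks of sizes $w_1,\ldots,w_t$ making every edge cross-block. Take type $\{2,2\}$ (so $u=4$, $N=3$) and suppose the three collisions form a star: $f_k(c)=f_k(l_k)$ for $k=1,2,3$ and no other coincidences. Every partition of $\{c,l_1,l_2,l_3\}$ into two $2$-sets places $c$ with some $l_k$, and then row $k$ has its only collision \emph{inside} a block --- which is not a violation, since the definition only forbids $f(C_i)\cap f(C_j)\neq\emptyset$ for $i\neq j$ (note also that your preliminary ``bookkeeping'' is off on exactly this point: within-block collisions are always allowed). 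So row $k$ may well separate that configuration, and finding such a $4$-set proves nothing. Second, the selection step is not established: a single pigeonhole gives one large fiber of one function, but you need $u$ points carrying a monochromatic pair for \emph{every} one of the $u-1$ functions simultaneously, and the nested-fiber greedy you sketch does not survive even the case $u=3$, $N=2$, $n=2q+1$ (a row-1 fiber of size $3$ may be injective under row 2). This simultaneity is precisely the content of the theorem, so it cannot be waved through.

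For comparison: the paper does not reprove Theorem \ref{shf0} directly (it is quoted from \cite{Trung2011}) but subsumes it in Lemma \ref{technical}, whose proof is an induction on the \emph{type} rather than a one-shot extremal configuration. One deletes $l$ rows ($l=1$ here) together with at most $q^l$ columns chosen as representatives of the distinct restriction patterns on those rows, and shows the surviving submatrix is $\{w_1-1,w_2,\ldots,w_t\}$-separating: a violating configuration $C_1,\ldots,C_t$ in the submatrix is promoted to one in the full matrix by adjoining to $C_1$ a deleted representative $c'$ agreeing on the deleted rows with some $c\in C_2$, so the deleted rows are killed by the cross-block pair $(c',c)$ and the remaining rows by induction. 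Iterating $u-1$ times down to $C(0,q,\{1\})=0$ yields $(u-1)q$. That induction is exactly what builds the bad configuration and its block structure coherently across all rows --- the step your proposal correctly identifies as the crux but does not supply.
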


     \noindent Moreover, they conjectured that (see Question \ref{trung}) $\gamma=u-1$ is the minimal real number such that the above bound holds for arbitrary $q$.

     We improve Theorem \ref{shf0} in various aspects, including some tighter bounds and asymptotically optimal constructions. The novelty of our work is that we develop two new approaches to study bounds and constructions for codes and hash families with the separating property. We will explain them in detail in the conclusion section of this paper.

     We state our main results as follows.

    \subsection{Separating hash families}

    Following the steps of previous papers \cite{Trung2011,blackburn08,ST08}, we discover an important property for separating hash families that the growth of $C(N,q,\{w_1,\ldots,w_t\})$ satisfies a Johnson-type inequality. Roughly speaking, $C(N,q,\{w_1,...,w_t\})\le q^l+\max\{u-1,C(N-l,q,\{w_1-1,...,w_t\})\}$ holds for every positive integer $l$ (see Lemma \ref{technical} below). As a result, we obtain the following new upper bound for separating hash families which is the best known one.

    \begin{theorem} \label{recusivebd}
        Suppose there exists an $SHF(N;n,q,\{w_1,\ldots,w_t\})$. Let $u=\sum_{i=1}^t w_i$ and let $1\le r\le u-1$ be the positive integer such that $N\equiv r \pmod{u-1}$.
        If $C(\lfloor N/(u-1)\rfloor,q,\{w_1,\ldots,w_t\})\ge u$, then it holds that $n\le rq^{\lceil N/(u-1)\rceil}+(u-1-r)q^{\lfloor N/(u-1)\rfloor}$.
    \end{theorem}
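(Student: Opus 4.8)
The plan is to derive Theorem \ref{recusivebd} from the Johnson-type recursion announced as Lemma \ref{technical}, namely that for every positive integer $l$ one has
\[
C(N,q,\{w_1,\ldots,w_t\})\le q^l+\max\{u-1,\,C(N-l,q,\{w_1-1,\ldots,w_t\})\},
\]
together with the ``grouping coordinates'' inequality $C(N,q,\{w_1,\ldots,w_t\})\le C(u-1,q^{\lceil N/(u-1)\rceil},\{w_1,\ldots,w_t\})$ recorded in the introduction and the base bound of Theorem \ref{shf0}. The first step is to reduce everything to the case $N=u-1$: write $N=(u-1)\lfloor N/(u-1)\rfloor+r$ with $1\le r\le u-1$, set $q'=q^{\lceil N/(u-1)\rceil}$ and $q''=q^{\lfloor N/(u-1)\rfloor}$, and observe that grouping the $N$ coordinates into $r$ blocks of size $\lceil N/(u-1)\rceil$ and $u-1-r$ blocks of size $\lfloor N/(u-1)\rfloor$ shows that an $SHF(N;n,q,\{w_1,\ldots,w_t\})$ yields a separating hash family of length $u-1$ over a mixed alphabet where $r$ coordinates have $q'$ symbols and $u-1-r$ coordinates have $q''$ symbols. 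So the real content is a mixed-alphabet version of the Johnson bound for length exactly $u-1$.

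The second and main step is to iterate the recursion along these $u-1$ coordinates, peeling off one coordinate at a time and decreasing some $w_i$ by $1$ at each stage, exactly $u-1$ times (since $\sum w_i - 1 = u-1$ steps bring all the $w_i$ down to the point where the type becomes trivial). Concretely, I would apply the Lemma with $l$ equal to the size of the current coordinate block, so that at a coordinate with alphabet size $Q$ we gain an additive term $Q$ and pass to a type with one fewer total weight on one fewer coordinate. Ordering the peeling so that the $r$ large blocks (alphabet $q'$) and the $u-1-r$ small blocks (alphabet $q''$) are each accounted for once, the accumulated additive terms sum to exactly $rq'+(u-1-r)q''=rq^{\lceil N/(u-1)\rceil}+(u-1-r)q^{\lfloor N/(u-1)\rfloor}$. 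The $\max\{u-1,\cdot\}$ in the recursion is handled by the hypothesis $C(\lfloor N/(u-1)\rfloor,q,\{w_1,\ldots,w_t\})\ge u$: this guarantees that along the chain of intermediate hash families the relevant quantity never drops below $u-1$, so the maximum is always realised by the recursive term rather than the constant, and one never loses anything by replacing a genuine bound with $u-1$. The terminal case of the recursion — when the type has been reduced so that separation is automatic — contributes nothing beyond what has already been collected, and one checks it is dominated by the running total, again using the $\ge u$ hypothesis.

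The delicate point, and the step I expect to require the most care, is bookkeeping the mixed alphabet through the iteration: the Lemma as stated is for a fixed alphabet size $q$, so I would either state and use a mixed-alphabet refinement of Lemma \ref{technical} (its proof should be identical, since the recursion localises to a single coordinate and only uses the alphabet size of that one coordinate) or, equivalently, argue directly on the length-$(u-1)$ object by induction on $u-1$, stripping one coordinate per step and invoking the one-coordinate form of the Johnson argument. One must also be careful that decrementing a $w_i$ keeps the type admissible (all parts positive, or else the corresponding $C$ term is interpreted correctly as the trivial bound), and that the constraint $1\le r\le u-1$ in the statement matches $N\equiv r\pmod{u-1}$ with the convention that $r=u-1$ when $u-1\mid N$, so that $\lceil N/(u-1)\rceil$ and $\lfloor N/(u-1)\rfloor$ are the two block sizes actually used. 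Once the mixed-alphabet recursion is in place, the rest is the routine summation $\sum$ of the per-coordinate contributions, which telescopes to the claimed bound $n\le rq^{\lceil N/(u-1)\rceil}+(u-1-r)q^{\lfloor N/(u-1)\rfloor}$.
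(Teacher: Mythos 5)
Your proposal is correct in substance and runs on the same engine as the paper's proof: iterate the Johnson-type recursion of Lemma \ref{technical} exactly $u-1$ times, once per unit of total weight above $1$, so that the additive terms telescope to $rq^{\lceil N/(u-1)\rceil}+(u-1-r)q^{\lfloor N/(u-1)\rfloor}$. The one structural difference is your opening move. You first group the $N$ coordinates into $u-1$ blocks, pass to a length-$(u-1)$ family over a mixed alphabet, and are then forced to formulate a mixed-alphabet variant of Lemma \ref{technical} --- the step you yourself flag as the delicate one. The paper does neither: because the lemma already holds for an \emph{arbitrary} block length $l$, it is applied directly to the original $N$-row matrix with $l=\lceil N/(u-1)\rceil$ for $r$ of the steps and $l=\lfloor N/(u-1)\rfloor$ for the other $u-1-r$, using the identity $N=r\lceil N/(u-1)\rceil+(u-1-r)\lfloor N/(u-1)\rfloor$. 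This is the same peeling, but it dispenses with the grouping-coordinates reduction entirely, which the authors explicitly advertise as a novelty of their argument; your detour buys nothing, and you half-notice this when you remark that one could ``equivalently argue directly.'' One further caution: your claim that the hypothesis $C(\lfloor N/(u-1)\rfloor,q,\{w_1,\ldots,w_t\})\ge u$ makes the maximum ``always realised by the recursive term'' is not literally true at the terminal step, where the recursive term is $C(0,q,\{1\})=0<u-1$. What the hypothesis actually delivers is $q^{\lfloor N/(u-1)\rfloor}\ge u$ (any two columns of such a family are distinct), so that if the recursion ever exits with the constant $u-1$, that constant is absorbed by one of the as-yet-unspent terms $q^{l_i}$; your sentence about the terminal case being dominated by the running total gestures at exactly this, and the paper's own one-line proof is no more explicit, so this is a shared looseness rather than a gap peculiar to your argument.
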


      A novelty of our proof is that we avoid the use of the grouping coordinates method, which has appeared in all previous proofs. The constraint $C(\lfloor N/(u-1)\rfloor,q,\{w_1,\ldots,w_t\})\ge u$ can be omitted when $N\ge u-1$ and $q\ge u$.

     For the coefficient $\gamma$ defined in Theorem \ref{shf0}, the authors of \cite{Trung2011} posed the following question:

    \begin{question} \label{trung} {\rm(\cite{Trung2011})}
        Is there any type $\{w_1,\ldots,w_t\}$ for which the constant $(u-1)$ in Theorem \ref{shf0} can be replaced by another constant strictly smaller than $(u-1)$?
    \end{question}

    We give a negative answer to their question by presenting the following construction:

    \begin{theorem}\label{const}
        There exists a $PHF(N;Nq^{N-1},q^{N-1}+(N-1)q^{N-2},N+1)$ for any integer $q\ge 2$ and $N\ge 2$. As a consequence, $\gamma=u-1$ is the minimal real number such that
        $C(u-1,q,\{w_1,\ldots,w_t\})\le\gamma q$ holds for arbitrary $q$.
    \end{theorem}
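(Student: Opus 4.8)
The plan is to produce the family explicitly by a block construction built on the Hamming graph $H(N-1,q)$, and then to read off the $(N+1)$-separating property from a small extremal fact about its coordinate-colouring. For the construction I would set $X=\bigcup_{k=1}^{N}X_{k}$ with $X_{k}=\{k\}\times[q]^{[N]\setminus\{k\}}$, so $|X|=Nq^{N-1}$; and $Y=Y_{0}\sqcup Y_{1}$, where $Y_{0}$ is a copy of $[q]^{N-1}$ and $Y_{1}$ a copy of $[N-1]\times[q]^{N-2}$, so $|Y|=q^{N-1}+(N-1)q^{N-2}$. Fixing bijections $\kappa_{i}\colon[N]\setminus\{i\}\to[N-1]$, I define $f_{i}\colon X\to Y$ by letting $f_{i}$ restrict to a bijection $X_{i}\to Y_{0}$, and by $f_{i}(k,\mathbf{a})=\bigl(\kappa_{i}(k),(a_{j})_{j\in[N]\setminus\{k,i\}}\bigr)\in Y_{1}$ for $k\ne i$ --- that is, on a point outside the $i$-th block one remembers the block index and deletes the $i$-th entry of $\mathbf{a}$. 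The image counts match $|Y|$ by inspection.

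The heart of the verification is to show that every $S\subseteq X$ with $|S|=N+1$ is separated by some $f_{i}$. One first observes that two points can collide under $f_{i}$ only inside a common block $X_{k}$ with $k\ne i$ (a point of $X_{i}$ and any other point land in different parts of $Y$, and $f_{i}|_{X_{i}}$ is injective), and that $(k,\mathbf{a})$ and $(k,\mathbf{a}')$ then collide exactly when $\mathbf{a},\mathbf{a}'$ agree in all coordinates except the $i$-th. Writing $S_{k}=S\cap X_{k}$, $m_{k}=|S_{k}|$, $A_{k}=\{\mathbf{a}:(k,\mathbf{a})\in S_{k}\}\subseteq[q]^{[N]\setminus\{k\}}$, and letting $\mathrm{Bad}(A)\subseteq[N]\setminus\{k\}$ be the set of directions $i$ in which some two vectors of $A$ differ only in coordinate $i$, this says $f_{i}$ separates $S$ iff $i\notin\bigcup_{k:\,m_{k}\ge1}\mathrm{Bad}(A_{k})$; so it suffices to prove $\bigl|\bigcup_{k:\,m_{k}\ge1}\mathrm{Bad}(A_{k})\bigr|\le N-1$.

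The key lemma is: any set $A$ of $m\ge1$ distinct vectors in a cube $[q]^{D}$ has $|\mathrm{Bad}(A)|\le m-1$. Indeed, choosing one edge of the Hamming graph on $A$ per direction of $\mathrm{Bad}(A)$ gives $|\mathrm{Bad}(A)|$ edges with pairwise distinct colours, and such an edge set is acyclic, because along a cycle the (unique) coordinate changed on the first edge is changed by no other edge and so can never be restored. With this, if at least two blocks meet $S$ then $\sum_{k:\,m_{k}\ge1}(m_{k}-1)=(N+1)-\#\{k:m_{k}\ge1\}\le N-1$; and if only one block meets $S$ then $\bigcup_{k}\mathrm{Bad}(A_{k})=\mathrm{Bad}(A_{k})\subseteq[N]\setminus\{k\}$ already has size $\le N-1$. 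Either way some $i\in[N]$ escapes the union, so $f_{i}$ separates $S$, proving the $PHF$. For the consequence on $\gamma$: with $Q=q^{N-1}+(N-1)q^{N-2}$ this gives $C(N,Q,\{1,\dots,1\})\ge Nq^{N-1}$ for the type with $N+1$ ones, and $Nq^{N-1}/Q=Nq/(q+N-1)\to N=u-1$ as $q\to\infty$, while Theorem~\ref{shf0} gives $C(N,Q,\{1,\dots,1\})\le(u-1)Q$; hence $u-1$ is exactly the least constant $\gamma$ with $C(u-1,q,\{w_{1},\dots,w_{t}\})\le\gamma q$ for all $q$, a negative answer to Question~\ref{trung}.

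The main obstacle is the tightness of the key lemma in the single-block case: the forest bound $|\mathrm{Bad}(A)|\le m-1$ is one short of what is needed when $S$ piles into a single block, and one must instead fall back on the a priori bound $|\mathrm{Bad}(A)|\le|D|=N-1$. So the real work is in choosing the alphabet split $Y=Y_{0}\sqcup Y_{1}$ and the maps $f_{i}$ so that collisions are confined to single blocks and this clean dichotomy is available; once the construction is set up correctly, everything else is a short counting argument rather than a computation.
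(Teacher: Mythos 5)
Your proof is correct and is essentially the paper's own argument: your maps $f_i$ (delete coordinate $i$ and tag the block) are just a relabelling of the paper's projections $\pi_i$ (zero out coordinate $i$) arranged in its circulant block matrix, and your key lemma that $|\mathrm{Bad}(A)|\le m-1$ via acyclicity of distinctly-coloured edge sets in the Hamming graph is exactly the paper's Lemma~\ref{construction}/Corollary~\ref{hamming}. The final dichotomy (single block handled by the injective map, several blocks by $\sum(m_k-1)\le N-1$) and the limit computation for $\gamma=u-1$ likewise match the paper's proof of Theorem~\ref{const}.
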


    To see that our construction is actually a negative answer to Question \ref{trung}, one just needs to notice that a $u$-perfect hash family is also $\{w_1,\ldots,w_t\}$-separating for arbitrary $\sum_{i=1}^t w_i=u$. If we set $N=u-1$ then our construction implies the existence of an $SHF(u-1;n,q,\{w_1,\ldots,w_t\})$ such that $\lim_{q\rightarrow\infty}\fr{n}{q}=u-1$ holds for arbitrary $\sum_{i=1}^t w_i=u$. Therefore, the constant $\gamma$ can never be less than $u-1$.

    \subsection{Codes with the identifiable parent property}

    We have mentioned 2-IPP codes before and the notion was generalized to codes with the $t$-identifiable parent property ($t$-IPP codes) in \cite{ST01}. We postpone the definition to Section 2 for the sake of saving space.

    Let $i_t(N,q)$ denote the maximal cardinality of a $t$-IPP code of length $N$ over an alphabet of size $q$. Let $v=\lfloor(t/2+1)^2\rfloor$. One can verify that $i_t(N,q)\le i_t(v-1,q^{\lc N/(v-1)\rc})$ (just as the case for separating hash families). Thus the problem of bounding $i_t(N,q)$ can be reduced to bounding $i_t(v-1,q)$. Alon and Stav \cite{ippc} proved that $i_t(v-1,q)\le (v-1)q$, and they conjectured:

    \begin{conjecture}\label{alonstav}{\rm(\cite{ippc})}
        There are constructions showing that $(v-1)$ is the best constant in the inequality $i_t(v-1,q)\le(v-1)q.$
    \end{conjecture}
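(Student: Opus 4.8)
The plan is to derive Conjecture~\ref{alonstav} from Theorem~\ref{const} by means of a bridging lemma that turns a perfect hash family of the appropriate strength into a $t$-IPP code. Concretely, the lemma I would establish is: \emph{if $\ma{C}$ is a $PHF(N;n,q,v)$ with $v=\lfloor(t/2+1)^2\rfloor$ and $n\ge v$, then $\ma{C}$ is a $t$-IPP code.} Granting this, one instantiates Theorem~\ref{const} with $N=v-1$ and reads off the conjectured lower bound.

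To prove the lemma, recall that a code fails to be a $t$-IPP code precisely when some word $x$ lies in the descendant set of a subset of at most $t$ codewords while the intersection of all such ``parent sets'' of $x$ is empty; since a parent set is non-empty, $x$ then has at least two parent sets, so we may pick among them a subfamily $P_1,\dots,P_a$ that is minimal with respect to inclusion subject to $\bigcap_{j=1}^{a}P_j=\varnothing$, whence $a\ge 2$. By minimality, for each index $j_0$ there is a codeword $e_{j_0}\in\bigcap_{j\ne j_0}P_j$ with $e_{j_0}\notin P_{j_0}$; moreover $e_{j_1}\ne e_{j_2}$ whenever $j_1\ne j_2$, since a common value would lie in every $P_j$, contradicting $\bigcap_j P_j=\varnothing$. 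Thus each $P_j$ contains the $a-1$ distinct codewords $\{e_i:i\ne j\}$, forcing $a\le t+1$, together with at most $t-(a-1)$ further codewords; hence $S:=\bigcup_{j}P_j$ satisfies
\[
|S|\ \le\ a+a(t-a+1)\ =\ a(t-a+2)\ \le\ \Bigl(\tfrac{t}{2}+1\Bigr)^{2},
\]
so $|S|\le v$. On the other hand, no coordinate of $\ma{C}$ separates $S$: at a coordinate $\ell$ the value $x(\ell)$ lies in $\{c(\ell):c\in P_j\}$ for every $j$, so if the codewords of $S$ were pairwise distinct at $\ell$, then the unique codeword of $S$ attaining $x(\ell)$ there would belong to every $P_j$, again contradicting $\bigcap_j P_j=\varnothing$. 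Enlarging $S$ to a set of exactly $v$ codewords if necessary (possible since $n\ge v$) and applying the $PHF$ property yields a coordinate separating that $v$-set, hence separating $S$ --- a contradiction. This proves the lemma.

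Finally, substituting $N=v-1$ into Theorem~\ref{const} gives, for every integer $q\ge 2$, a $PHF\bigl(v-1;\,(v-1)q^{v-2},\,q^{v-2}+(v-2)q^{v-3},\,v\bigr)$; since $(v-1)q^{v-2}\ge v$, the lemma applies and shows it is a $t$-IPP code of length $v-1$ over an alphabet of size $Q:=q^{v-2}+(v-2)q^{v-3}$, so
\[
i_t(v-1,Q)\ \ge\ (v-1)q^{v-2}\ =\ (v-1)Q\cdot\frac{q}{q+v-2}\ =\ (1-o(1))(v-1)Q
\]
as $q\to\infty$, while Alon and Stav proved $i_t(v-1,Q)\le(v-1)Q$. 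Thus $i_t(v-1,Q)/Q\to v-1$ along this infinite sequence of alphabet sizes, so the constant $v-1$ in $i_t(v-1,q)\le(v-1)q$ cannot be lowered, which is exactly Conjecture~\ref{alonstav}. I expect the only real obstacle to be the bridging lemma, and within it the counting argument showing that a minimal family of parent sets spans at most $\lfloor(t/2+1)^2\rfloor$ codewords --- this is precisely where the parameter $v$ is born, and it mirrors the computation behind Alon and Stav's upper bound; the delicate points are that a minimal bad family automatically has at least two members and that the witnesses $e_j$ are genuinely distinct. Everything after the lemma is a direct substitution into Theorem~\ref{const} together with the trivial limit $q/(q+v-2)\to 1$.
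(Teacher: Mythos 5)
Your proposal is correct and follows exactly the route the paper takes: instantiate Theorem~\ref{const} with $N=v-1$ and invoke the fact that a $v$-perfect hash family with $v=\lfloor(t/2+1)^2\rfloor$ is a $t$-IPP code. The only difference is that you prove this bridging fact (including the counting bound $a(t-a+2)\le(t/2+1)^2$ for a minimal bad family of parent sets), whereas the paper simply cites it as an observation from the IPP-code literature.
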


    Our Theorem \ref{const} not only answers Question \ref{trung} but also verifies this conjecture, since it was observed in \cite{ippc,ST01} that a $v$-perfect hash family also satisfies the $t$-identifiable parent property.

    \subsection{Perfect hash families}

    As claimed in \cite{blackburn08}, the exponent $\lc N/(u-1)\rc$ in the bound of Theorem \ref{recusivebd} is realistic. We can understand this in two aspects. On the one hand, a probabilistic construction of Blackburn \cite{perfect2} showed that for any fixed $u$ and any positive real number $\delta$ such that $\delta<N/(u-1)$, there exists a $PHF(N;\lfloor q^{\delta}\rfloor,q,u)$ whenever $q$ is sufficiently large. On the other hand, let $p_t(N,q)$ denote the maximal cardinality of a $PHF(N;n,q,t)$, it was respectively observed in \cite{ippc,perfecthasing,nilli} that $p_u(N,q)\ge (c_uq)^{N/(u-1)}$ holds for some constant $c_u$. So we can conclude that the exponent $\lc N/(u-1)\rc$ is tight when $(u-1)|N$.

    But the problem becomes much more difficult when $(u-1)\nmid N$. It is not known that whether the exponent is tight. Even for the smallest case, $u=3$ and $N=3$, Walker and Colbourn \cite{N=3t=3} posed the following conjecture:

    \begin{conjecture} \label{walker}{\rm(\cite{N=3t=3})}
        $p_3(3,q)=o(q^2).$
    \end{conjecture}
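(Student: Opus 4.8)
The plan is to deduce the upper bound $p_3(3,q)=o(q^2)$ from the Ruzsa--Szemer\'edi $(6,3)$-theorem, which asserts that a $3$-uniform hypergraph on $m$ vertices in which no $6$ vertices span $3$ or more edges has only $o(m^2)$ edges. Starting from a $PHF(3;n,q,3)$ with codeword set $\ma{C}\s[q]^3$, I would first pass to a sub-$PHF$ in which no two codewords agree in two or more coordinates. To do this, fix a pair of coordinates, say $\{1,2\}$, and for each $(v_1,v_2)\in[q]^2$ consider $S_{v_1,v_2}=\{c\in\ma{C}:c_1=v_1,\ c_2=v_2\}$; the members of a fixed $S_{v_1,v_2}$ pairwise agree in coordinates $1$ and $2$, hence differ pairwise in coordinate $3$. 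If $|S_{v_1,v_2}|\ge 2$, then no codeword outside $S_{v_1,v_2}$ may have third coordinate equal to $a_3$ for any $a\in S_{v_1,v_2}$: otherwise that codeword together with two members of $S_{v_1,v_2}$ would form a triple separated by no coordinate. Hence distinct ``heavy'' lines $S_{v_1,v_2}$ occupy pairwise disjoint sets of coordinate-$3$ values, so the number of codewords lying in heavy lines is at most $q$; deleting these, and performing the analogous deletions for the coordinate pairs $\{1,3\}$ and $\{2,3\}$, discards at most $3q$ codewords and leaves a sub-$PHF(3;n',q,3)$, with codeword set $\ma{C}'$ and $n'\ge n-3q$, in which no two codewords agree in two coordinates. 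Deleting codewords cannot create a new $2$-agreement, so the three rounds are compatible.

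Next I would encode $\ma{C}'$ as a hypergraph $H$ with vertex set $V_1\sqcup V_2\sqcup V_3$, each $V_i$ a copy of $[q]$, and with one edge $\{c_1,c_2,c_3\}$ (one vertex per part, $c_i$ read as a vertex of $V_i$) per codeword $c\in\ma{C}'$. Then $H$ is linear---any two edges meet in at most one vertex---and $|E(H)|=n'$. A short inclusion--exclusion computation shows that in a linear $3$-uniform hypergraph any three edges span at least $6$ vertices, with equality exactly when their three pairwise intersections are three distinct vertices, one in each part; translated back, this says there are codewords $a,b,c$ with $a_1=b_1$, $b_2=c_2$, and $c_3=a_3$. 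The heart of the argument is the observation that this ``rainbow'' configuration is exactly an unseparated triple: if a triple in $\ma{C}'$ is separated by no coordinate, then each coordinate carries exactly one coincident pair (all three coinciding in some coordinate is impossible, as it would force a $2$-agreement in another coordinate), and since no pair coincides in two coordinates these three pairs are distinct, so after relabelling they are $\{a,b\},\{b,c\},\{a,c\}$ in coordinates $1,2,3$. Thus the $PHF$ property of $\ma{C}'$ is equivalent to: $H$ contains no three edges on at most $6$ vertices, i.e.\ no $6$ vertices of $H$ span $3$ or more edges.

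Applying the $(6,3)$-theorem to $H$, which has $m=3q$ vertices, now gives $n'=|E(H)|=o\bigl((3q)^2\bigr)=o(q^2)$, and hence $p_3(3,q)\le n'+3q=o(q^2)$, which is the conjectured bound. For completeness one can also record the matching lower bound $p_3(3,q)\ge q^{2-o(1)}$: a Behrend-type corner-free set $A\s[q]^2$ of size $q^{2-o(1)}$ yields a valid code via $c=(x,y,x+y)$ for $(x,y)\in A$, since a rainbow configuration among such codewords is precisely a corner in $A$ (and no two such codewords agree in two coordinates).

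The step I expect to be the main obstacle is setting up the dictionary in the second paragraph correctly: identifying the unseparated triples of a cap-like code with exactly the rainbow patterns $a_1=b_1,\ b_2=c_2,\ c_3=a_3$, and these in turn with the ``$6$ vertices, $3$ edges'' configurations of $H$. Once this equivalence is in place the $(6,3)$-theorem does all the real work; the reduction to the cap-like case in the first paragraph is elementary but must be carried out carefully so that the three rounds of deletion do not interfere.
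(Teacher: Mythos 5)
Your argument is correct and is essentially the paper's own proof (Lemmas \ref{uniquelemma}--\ref{f3*=p3*} and Theorem \ref{p_3(3,q)=f_3(3q,6,3)+O(q)}): both reduce, at a cost of $O(q)$ deleted codewords, to a code in which no two codewords agree in two coordinates, encode it as a linear $3$-uniform $3$-partite hypergraph on $3q$ vertices, identify the unseparated triples with exactly the ``triangle''/rainbow configurations of three edges on six vertices, and then invoke the Ruzsa--Szemer\'edi $(6,3)$-theorem. The only differences are cosmetic: you prune via ``heavy lines'' while the paper prunes words with unique coordinates (Lemma \ref{uniquelemma}) and then derives linearity (Lemma \ref{linear}), and your corner-free lower bound is a minor variant of the paper's $2$-sum-free construction in Theorem \ref{optimal3}.
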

    \noindent Note that Theorem \ref{recusivebd} shows $p_3(3,q)=O(q^2)$. A recent paper \cite{fuji2015perfect} showed that $p_3(3,q)=\Omega(q^{5/3})$. Results from finite geometry were used to construct such families. There is still a huge gap between the upper and lower bounds.
    For general types of separating hash families, Blackburn et al. \cite{blackburn08} asked a similar question:

    \begin{question}\label{nmid}{\rm(\cite{blackburn08})}
        Let $N$ and $w_i$ be fixed integers. If $(u-1)\nmid N$, then for sufficiently large $q$ and arbitrary small $\epsilon>0$, does there exist an $SHF(N;n,q,\{w_1,\ldots,w_t\})$ such that $n\ge q^{\lceil N/(u-1)\rceil-\epsilon}?$
    \end{question}

     We prove Conjecture \ref{walker} in Section 5 (see Theorems \ref{p_3(3,q)=f_3(3q,6,3)+O(q)} and \ref{optimal3} below). We find that perfect hash families are closely related to a hypergraph Tur{\'a}n problem. With some transformations, Walker-Colbourn's conjecture can be proved by a direct application of the famous (6,3)-theorem of Ruzsa and Szemer{\'e}di \cite{RS}. In fact, we show $$q^{2-\epsilon}<p_3(3,q)=o(q^2)$$ holds for sufficiently large $q$ and arbitrary $\epsilon>0$. We also prove $$q^{2-\epsilon}<p_4(4,q)=o(q^2)$$ (see Theorem \ref{optimal4} below). Two new notions in graph theory and additive number theory, namely rainbow cycles and $R$-sum-free sets, are introduced to prove this result. One can see that these two bounds suggest that there may be a positive answer to Question \ref{nmid}.

     \subsection{Organization}

    The rest of this paper is organised as follows. Section 2 is for some preparations. Theorem \ref{recusivebd} is proved in Section 3 and Theorem \ref{const} is proved in Section 4. The subsequent sections will focus on perfect hash families. We prove $q^{2-o(1)}<p_3(3,q)=o(q^2)$ in Section 5. As an application of the Johnson-type bound, this result will be extended to $p_t(t,q)$ and related separating hash families. We prove $q^{2-o(1)}<p_4(4,q)=o(q^2)$ in Section 6. In Section 7 we will build the connection between perfect hash families and a class of hypergraph Tur{\'a}n problems. Section 8 consists of some concluding remarks and open problems.

    \section{Preliminaries}

    In this section, we will introduce some notations and terminology. We will also introduce some simple lemmas that will be used in the subsequent sections.

    \subsection{Separating hash families and IPP codes}

    The matrix representation of a separating hash family is very useful when discussing its properties. An $(N;n,q)$-hash family can be described as an $N\times n$ matrix on $q$ symbols, which will be usually denoted as $M$. The rows of $M$ correspond to the functions in the hash family and the columns of $M$ correspond to the elements of $X$.
    The entry of $M$ in row $f\in\ma{F}$ and column $x\in X$ is just $f(x)\in Y.$ We denote the entry of $M$ as $M(f,x)$ for $f\in\ma{F}$, $x\in X$ or $M(i,j)$ for $1\le i\le N$, $1\le j\le n$.

    The matrix representation of an $SHF(N;n,q,\{w_1,\ldots,w_t\})$ satisfies the following property: given disjoint sets of columns $C_1,\ldots,C_t$, where $|C_i|=w_i$ for $1\le i\le t$, there exists a row $r$ of $M$ such that
    $$\{M(r,x):~x\in C_i\}\cap\{M(r,x):~x\in C_j\}=\emptyset$$
    \noindent for all $i\neq j$. We say row $r$ separates a subset of columns $C\s X$ if $\{M(r,x):x\in C\}$ has exactly $|C|$ distinct values in $Y$. The column $x$ of $M$ will be written as a $q$-ary vector of length $N$, $x=(x(1),x(2),\ldots,x(N))$, where $x(i)\in [q]$ for $i\in[N]$. For a subset $L$ of the rows of $M$, the coordinates of $x$ restricted to $L$ give a word of length $|L|$, which is denoted as $x|_{L}=(x(i_1),x(i_2),\ldots,x(i_{|L|}))$, where $i_j,~1\le j\le |L|$ are the row indices. We say a column $x\in X$ of $M$ has a unique coordinate $i$ if for any other column $y\in X$, $y\neq x$, it holds that $y(i)\neq x(i)$. If there is no confusion, we will not distinguish between a hash family and its representation matrix.

    Next we will introduce the definition of IPP codes.

    Let $\ma{C}\s Y^N$ be a code of length $N$ and let $D\s\ma{C}$ be a set of codewords. The set of descendants of $D$, denoted as $desc(D)$, is defined by

    $$desc(D)=\{d\in Y^N:for~all~i\in\{1,2,\ldots,N\},~d(i)=x(i)~for~some~x\in D\}.$$

    \noindent A set $D\s\ma{C}$ is said to be a parent set of a word $d\in Y^N$ if $d\in desc(D)$. For $d\in Y^N$, let $\ma{P}_t(d)$ denote the collection of parent sets of $d$ such that $|D|\le t$ and $D\s\ma{C}$. Then we call $\ma{C}\s Y^N$ a $t$-IPP code if for all $d\in Y^N$, either $\ma{P}_t(d)=\emptyset$ or

    $$\cap_{D\in\ma{P}_t(d)} D\neq\emptyset.$$

    \subsection{Graph theory}

    We will use the notion of Hamming graphs when constructing perfect hash families in Section 4. Let $k$ and $q$ be positive integers, the Hamming graph (see \cite{algcomb} for details) $H(k,q)$ has the set of all $k$-tuples from an alphabet of $q$ symbols as its vertex set, and two $k$-tuples are adjacent if and only if they differ in exactly one coordinate position. This graph is also known as the $q$-ary hypercube of dimension $k$. Here we will fix this $q$-symbol alphabet set to be $[q]$.

    When speaking about a hypergraph we mean a pair $\ma{G}=(V(\ma{G}),E(\ma{G}))$, where the vertex set $V(\ma{G})$ is identified as the set of first integers $[n]$ and the edge set $E(\ma{G})$ is identified as a collection of subsets of $[n]$. $\ma{G}$ is said to be linear if for all distinct $A,B\in E(\ma{G})$ it holds that $|A \cap B|\le1$. We say $\ma{G}$ is $r$-uniform if $|A|=r$ for all $A\in E(\ma{G})$.

    An $r$-uniform hypergraph $\ma{G}$ is $r$-partite if its vertex set $V(\ma{G})$ can be colored in $r$ colors in such a way that no edge of $\ma{G}$ contains two vertices of the same color. In such a coloring, the color classes of $V(\ma{G})$, the sets of all vertices of the same color, are called parts of $\ma{G}$. In this paper we mainly concern $r$-uniform $r$-partite hypergraphs with equal part size $q$. We will see later that the edge set of such hypergraph is equivalent to an $r\times|E(\ma{G})|$ matrix over a $q$-symbol alphabet.

    Given a set $\ma{H}$ of $r$-uniform hypergraphs, an $\ma{H}$-free $r$-uniform hypergraph is a graph containing none of the members of $\ma{H}$. The Tur{\'a}n number $ex_r(n,\ma{H})$ denotes the maximum number of edges in an $\ma{H}$-free $r$-uniform hypergraph on $n$ vertices. In this paper, we will talk about several hypergraph Tur{\'a}n problems.

    Brown, Erd{\H{o}}s and S{\'o}s \cite{bes1,bes2} introduced the function $f_r(n,v,e)$ to denote the maximum number of edges in an $r$-uniform hypergraph on $n$ vertices which does not contain $e$ edges spanned by $v$ vertices. In other words, in such hypergraphs the size of the union of arbitrary $e$ edges is at least $v+1$. These hypergraphs are called $G(v,e)$-free (more precisely, $G_r(v,e)$-free). The famous (6,3)-theorem of Ruzsa and Szemer{\'e}di \cite{RS} pointed out that

    \begin{equation}\label{1}
        \begin{aligned}
            n^{2-o(1)}<f_3(n,6,3)=o(n^2).
        \end{aligned}
    \end{equation}

    \noindent This was extended by Alon and Shapira \cite{t=3} to

    \begin{equation}\label{2}
        \begin{aligned}
            n^{k-o(1)}<f_r(n,3(r-k)+k+1,3)=o(n^k).
        \end{aligned}
    \end{equation}


    \noindent These bounds will be used when considering problems about perfect hash families in the sequel. For more results on Tur{\'a}n problems of this type, see \cite{Furediconst} and the references therein.

    The definitions of $f_r(n,v,e)$ can be restricted to the case for $r$-uniform $r$-partite hypergraphs with equal part size $q$. We use $f_r^*(q,v,e)$ to denote the corresponding formula. Note that $f_r^*(q,v,e)\le f_r(rq,v,e)$.

    In the literature, there are several definitions of hypergraph cycles. The one we use in this paper was introduced by Berge \cite{hypercycle2,hypercycle1}. For $k\ge2$, a cycle in a hypergraph $\ma{G}$ is an alternating sequence of vertices and edges of the form $v_1,E_1,v_2,E_2,\ldots,v_k,E_k,v_1$ such that

    \begin{itemize}
      \item [(a)]$v_1,v_2,\ldots,v_k$ are distinct vertices of $\ma{G}$,
      \item [(b)]$E_1,E_2,\ldots,E_k$ are distinct edges of $\ma{G}$,
      \item [(c)]$v_i,v_{i+1}\in E_i$ for $1\le i\le k-1$ and $v_k,v_1\in E_k$.
    \end{itemize}


    Next we will introduce the definition of rainbow cycles. Note that in the literature ``rainbow cycles" always stand for edge-colorings, but in this paper we consider rainbow cycles due to vertex colorings. Given a hypergraph $\ma{G}$ and a vertex-coloring of $\ma{G}$, a subgraph $\ma{H}\s\ma{G}$ is called a rainbow subgraph of $\ma{G}$ if all joint vertices in $\ma{H}$ have different colors. In other words, for arbitrary distinct vertices $x,y\in\{A\cap B:A,B\in E(\ma{H})\}$, $x$ and $y$ are colored by different colors. This definition is most meaningful when discussing linear hypergraphs. Let $\ma{G}$ be an $r$-uniform $r$-partite linear hypergraph, a $k$-cycle $v_1,E_1,v_2,E_2,\ldots,v_k,E_k,v_1$ is said to be a rainbow cycle of $\ma{G}$ if $v_1,\ldots,v_k$ locate in different parts of $V(\ma{G})$. For $r$-partite graphs, a rainbow $k$-cycle exists only if $k\le r$.

    Let $\ma{G}$ be an $r$-uniform $r$-partite linear hypergraph with equal part size $q$. Assume that $\ma{G}$ can not have rainbow cycles, then we use $g_r^*(q)$ to denote the maximal number of edges that can be contained in $\ma{G}$. Lemma \ref{rainbow1} shows that hypergraphs with large $g_r^*(q)$ can be used to construct good perfect hash families.

    \subsection{Additive number theory}
        It has been shown in \cite{ippe,Furediconst} that tools from additive number theory can be used to construct codes with some specified properties. We will introduce some notions from additive number theory.

        Assume $m_1,m_2,m_3\in M\s[q]$ and $c_1,c_2$ are positive integers such that $c_1+c_2\le r$, we call the set $M$ $r$-sum-free if the equation $$c_1m_1+c_2m_2=(c_1+c_2)m_3$$ has no solution except the one with $m_1=m_2=m_3$. A result proved by Erd{\H{o}}s, Frankl and R$\ddot{o}$dl \cite{erdos2} and Ruzsa \cite{ruzsa} will be needed.

        \begin{lemma}{\rm(\cite{erdos2,ruzsa})}\label{additive}
        For arbitrary positive integer $r$ there exists a $\gamma_r>0$ such that for any integer $q$, one can find an $r$-sum-free subset $M\s[q]$ with $|M|>qe^{-\gamma_r\sqrt{\log q}}$.
        \end{lemma}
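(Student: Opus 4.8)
The plan is to prove the lemma by a Behrend-type construction, engineered so that a single set simultaneously avoids all the weighted-average relations $c_1m_1+c_2m_2=(c_1+c_2)m_3$ with $c_1+c_2\le r$. The first observation is that being $r$-sum-free is invariant under affine maps $x\mapsto ax+b$: the form $c_1x+c_2y-(c_1+c_2)z$ kills constants and merely scales under dilation, so a nontrivial solution would survive such a map. Hence it suffices to build a large $r$-sum-free subset of $\{0,1,\dots,d^k-1\}$ for suitable integers $k$ and $d$ with $d^k\le q$, and then translate it into $[q]$. For convenience I would take $d$ to be a multiple of $r$.

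For the construction, fix $k$, let $d$ be the largest multiple of $r$ with $d^k\le q$, and let $S$ be the set of integers in $[0,d^k)$ whose base-$d$ digits all lie in $\{0,1,\dots,d/r-1\}$; write $\mathbf{a}(m)=(a_1(m),\dots,a_k(m))$ for the digit vector of $m\in S$, so that $|S|=(d/r)^k$. Partition $S$ according to the value of $\|\mathbf{a}(m)\|_2^2=\sum_{i=1}^k a_i(m)^2\in\{0,1,\dots,k(d/r-1)^2\}$; by the pigeonhole principle some class $M$ satisfies $|M|\ge (d/r)^k/\bigl(k(d/r)^2+1\bigr)$.

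Next I would verify that $M$ is $r$-sum-free. Suppose $m_1,m_2,m_3\in M$ and $c_1,c_2\ge 1$ with $c_1+c_2\le r$ satisfy $c_1m_1+c_2m_2=(c_1+c_2)m_3$. Because every digit in play is at most $d/r-1$, for each coordinate $i$ we have $c_1a_i(m_1)+c_2a_i(m_2)\le (c_1+c_2)(d/r-1)\le d-r<d$ and likewise $(c_1+c_2)a_i(m_3)<d$; hence there is no carrying in base $d$ on either side and the equation holds coordinatewise, $c_1\mathbf{a}(m_1)+c_2\mathbf{a}(m_2)=(c_1+c_2)\mathbf{a}(m_3)$. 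Setting $\lambda=c_1/(c_1+c_2)\in(0,1)$, this says $\mathbf{a}(m_3)=\lambda\mathbf{a}(m_1)+(1-\lambda)\mathbf{a}(m_2)$ is a proper convex combination; but $\mathbf{a}(m_1),\mathbf{a}(m_2),\mathbf{a}(m_3)$ all have the same Euclidean norm $\rho$ (they lie in the same pigeonhole class), and expanding $\|\lambda\mathbf{a}(m_1)+(1-\lambda)\mathbf{a}(m_2)\|_2^2=\rho^2$ forces $\langle\mathbf{a}(m_1),\mathbf{a}(m_2)\rangle=\rho^2$, i.e.\ equality in Cauchy--Schwarz, so $\mathbf{a}(m_1)=\mathbf{a}(m_2)=\mathbf{a}(m_3)$ and therefore $m_1=m_2=m_3$. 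Translating $M$ so that it lies inside $\{1,\dots,d^k\}\subseteq[q]$ preserves $r$-sum-freeness.

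Finally I would optimize $k$. Taking $k=\lceil\sqrt{\log q}\,\rceil$ gives $d> q^{1/k}-r$, and $q^{-1/k}=e^{-\log q/k}=e^{-(1+o(1))\sqrt{\log q}}\to 0$, so $d^k=q^{1-o(1)}$ and $(d/r)^k=d^kr^{-k}\ge q^{1-o(1)}e^{-k\ln r}$, while the pigeonhole loss satisfies $k(d/r)^2+1\le kd^2\le e^{O(\log q/k)}=e^{O(\sqrt{\log q})}$; since $k\ln r=O(\sqrt{\log q})$ as well, combining these yields $|M|> q\,e^{-\gamma_r\sqrt{\log q}}$ for a constant $\gamma_r$ depending only on $r$ (the bound being of interest only for large $q$, where in particular $d\ge 2r$). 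The only two places needing care are calibrating the digit cap $d/r-1$ so that the ``no carrying'' step is valid uniformly over all admissible pairs $(c_1,c_2)$ — which is exactly where the hypothesis $c_1+c_2\le r$ is used — and the bookkeeping in the parameter optimization; neither is deep once the construction is set up.
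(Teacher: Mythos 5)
Your argument is correct. Note, however, that the paper does not prove this lemma at all: it is imported verbatim from Erd\H{o}s--Frankl--R\"odl and Ruzsa, with the remark that the unweighted case is Behrend's theorem. So there is no in-paper proof to compare against; what you have supplied is a self-contained proof of the cited result, and it is essentially the standard one. The two places where the generalization from Behrend requires care are exactly the two you flag, and you handle both correctly: capping the digits at $d/r-1$ makes the no-carrying step valid uniformly over all pairs $(c_1,c_2)$ with $c_1+c_2\le r$, so the integer identity $c_1m_1+c_2m_2=(c_1+c_2)m_3$ descends to the digit vectors; and the observation that a single sphere $\|\mathbf{a}\|_2^2=\rho^2$ kills \emph{every} proper convex combination $\lambda\mathbf{a}(m_1)+(1-\lambda)\mathbf{a}(m_2)$, regardless of $\lambda\in(0,1)$, is what lets one pigeonhole class defeat all the equations simultaneously rather than intersecting $O(r^2)$ separate sets. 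The affine-invariance remark justifying the translation into $[q]$ is right because the form is homogeneous of weight zero, and the parameter choice $k=\lceil\sqrt{\log q}\,\rceil$ gives the stated $qe^{-\gamma_r\sqrt{\log q}}$ with $\gamma_r$ absorbing the $r^{-k}$ loss. The only cosmetic caveat is that the bound as stated is vacuous or false for very small $q$ (e.g.\ $q=1$ forces $|M|>1$ inside $[1]$), but that is an artifact of how the lemma is quoted in the paper, not a defect of your construction; enlarging $\gamma_r$ handles all $q\ge 2$.
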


        \noindent Note that the case $r_1=r_2=1$ was originally proved by Behrend \cite{Behrend}.

        A linear equation with integer coefficients $$\sum_{i=1}^k a_ix_i=0$$ in the unknowns $x_i$ is homogeneous if $\sum_{i=1}^k a_i=0$. We say that $M\s[q]$ has no nontrivial solution to above equation, if whenever $m_i\in M$ and $\sum_{i=1}^k a_im_i=0$, it follows that all $m_i$'s are equal. Note that if $M$ has no nontrivial solution to above function, then the same holds for any shift $(M+x)\cap[q]$ with $x\in\mathbb{Z}$, where $M+x:=\{m+x:m\in M\}$. This property suggests that one can use probabilistic method to construct sets with no nontrivial solution to a system of homogeneous linear equations. Note that this definition of the nontrivial solution is a simplification of the original one of Ruzsa \cite{ruzsa}.

        Now we will generalize the definition of the $r$-sum-free set. Given a set $R=\{b_1,\ldots,b_r\}$ of $r$ distinct nonnegative integers. A set $M$ is said to be $R$-sum-free if for any $3\le k\le r$ and any $k$-element subset $S=\{b_{j_1},b_{j_2},\ldots,b_{j_k}\}\s R$, the equation
        $$(b_{j_2}-b_{j_1})m_1+(b_{j_3}-b_{j_2})m_2+\cdots+(b_{j_k}-b_{j_{k-1}})m_{k-1}+(b_{j_1}-b_{j_k})m_k=0$$ has no solution in $M$ except the trivial one $m_1=m_2=\cdots=m_k$. The rank of $R$ is defined to be the maximal difference between the elements of $R$:
        $$r(R)=\max_{1\le i<j\le r} |b_i-b_j|.$$

        We are interested in $R$-sum-free sets $M\s[q]$ with relatively small rank, namely, $r(R)=o(q^{\epsilon})$ for arbitrary $\epsilon>0$. Lemma \ref{rainbow2} shows that  $R$-sum-free sets can be used to construct hypergraphs with large $g_r^*(q)$.

        \subsection{Some lemmas}


    The following lemma is a variant of a result of Erd{\"o}s and Kleitman \cite{erdos-kleitman}.

    \begin{lemma}\label{erdos-kleitman}
        Every $r$-uniform hypergraph $\ma{G}$ contains an $r$-uniform $r$-partite hypergraph $\ma{H}$ with equal part size $q$ or $q+1$ such that
        $$\fr{|E(\ma{H})|}{|E(\ma{G})|}\ge\fr{r!}{r^r}.$$
    \end{lemma}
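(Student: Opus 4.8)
The plan is to argue probabilistically. Set $n=|V(\ma{G})|$ and $q=\lfloor n/r\rfloor$, and write $n=rq+s$ with $0\le s<r$. Choose a uniformly random \emph{ordered} partition of $V(\ma{G})$ into $r$ blocks $V_1,\dots,V_r$ whose sizes are prescribed to be as balanced as possible, i.e.\ exactly $s$ of the blocks have size $q+1$ and the remaining $r-s$ have size $q$. Keep only those edges of $\ma{G}$ that meet every block in exactly one vertex. I will show that the expected number of surviving edges is at least $(r!/r^r)\,|E(\ma{G})|$; then any partition attaining at least the expectation gives the desired $\ma{H}$ (with vertex set $V(\ma{G})$, parts $V_1,\dots,V_r$, and edge set exactly the edges of $\ma{G}$ meeting each $V_j$ once). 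By linearity of expectation it suffices to show that a fixed edge $e=\{v_1,\dots,v_r\}$ survives with probability at least $r!/r^r$.

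To compute this probability I would realize the random balanced partition as dropping the $n$ vertices uniformly at random into $n$ labelled slots, where $n_j$ of the slots carry label $j$ and $n_j\in\{q,q+1\}$, and then expose the labels of $v_1,v_2,\dots$ one at a time. Conditioned on $v_1,\dots,v_i$ having received $i$ distinct labels, the label of $v_{i+1}$ is the label of a uniformly random one of the $n-i$ still-empty slots; the number of empty slots lying in the $r-i$ blocks not yet touched equals the sum of the sizes of those blocks, which is minimized when the $i$ touched blocks are the largest ones, giving the bound $\sum_{\text{untouched}}n_j\ge (r-i)q+\max(s-i,0)$. Hence
$$\Pr[v_{i+1}\text{ lands in an untouched block}\mid v_1,\dots,v_i\text{ distinct}]\ \ge\ \frac{(r-i)q+\max(s-i,0)}{rq+s-i}.$$

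The key elementary step is to check that this last ratio is always at least $\frac{r-i}{r}$. When $i\ge s$ it reduces, after cross-multiplying and cancelling the positive factor $r-i$, to $rq\ge rq+s-i$; when $i<s$ it reduces to $r(s-i)\ge (r-i)(s-i)$, which holds since $s-i>0$. Multiplying these bounds over $i=0,1,\dots,r-1$ yields $\Pr[e\text{ survives}]\ge\prod_{i=0}^{r-1}\frac{r-i}{r}=\frac{r!}{r^r}$, which completes the argument; the degenerate case $n<r$ is trivial (then $E(\ma{G})=\emptyset$), and for $n\ge r$ one has $q\ge1$, so every block is nonempty and the slot description is valid. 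I expect the only genuine obstacle to be the balanced-size constraint: dropping it, an independent uniform $r$-colouring of the vertices gives an $r$-partite subhypergraph with an $r!/r^r$ fraction of the edges in one line, and it is precisely the requirement that the parts have size $q$ or $q+1$ that forces the sequential-exposure bookkeeping and the two-case inequality above.
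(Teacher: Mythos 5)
Your proof is correct and is essentially the paper's argument in probabilistic language: the paper double-counts pairs (edge, balanced partition) and averages, which is the same as your expectation computation over a uniformly random balanced partition, and in the case $r\mid n$ your telescoping product evaluates to exactly the paper's ratio $q^r/\binom{rq}{r}\ge r!/r^r$. The only substantive difference is that you treat the case $r\nmid n$ explicitly with mixed part sizes $q$ and $q+1$ and verify the two-case inequality, whereas the paper dismisses that case in one clause; your bookkeeping there is sound.
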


    \begin{proof}
        Let $|V(\ma{G})|=n$, take $q$ to be the integer such that $rq\le n<r(q+1)$. We only prove the lemma for $n=rq$, otherwise we can set the part size of the desired subgraph to be $q+1$. It suffices to find a partition $\pi$ of $V(\ma{G})$ with $\pi=\{B_1,\ldots,B_r\}$ and $|B_i|=q$ for $1\le i\le r$, such that $\ma{F}_{\pi}=\{A\in E(\ma{G}):|A\cap B_i|=1~for~all~1\le i\le r\}$ contains the desired number of edges. Let $P(\ma{G})$ denote the collection of all appropriate partitions of $V(\ma{G})$. Let us count the number of the pairs
        $N:=|\{(A,\pi):A\in E(\ma{G}),~\pi\in P(\ma{G}),~|A\cap B_i|=1~for~every~B_i\in\pi\}|$. One can compute that any $A\in E(\ma{G})$ is contained in $\fr{|P(\ma{G})|\cdot q^r}{\binom{rq}{r}}$ members of $P(\ma{G})$ satisfying the desired property. Therefore, by double counting, there exists a $\pi\in P(\ma{G})$ such that $\ma{F}_{\pi}$ contains at least

        $$\fr{|E(\ma{G})|\cdot|P(\ma{G})|\cdot q^r/\binom{rq}{r}}{|P(\ma{G})|}=\fr{|E(\ma{G})|\cdot q^r}{\binom{rq}{r}}$$

        \noindent members of $E(\ma{G})$. Then this specified $\pi$ will induce an $r$-uniform $r$-partite hypergraph $\ma{H}$ containing the desired number of edges.
    \end{proof}

    This lemma implies that for any $r$-uniform hypergraph $\ma{G}$ with sufficiently large $|V(\ma{G})|$, there exists an $r$-partite subgraph $\ma{H}\s\ma{G}$ such that $|E(\ma{H})|$ and $|E(\ma{G})|$ are of the same order of magnitude. In other words, one can infer $f_r(rq,v,e)=\Theta(f_r^*(q,v,e))$ by Lemma \ref{erdos-kleitman}.

    Another simple lemma will be used.

    \begin{lemma}\label{treelemma}
        Suppose $G$ is a finite graph with $n$ vertices. If $G$ has no cycles, then $G$ can have at most $n-1$ edges.
     \end{lemma}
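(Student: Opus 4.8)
The plan is to prove the slightly more refined statement that an acyclic graph on $n$ vertices with $c$ connected components has exactly $n-c$ edges, from which the lemma follows immediately since $c\ge 1$. I would do this by induction on the number of edges, or equivalently on $n$. The base case is the edgeless graph, where every vertex is its own component, so there are $n$ vertices, $n$ components, and $0=n-n$ edges; in particular $0\le n-1$.

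For the inductive step the crucial point is that every graph with at least one edge and no cycle contains a vertex of degree exactly one (a ``leaf''). I would establish this by taking a path $v_0,v_1,\ldots,v_k$ in $G$ of maximum length, which has $k\ge 1$ since $G$ has an edge. By maximality every neighbour of $v_0$ lies on this path, and if $v_0$ were adjacent to some $v_j$ with $j\ge 2$ then $v_0,v_1,\ldots,v_j,v_0$ would be a cycle, contradicting acyclicity; hence the only neighbour of $v_0$ is $v_1$, so $v_0$ is a leaf. Deleting $v_0$ together with its unique incident edge yields an acyclic graph $G'$ on $n-1$ vertices with the same number of components (removing a leaf does not disconnect anything), and $G'$ has one fewer edge than $G$. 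By the induction hypothesis $G'$ has $(n-1)-c$ edges, so $G$ has $n-c$ edges, completing the induction. Setting $c\ge 1$ gives at most $n-1$ edges.

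There is essentially no real obstacle here, as this is a classical fact about forests; the only step that requires a small argument rather than a one-line observation is the existence of a leaf, handled above via a longest path. An alternative write-up I could give, which avoids induction entirely, is to decompose $G$ into its connected components $G_1,\ldots,G_c$ with $|V(G_i)|=n_i$, note that each $G_i$ is a tree, invoke the standard fact that a tree on $n_i$ vertices has $n_i-1$ edges, and sum: $\sum_{i=1}^{c}(n_i-1)=n-c\le n-1$. I would prefer the leaf-removal induction for self-containedness, since it simultaneously proves the tree edge count it would otherwise have to cite.
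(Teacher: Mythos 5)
Your proof is correct and follows essentially the same route as the paper: both locate a degree-one vertex via a maximal/terminating path argument and then induct on the number of vertices after deleting that leaf. Your version is slightly more careful (tracking components and handling the edgeless base case explicitly, which the paper's terse "must have a vertex with degree one" glosses over), but the underlying argument is the same.
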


    \begin{proof}
       $G$ must have a vertex with degree one since every path in $G$ is finite and must have an end point. Choose a vertex in $G$ with degree one, then the statement follows trivially by applying induction on $|V|$.
    \end{proof}

    With some reformulations, one can combine Lemma 3.2 and Corollary 3.3 of Alon, Fischer and Szegedy \cite{ippe} to prove the following result:

        \begin{lemma}{\rm(\cite{ippe})}\label{t=4p=4}
            There exists a set $M\s\{0,1,\ldots,\lf (q-1)/(\mu+5)\rf\}$ satisfying $$ |M|\ge qe^{-\gamma(\log q)^{3/4}}$$ such that $M$ has no non-trivial solution to all the following equations

        \begin{equation}\label{3}
          \left\{
            \begin{aligned}
                &2m_1+3m_2+\mu m_3-(\mu+5)m_4&=0\\
                &5m_1+(\mu+3)m_2-3m_3-(\mu+5)m_4&=0\\
                &5m_1+\mu m_2-2m_3-(\mu+3)m_4&=0\\
                &2m_1+3m_2-5m_3 &=0\\
                &5m_1+\mu m_2-(\mu+5)m_3 &=0\\
                &2m_1+(\mu+3)m_2-(\mu+5)m_3 &=0\\
                &3m_1+\mu m_2-(\mu+3)m_3 &=0\\
            \end{aligned}
          \right.
        \end{equation}
        \noindent where $\gamma$ is a constant and $\mu=\lc 2^{\sqrt{\log q}}\rc$.
        \end{lemma}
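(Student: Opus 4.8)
The plan is to deduce Lemma~\ref{t=4p=4} from the known statements (Lemma~3.2 and Corollary~3.3 of \cite{ippe}) by a compression-and-bookkeeping argument, exactly as the phrase ``with some reformulations, one can combine'' suggests. The essential mechanism is the probabilistic Behrend-type construction: one starts with a base set $M_0\s\{0,1,\dots,q'-1\}$, for a suitable $q'$, which avoids nontrivial solutions of a \emph{single} homogeneous linear equation $\sum a_i x_i=0$; the point (already noted in the excerpt) is that any translate $(M_0+x)\cap[q']$ inherits this property, so one can average over shifts and retain a positive density of a random shift that still avoids the equation. To avoid all seven equations in \eqref{3} simultaneously, one applies this shift-averaging trick once per equation, losing only a constant factor in density each time (seven equations, so a loss by a factor depending only on the number of equations, hence absorbable into the $e^{-\gamma(\log q)^{3/4}}$ slack), and keeping track that every coefficient vector in \eqref{3} is indeed homogeneous (one checks $2+3+\mu-(\mu+5)=0$, $5+(\mu+3)-3-(\mu+5)=0$, and so on for all seven rows — a routine verification I would include but not dwell on).

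First I would record the homogeneity of each of the seven equations and fix notation: set $\mu=\lceil 2^{\sqrt{\log q}}\rceil$ and $q'=\lfloor (q-1)/(\mu+5)\rfloor+1$ so that the target ambient set is $\{0,1,\dots,q'-1\}$; note $q'=q^{1-o(1)}$ since $\mu=q^{o(1)}$. Next I would invoke the Erd\H{o}s--Frankl--R\"odl / Ruzsa machinery (the engine behind Lemma~\ref{additive}, but in the refined $(\log q)^{3/4}$-exponent form that \cite{ippe} extracts) to get, for the \emph{first} equation alone, a set $M_1\s\{0,\dots,q'-1\}$ of size $\ge q' e^{-\gamma_1(\log q')^{3/4}}$ with no nontrivial solution. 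Then, inductively, given $M_i$ avoiding equations $1,\dots,i$, I would consider random shifts $M_i+x$ intersected with $\{0,\dots,q'-1\}$ and argue by a first-moment / averaging computation that a positive proportion of shifts keep $\ge \tfrac12 |M_i|$ elements while additionally killing all nontrivial solutions of equation $i+1$ within that shifted set — this is where the homogeneity is used, since a translate of a solution to a homogeneous equation is again a solution, so ``being solution-free'' is shift-invariant and one only needs the shift to simultaneously (i) not shrink the set too much and (ii) be compatible with the solution-free-in-a-box structure one builds for equation $i+1$. After seven steps one obtains $M=M_7$ with $|M|\ge q' e^{-\gamma'(\log q')^{3/4}}\ge q e^{-\gamma(\log q)^{3/4}}$ after adjusting the constant, which is the claim.

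The genuinely delicate point — and the step I expect to be the main obstacle — is organizing the seven simultaneous avoidances without the density losses compounding into something worse than a constant-factor hit. A clean way around this is not to handle the equations sequentially at all, but to do one Behrend-type construction in a product/box form that is solution-free for the single ``master'' equation obtained by a generic integer combination of the seven rows, and then check that genericity of the combining coefficients forces solution-freeness for each individual row; alternatively, one keeps the sequential argument but chooses the random shifts from an interval short enough that the ``box'' combinatorics survives yet long enough that the averaging gives the $\tfrac12$-survival bound — balancing these two requires $q'$ (equivalently, the number of ``digits'' in the Behrend construction) to be chosen with the $(\log q)^{3/4}$ exponent in mind, which is precisely why \cite{ippe} gets $(\log q)^{3/4}$ rather than the $\sqrt{\log q}$ of Lemma~\ref{additive}. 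I would present the sequential version for transparency, stating the shift-averaging lemma as a black box (``by the standard translation argument, cf.\ \cite{ippe,ruzsa}'') and verifying only the homogeneity of \eqref{3} and the final arithmetic $|M|\ge q e^{-\gamma(\log q)^{3/4}}$ in full, since everything else is quotation of \cite{ippe}.
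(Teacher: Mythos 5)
Your overall architecture matches the paper's: build one large solution-free set per equation via the Behrend/Erd\H{o}s--Frankl--R\"odl machinery, then combine the seven sets probabilistically using the fact that homogeneity makes solution-freeness shift-invariant. But the combination step as you describe it has a genuine gap. You propose to take the accumulated set $M_i$ (already avoiding equations $1,\dots,i$) and find a random shift $M_i+x$ that ``kills all nontrivial solutions of equation $i+1$.'' This cannot work, for exactly the reason you yourself cite: since equation $i+1$ is homogeneous, a translate of a nontrivial solution is again a nontrivial solution, so if $M_i$ contains one then so does every shift $M_i+x$. Shifting can only \emph{preserve} the property of having no solutions; it can never create it. The correct move --- which is what the paper does --- is to construct seven separate sets $M_1,\dots,M_7$, with $M_j$ solution-free for the $j$-th equation alone, and take $M=M_1\cap(M_2+x_2)\cap\cdots\cap(M_7+x_7)$ for independent uniform shifts $x_2,\dots,x_7$; each factor contributes avoidance of its own equation (shift-invariance is used here, in the harmless direction), and the randomness is used only to keep the intersection large. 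Your backup plan (one ``master'' equation from a generic integer combination of the seven rows) is also invalid in the direction you need: a nontrivial solution of row $1$ alone is generically \emph{not} a solution of the combined equation, so avoiding the master equation does not force avoiding each individual row.

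Your density bookkeeping is also off. Each intersection with $(M_j+x_j)$ costs a factor of order $|M_j|/q'\approx e^{-\gamma_j(\log q)^{3/4}}$ in expected density, not a constant: the probability that a fixed $m\in M_1$ lands in a random translate of $M_j$ is the density of $M_j$, which is subpolynomial but nowhere near $\tfrac12$. The final bound survives only because the product of six such factors is again of the form $e^{-\gamma(\log q)^{3/4}}$ with $\gamma\le\sum_j\gamma_j$ --- this is the computation the paper actually performs via linearity of expectation --- so your claim that the loss is ``a factor depending only on the number of equations'' is quantitatively wrong, even though the stated conclusion happens to hold. The homogeneity check of the seven rows, which you do include, is correct and is indeed the hypothesis that makes the whole scheme run.
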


        \begin{proof}[Sketch of the proof]
            Using the technique introduced in the proof of Lemma 3.2 of \cite{ippe}, for $1\le i\le7$, one can prove that there exists a set
            $M_i\s\{0,1,\ldots,\lf (q-1)/(\mu+5)\rf\}$ and a constant $\gamma_i$ satisfying $$ |M_i|\ge qe^{-\gamma_i(\log q)^{3/4}}$$ such that $M_i$ has no nontrivial solution to the $i$-th equation in the above system. In order to prove the existence of the set $M$ which has no nontrivial solution to all equations, we can apply a probabilistic method. Take six integers $x_i$ such that $-\lf (q-1)/(\mu+5)\rf\le x_i\le \lf (q-1)/(\mu+5)\rf$, $2\le i\le 7$, randomly, uniformly and independently. $M=M_1\cap(M_2+x_2)\cap\cdots\cap(M_7+x_7)$ has no nontrivial solution to any of the above equations. Since $M_i+x_i\in[-\lf (q-1)/(\mu+5)\rf,2\lf (q-1)/(\mu+5)\rf]$ for each $2\le i\le 7$, then one can compute that every $m\in M_1$ has probability at least $e^{-\sum_{i=2}^7 \gamma_i(\log q)^{3/4}}$ to lie in the intersection. Therefore, the result follows from the linearity of the expectation, where $|M|\ge qe^{-\gamma(\log q)^{3/4}}$ with $\gamma\le\sum_{i=1}^7 \gamma_i$.
        \end{proof}

    \section{A Johnson-type upper bound}

   The aim of this section is to establish a Johnson-type bound for separating hash families and we will use it to prove Theorem \ref{recusivebd}. To establish this bound, the idea is to delete some rows and corresponding carefully chosen columns from the representation matrix of the separating hash family. Our goal is to show the remaining submatrix satisfies some weaker separating property. We call this recursive bound a ``Johnson-type bound" due to its similarity with the traditional recursive Johnson bound in coding theory. Note that we always use $M$ to denote the representation matrix of a separating hash family.

    \begin{lemma} \label{technical}
        Let $1\le l\le N$ be a positive integer, then it holds that $C(N,q,\{w_1,...,w_t\})\le q^l+\max\{u-1,C(N-l,q,\{w_1-1,...,w_t\})\}$. In fact, in the right hand side of the inequality we can choose the minus of 1 to be after an arbitrary $w_i,~1\le i\le t$.
    \end{lemma}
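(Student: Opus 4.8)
The plan is to work with the representation matrix $M$ of an $SHF(N;n,q,\{w_1,\ldots,w_t\})$ with $n = C(N,q,\{w_1,\ldots,w_t\})$ columns, and to show that after deleting a suitably chosen block of $l$ rows together with a suitably chosen set of columns, the remaining matrix is an $SHF(N-l;n',q,\{w_1-1,w_2,\ldots,w_t\})$, hence $n' \le C(N-l,q,\{w_1-1,\ldots,w_t\})$; combining this with a bound on the number of deleted columns gives the claim. Fix the first $l$ rows, call this row set $L$, and consider the map sending each column $x$ to its restriction $x|_L \in [q]^l$. This partitions the $n$ columns into at most $q^l$ fibers according to the value of $x|_L$. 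From each nonempty fiber I would retain all but one column (keeping exactly one representative per fiber aside), so that the number of deleted columns is at most $q^l$. Let $X'$ be the set of remaining columns and $M'$ the matrix obtained by deleting the rows in $L$ and the columns not in $X'$; thus $|X'| \ge n - q^l$.

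The core step is to verify that $M'$ (on its remaining $N-l$ rows) is $\{w_1-1,w_2,\ldots,w_t\}$-separating, provided $|X'| \ge u$ — here is where the hypothesis on $C(\lfloor N/(u-1)\rfloor,\ldots)$ ultimately enters in Theorem~\ref{recusivebd}, but at the level of this lemma the clean statement is the $\max$ with $u-1$. Take pairwise disjoint column sets $C_1,\ldots,C_t \subseteq X'$ with $|C_1| = w_1-1$ and $|C_i| = w_i$ for $i\ge 2$. Pick any column $z \in X' \setminus (C_1\cup\cdots\cup C_t)$ (possible since $|X'|\ge u$), and set $C_1' = C_1 \cup \{z\}$. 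Now apply the separating property of the original family $M$ to $C_1',C_2,\ldots,C_t$: there is a row $r$ of $M$ that separates them. If $r \notin L$, then that same row, now a row of $M'$, already separates $C_1\subseteq C_1'$ from $C_2,\ldots,C_t$, and we are done. The delicate case is $r \in L$: I claim this cannot happen. Indeed, $z$ was chosen as a fiber representative, but every column of $X'$ lies in a fiber with at least one other original column having the same $L$-restriction; more carefully, I will instead choose $z$ to be a column whose $L$-restriction coincides with that of some column in $C_1$ — this is the correct choice. If $C_1 = \emptyset$ (i.e.\ $w_1 = 1$) a small separate argument is needed, but otherwise pick $x_0 \in C_1$ and let $z$ be a column, distinct from $x_0$ and outside all the $C_i$, with $z|_L = x_0|_L$. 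Then no row in $L$ can separate $C_1' \ni x_0, z$ since $x_0$ and $z$ agree on every coordinate of $L$; hence the separating row $r$ for $C_1',C_2,\ldots,C_t$ must lie outside $L$, and restricting to $M'$ it separates $C_1, C_2,\ldots, C_t$ as required.

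The main obstacle is guaranteeing the existence of such a column $z$: we need a column $z \notin C_1 \cup \cdots \cup C_t$ with $z|_L = x_0|_L$ for some $x_0 \in C_1$, and $z$ must itself survive into $X'$. This is exactly why the column-deletion in the first step must be done carefully rather than arbitrarily: for each fiber I should delete columns so that what remains either is the whole fiber or is arranged so that every surviving column still has a surviving "partner" — in practice, the cleanest route is to not delete a full column set of size $q^l$ blindly, but to argue that if some fiber is a singleton then its unique column has a unique coordinate pattern on $L$ and can be safely removed (contributing to the $q^l$ count), while fibers of size $\ge 2$ are kept intact, so any retained column in such a fiber has a partner. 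I would also need to handle the boundary case $w_1 = 1$ (so $C_1' = \{z\}$ and "separating $C_1$" is vacuous) and the case where the deletion leaves fewer than $u$ columns, which is precisely where the $u-1$ term in the $\max$ absorbs the bound. The final arithmetic — that $n \le q^l + \max\{u-1, C(N-l,q,\{w_1-1,\ldots,w_t\})\}$, and that the decremented coordinate may be attached to any $w_i$ by symmetry of the argument — is then immediate. Iterating this lemma with the right choice of $l$ at each stage, and balancing the $\lfloor N/(u-1)\rfloor$ versus $\lceil N/(u-1)\rceil$ exponents, yields Theorem~\ref{recusivebd}; that iteration, not the single step, is where the residue $r \pmod{u-1}$ bookkeeping lives.
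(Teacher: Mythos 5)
Your overall skeleton matches the paper's: fix $l$ rows $L$, partition the columns into fibers by their restriction to $L$, delete one representative per fiber (at most $q^l$ columns), and show the surviving $(N-l)$-row matrix is $\{w_1-1,w_2,\ldots,w_t\}$-separating by augmenting $C_1$ with a partner column and invoking the separating property of the original matrix $M$. But the key step --- forcing the separating row to lie outside $L$ --- fails as you have set it up. You choose the partner $z$ with $z|_L=x_0|_L$ for some $x_0\in C_1$ and claim no row of $L$ can then separate $C_1'=C_1\cup\{z\},C_2,\ldots,C_t$ ``since $x_0$ and $z$ agree on every coordinate of $L$''. This conflates the two notions of separation in the paper: separating the \emph{collection} $C_1',C_2,\ldots,C_t$ only requires the images $f(C_1'),\ldots,f(C_t)$ to be pairwise disjoint; it does not require $f$ to be injective on $C_1'$. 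Two columns of the same block agreeing on a row of $L$ in no way prevents that row from separating the blocks, so the row guaranteed by $M$ may well lie in $L$, and then you have exhibited no row of $M'$. The paper's choice is the opposite one: take the partner $c'$ to agree on $L$ with a column $c$ of a \emph{different} block, say $c\in C_2$. Then on every row of $L$ the value of $c'\in C_1\cup\{c'\}$ coincides with that of $c\in C_2$, so $f(C_1\cup\{c'\})\cap f(C_2)\neq\emptyset$ and no row of $L$ separates the augmented collection; this forces the separating row outside $L$ and also disposes of the $w_1=1$ case with no special argument.

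Your secondary worry about whether $z$ ``survives into $X'$'', and the proposed repair (keeping fibers of size at least $2$ intact and deleting only singletons), are unnecessary and point in the wrong direction. In the paper the deleted set $\mathcal{A}$ consists of exactly one representative per fiber, and the partner $c'$ is taken \emph{from} $\mathcal{A}$: it exists because every surviving column (in particular every column of $C_2$) shares its $L$-restriction with its deleted representative, and it is automatically disjoint from $C_1\cup\cdots\cup C_t$ because those sets live among the surviving columns. No care in choosing which column of a fiber to delete is needed, and the $|\mathcal{A}|\le q^l$ count is immediate. With the partner correctly matched to a column of $C_2$ your argument closes and becomes essentially the paper's proof (run directly rather than by contradiction); as written, the treatment of the case $r\in L$ is a genuine gap.
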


  \begin{proof}
      Choose arbitrary $l$ rows of $M$ and let $L$ denote the collection of these chosen rows. Denote $\ma{A}\s Y^l$ the maximal collection of columns whose restrictions to $L$ are all distinct (we just choose one column if there are several columns with the same restrictions to $L$). It is easy to see $|\ma{A}|\le q^l$ since there are at most $q^l$ distinct words of length $l$. Delete these $l$ rows and the columns contained in $\ma{A}$ from $M$. Let $M'$ denote the remaining submatrix. Then $M'$ is a $q$-ary $(N-l)\times(n-|\ma{A}|)$ matrix. If $n-|\ma{A}|\le u-1$, we are done. Otherwise it suffices to show $M'$ is a representation matrix of a separating hash family of type $\{w_1,\ldots,w_i-1,\ldots,w_t\}$ for arbitrary $1\le i\le t$.

      Assume the contrary, $M'$ is not $\{w_1,\ldots,w_i-1,\ldots,w_t\}$-separating for some $1\le i\le t$. Without loss of generality, we set $i=1$. Then there exist $t$ subsets $C_1,\ldots,C_t$ of the columns of $M'$ with $|C_1|=w_1-1$ and $|C_i|=w_i$ for $2\le i\le t$, such that no row of $M'$ can separate $C_1,\ldots,C_t$. Let $c$ be an arbitrary column of $C_2$ and let $c'$ be a column in $\ma{A}$ such that $c'|_{L}=c|_{L}$. Such $c'\in\ma{A}$ must exist by our definition of $\ma{A}$. Consequently, no row can separate $C_1\cup\{c'\},C_2,\ldots,C_t$ in the original matrix $M$, which contradicts the fact that $M$ is $\{w_1,\ldots,w_t\}$-separating. Thus $M'$ satisfies the desired separating property and the lemma follows from $n-|\ma{A}|\le C(N-l,q,\{w_1-1,...,w_t\}$ and $|\ma{A}|\le q^l$.
 \end{proof}

    \begin{remark}
        This lemma is obviously an extension of Lemma 2 of \cite{Trung2011}. We think this Johnson-type bound is very interesting and important since it points out the information hidden in the structure of separating hash families.
    \end{remark}

    As the first application of Lemma \ref{technical}, we will use it to prove Theorem \ref{recusivebd}.
    Note that we can omit the constraint $C(\lfloor N/(u-1)\rfloor,q,\{w_1,\ldots,w_t\})\ge u$ in the theorem by introducing a maximum term in the expression of the upper bound (just as the case in Lemma \ref{technical}). And $C(\lfloor N/(u-1)\rfloor,q,\{w_1,\ldots,w_t\})\ge u$ always holds for $N\ge u-1$ and sufficiently large $q$, for example, $q\ge u$.

  \begin{proof}[\textbf{Proof of Theorem \ref{recusivebd}}]
      One can verify that $N=r\lceil N/(u-1)\rceil+(u-1-r)\lfloor N/(u-1)\rfloor$. We apply Lemma \ref{technical} repeatedly for $u-1$ times, in which $l$ is chosen to be $\lceil N/(u-1)\rceil$ ($r$ times) and $\lfloor N/(u-1)\rfloor$ ($u-1-r$ times), respectively. 
      The theorem follows from a simple fact that $C(0,q,\{1\})=0$.
 \end{proof}

     \begin{remark}
        It is not hard to see our bound is an improvement of Theorem \ref{shf0}, and hence an improvement of \cite{Trung2011,blackburn08}. One can see $\lceil N/(u-1)\rceil$ is the best exponential term that can be obtained by our method, since to reduce the exponential term, one should reduce the maximum value of $l$ involved in the deletions. In other words, we should find a finer partition of $[N]$ and hence more deletion rounds are needed. However, at most $(u-1)$ deletion rounds can be used, because $C(N,q,\{w_1,\ldots,w_t\})$ can be arbitrary large if $t=1$ and $N>0$.
     \end{remark}

     \begin{remark}
        Since the frameproof code is a special class of separating hash families, it is not surprising to see our bound contains Theorem 1 of \cite{BL03} as a special case. By Constructions 2 and 3 in \cite{BL03}, one can find that Theorem \ref{recusivebd} is asymptotically optimal when $q\ge N$, $\{w_1,\ldots,w_t\}=\{1,w\}$, $N\equiv 1 \pmod{w}$, or $q=\Omega(N^2)$, $\{w_1,\ldots,w_t\}=\{1,2\}$. The following section presents a construction which shows Theorem \ref{recusivebd} is also asymptotically optimal when $N=u-1$.
     \end{remark}

\section{A construction for $t$-perfect hash families with $t-1$ rows}

    The aim of this section is to present a construction for $PHF(N;Nq^{N-1},q^{N-1}+(N-1)q^{N-2},N+1)$ for arbitrary positive integer $q\ge2$ and $N\ge 2$.

    One nice feature of our construction is that it is a generalization of many previous ones. When $N=2$, the construction of $PHF(2;2q,q+1,3)$ has appeared in \cite{N=2first,N=3t=3}. And when $N=3$, the construction of $PHF(3;3q^2,q^2+2q,4)$ has appeared in numerous papers, for example, Hollmann et al. \cite{ipp}, Blackburn \cite{BL03perfect}, Stinson et al. \cite{ST08} and Bazrafshan et al. \cite{Trung2011}.

   Let us begin with $N=3$ as a simple example to illustrate our idea.

    \begin{example}{\rm(\cite{Trung2011,BL03perfect,ipp,ST08})}
        There exists a $PHF(3;3q^2,q^2+2q,4)$ for any integer $q\ge 2$.
    \end{example}

     \begin{proof}
        We first construct a $3\times q^2$ submatrix, in which the alphabet set is the $(q^2+2q)$-element set defined as $\{(x,y),(x,0),(0,y):1\le x,y\le q,~x,y\in\mathbb{Z} \}$,

        \begin{center}
        $\left(
            \begin{array}{cccccccccccccccccccccccccccccccc}
              (1,1) & (1,2) & \cdots & (1,q) & (2,1) & \cdots & (2,q) & \cdots & (q,1) & \cdots & (q,q) \\
              (0,1) & (0,2) & \cdots & (0,q) & (0,1) & \cdots & (0,q) & \cdots & (0,1) & \cdots & (0,q) \\
              (1,0) & (1,0) & \cdots & (1,0) & (2,0) & \cdots & (2,0) & \cdots & (q,0) & \cdots & (q,0) \\
            \end{array}
        \right)$.
        \end{center}

        \noindent We denote the three rows of this submatrix as $A_0,~A_1,~A_2$, respectively.
        Then the representation matrix of the desired perfect hash family can be presented as follows:

        \begin{center}
        $\left(
              \begin{array}{ccc}
                A_0 & A_2 & A_1 \\
                A_1 & A_0 & A_2 \\
                A_2 & A_1 & A_0 \\
              \end{array}
        \right).$
        \end{center}

        \noindent We can easily see it is a $3\times 3q^2$ matrix over an alphabet of size $q^2+2q$. One can verify (or see the proof of Theorem \ref{const} below) that it is indeed a representation matrix of a 4-perfect hash family.
    \end{proof}

     In the above matrix $A_0$ acts like an identity map that preserves each element in $\{(x,y):1\le x,y \le q,~x,y\in\mathbb{Z}\}$, while $A_i$, $i=1,~2$, acts like a projection that projects the $i$-th entry of $(x,y)$ to zero. Actually, the idea behind this simple construction can be generalized.


     Recall the definition of the Hamming graphs in Section 2. Take a $q$-ary hypercube $\ma{A}$ of dimension $k$, then $|V(\ma{A})|=q^k$. For $1\leq i\le k$ and arbitrary $\alpha=(\alpha(1),\ldots,\alpha(k))\in V(\ma{A})$, define $\pi_i$ to be the map that sets $\alpha(i)$ into zero but preserves all other coordinates of $\alpha$. We say $\pi_i$ separates a set $S\s V(\ma{A})$ if $\pi_i(\alpha)\neq\pi_i(\beta)$ for arbitrary distinct $\alpha,~\beta\in S$. Proposition 1 of \cite{BL03perfect} establishes an important property of these maps. We present the proof here for the sake of reader's convenience.

    \begin{lemma}{\rm(\cite{BL03perfect})}{\label{construction}}
        Let $S\s V(\ma{A})$ be an arbitrary $t$-element subset with $t\le k$, then $S$ is separated by at least $k-t+1$ of the functions $\pi_1,\ldots,\pi_k$.
    \end{lemma}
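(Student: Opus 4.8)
The plan is to pin down exactly when a single map $\pi_i$ fails to separate $S$, and then to bound the number of such ``bad'' indices by an acyclicity argument together with Lemma \ref{treelemma}. First I would observe that $\pi_i$ fails to separate $S$ precisely when there are distinct $\alpha,\beta\in S$ with $\pi_i(\alpha)=\pi_i(\beta)$; since $\pi_i$ alters only the $i$-th coordinate, this forces $\alpha$ and $\beta$ to agree in every coordinate other than $i$, and since $\alpha\neq\beta$ they must differ in coordinate $i$. Thus $\pi_i$ fails to separate $S$ if and only if $S$ contains two vertices differing in exactly the $i$-th coordinate, i.e.\ an edge of the Hamming graph ``in direction $i$''. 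Call such an index $i$ \emph{bad}.

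Next I would build an auxiliary graph $G$ on the vertex set $S$: for each bad index $i$ choose one witnessing pair $\{\alpha,\beta\}\s S$ differing only in coordinate $i$, and take it as an edge of $G$. Distinct bad indices yield distinct edges, since the unique coordinate in which the endpoints of such an edge differ is determined by the edge itself; hence $G$ has exactly as many edges as there are bad indices.

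Then I would show that $G$ has no cycle. Suppose, for contradiction, that $v_1,v_2,\ldots,v_m,v_1$ is a cycle in $G$ with $m\ge 3$, and let $d_j$ denote the (bad) direction of the edge $\{v_j,v_{j+1}\}$, indices taken modulo $m$. Because $G$ contains at most one edge per direction, the $d_j$ are pairwise distinct. Traversing an edge in direction $d_j$ changes only the $d_j$-th coordinate, so walking from $v_2$ back to $v_1$ along $v_2,v_3,\ldots,v_m,v_1$ — a walk that uses no edge in direction $d_1$ — leaves the $d_1$-th coordinate unchanged, giving $v_2(d_1)=v_1(d_1)$. But the edge $\{v_1,v_2\}$ has direction $d_1$, so $v_1(d_1)\neq v_2(d_1)$, a contradiction. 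Therefore $G$ is acyclic.

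Finally, by Lemma \ref{treelemma} an acyclic graph on $|S|=t$ vertices has at most $t-1$ edges, so there are at most $t-1$ bad indices among $1,\ldots,k$; consequently at least $k-(t-1)=k-t+1$ of the maps $\pi_1,\ldots,\pi_k$ separate $S$. I expect the only step needing care to be the acyclicity of $G$, but the observation that a uniquely-directed edge creates a coordinate mismatch which the remainder of the cycle cannot repair settles it cleanly; the rest is bookkeeping.
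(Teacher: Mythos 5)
Your proof is correct and follows essentially the same route as the paper: identify each failing $\pi_i$ with a Hamming edge inside $S$ in direction $i$, form the auxiliary graph on $S$ with one edge per failing direction, rule out cycles by tracking the distinguished coordinate around the cycle, and apply Lemma \ref{treelemma} to bound the number of failing maps by $t-1$. The only difference is presentational — you prove acyclicity directly and count, whereas the paper argues by contradiction from $t$ failing maps — and your explicit choice of a single witnessing pair per bad index is a slightly cleaner bookkeeping of the same idea.
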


    \begin{proof}
        Assume the contrary. Without loss of generality, let $S=\{\alpha_1,\ldots,\alpha_t\}$ and let $\pi_1,\ldots,\pi_t$ be the $t$ functions which can not separate $S$. Define a colored graph $G=(V,E)$ by $V=S$ and connect $\alpha,\beta\in V$ by an edge of color $i$ if $\pi_i(\alpha)=\pi_i(\beta)$. Note that graph $G$ is a subgraph of a Hamming graph. Since $\pi_1,\ldots,\pi_t$ can not separate $S$, then for every $i\in[t]$, there exist $1\le j<l\le t$ such that $\pi_i(\alpha_j)=\pi_i(\alpha_l)$. So $G$ contains a subgraph $G^{'}=(V,E^{'})$ with $t$ vertices and $t$ edges of distinct colors. By Lemma \ref{treelemma} we can deduce that $G^{'}$ contains a cycle which can be denoted as $(\alpha_1,\alpha_2,\ldots,\alpha_c)$, where $c$ is an integer such that $1\leq c\leq t$.

        We are done if we can show such cycle must not exist. Assume that the edge between $\alpha_1$ and $\alpha_2$ is colored by the $i$-th color. Then $\alpha_1$ and $\alpha_2$ must differ in their $i$-th coordinate. Since every edge in this cycle is of distinct color and every pair of connected vertices differ in exactly one coordinate, then for every $j\in\{2,3,\ldots,c\}$, $\alpha_j$ and $\alpha_{j+1}$ must agree in their $i$-th coordinate. In particular, $\alpha_{c+1}$ is recognised as $\alpha_1$, which implies that $\alpha_2(i)=\alpha_3(i)=\ldots=\alpha_c(i)=\alpha_1(i)$. Thus the desired contradiction follows.
    \end{proof}

    The following lemma is an easy consequence of above lemma.

    \begin{lemma}\label{lastlemma}
        Let $\pi_i~(1\le i\le k)$ be the functions defined as above and let $\pi_0$ denote the identity map which satisfies $\pi_0(\alpha)=\alpha$ for every $\alpha\in V(\ma{A})$. Suppose $S\s V(\ma{A})$ is a $t$-element subset with $t\le k+1$, then at most $t-1$ of the functions $\pi_0,\pi_1,\ldots,\pi_k$ can not separate $S$.
    \end{lemma}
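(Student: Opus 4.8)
The statement follows by combining Lemma \ref{construction} with a simple analysis of the identity map $\pi_0$. The plan is to argue according to whether the set $S$ contains two columns that agree in every coordinate, since that is precisely the obstruction to $\pi_0$ being separating.

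First I would observe that since all columns of the Hamming graph are distinct, any two distinct vertices $\alpha,\beta\in V(\ma{A})$ differ in at least one coordinate; hence $\pi_0$ fails to separate $S$ if and only if... well, $\pi_0=\text{id}$ always separates a set of distinct vertices, so in fact $\pi_0$ separates every $S$. Wait — more carefully: $\pi_0$ is the identity, so $\pi_0(\alpha)\ne\pi_0(\beta)$ whenever $\alpha\ne\beta$; thus $\pi_0$ always separates $S$. So the real content is: among $\pi_1,\dots,\pi_k$, at most $t-1$ fail. But Lemma \ref{construction} only applies when $t\le k$, giving at least $k-t+1$ successes among $k$ maps, i.e.\ at most $t-1$ failures among $\pi_1,\dots,\pi_k$ — and then $\pi_0$ is an extra success, so at most $t-1$ of $\pi_0,\dots,\pi_k$ fail. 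The genuinely new case is $t=k+1$, where Lemma \ref{construction} does not directly apply.

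So the key step is the case $t=k+1$. Here I would take an arbitrary $(k+1)$-element subset $S\subseteq V(\ma{A})$ and show at most $k$ of the $k+1$ maps $\pi_0,\pi_1,\dots,\pi_k$ fail to separate $S$; equivalently, at least one of them separates $S$. Since $\pi_0$ already separates $S$ (identity on distinct elements), this is immediate and the bound "at most $t-1=k$ failures" holds trivially. For $t\le k$, use Lemma \ref{construction}: at most $t-1$ of $\pi_1,\dots,\pi_k$ fail, and adding $\pi_0$ (which always succeeds) keeps the failure count at $\le t-1$.

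Thus in all cases $\pi_0$ contributes a guaranteed success, and Lemma \ref{construction} bounds the failures among the projections. Assembling: the number of maps in $\{\pi_0,\pi_1,\dots,\pi_k\}$ that fail to separate $S$ is at most $(t-1)$ when $t\le k$ (from Lemma \ref{construction}, with $\pi_0$ a bonus success) and at most $k=t-1$ when $t=k+1$ (since $\pi_0$ alone already succeeds). The only subtlety — and the step I would write out with care — is verifying that $\pi_0$ is legitimately counted as separating, i.e.\ that the columns of $\ma{A}$ are pairwise distinct so that the identity map is injective on any subset; this is immediate from the definition of the Hamming graph. No serious obstacle is expected; the lemma is a bookkeeping consequence of Lemma \ref{construction}.
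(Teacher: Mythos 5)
Your proposal is correct and is essentially the paper's own argument: the paper's proof is the one-line ``Apply Lemma \ref{construction} and remember the fact that $\pi_0$ separates every subset of $V(\ma{A})$,'' which is exactly your combination of the bound on failures among $\pi_1,\ldots,\pi_k$ with the observation that the identity map always succeeds. Your explicit handling of the $t=k+1$ case is a reasonable elaboration of the same idea, not a different route.
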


    \begin{proof}
        Apply Lemma \ref{construction} and remember the fact that $\pi_0$ separates every subset of $V(\ma{A})$.
    \end{proof}



    Now we can prove Theorem \ref{const}.

    \begin{proof}[\textbf{Proof of Theorem \ref{const}}]

        Take a $q$-ary hypercube $\ma{A}$ of dimension $N-1$. Obviously $|V(\ma{A})|=q^{N-1}$. Let $\pi_0,\pi_1,\ldots,\pi_{N-1}$ be the maps defined as above. Then our desired perfect hash family can be represented as the following matrix

        \begin{center}

        $\left(
          \begin{array}{ccccc}
            \pi_0(\ma{A})     & \pi_{N-1}(\ma{A}) & \cdots & \cdots & \pi_1(\ma{A}) \\
            \pi_1(\ma{A})     & \pi_0(\ma{A})     & \cdots & \cdots & \pi_2(\ma{A}) \\
            \vdots       & \vdots       & \ddots &        & \vdots   \\
            \vdots       & \vdots       &        & \ddots & \vdots   \\
            \pi_{N-1}(\ma{A}) & \pi_{N-2}(\ma{A}) & \cdots & \cdots & \pi_0(\ma{A}) \\
          \end{array}
        \right)$,

        \end{center}

        \noindent where for every $0\le i\le N-1$, $\pi_i(\ma{A}):=(\pi_i(\alpha))_{\alpha\in V(\ma{A})}$ is a $1\times|V(\ma{A})|$ submatrix. Denote this representation matrix as $M$, then $M$ is an $N\times Nq^{N-1}$ matrix. Let $Y=\cup_{i=0}^{N-1}\pi_i(\ma{A})$ denote the alphabet set. It is not hard to see $|\{\pi_0(\alpha):\alpha\in V(\ma{A})\}|=q^{N-1}$ and $|\{\pi_i(\alpha):\alpha\in V(\ma{A})\}|=q^{N-2}$ for every $1\le i\le N-1$. Then one can verify that $|Y|=q^{N-1}+(N-1)q^{N-2}$. Thus we can conclude that $M$ is the representation matrix of an $(N;Nq^{N-1},q^{N-1}+(N-1)q^{N-2})$-hash family.

        Now it remains to verify that this hash family is indeed an $(N+1)$-perfect hash family. Consider $M$ as the  concatenation of $N$ column patterns denoted as $(C_1|C_2|\cdots|C_N)$ with $|C_1|=|C_2|=\cdots=|C_N|=q^{N-1}$. Take an arbitrary $(N+1)$-subset $S$ of the columns of $M$. We are going to show that there must exist a row of $M$ that separates $S$. If $S\s C_i$ for some $1\le i\le N$, then the $i$-th row of $C_i$, which corresponds to $\pi_0$, can separate $S$, since $\pi_0(\alpha)\neq\pi_0(\beta)$ for arbitrary distinct $\alpha,~\beta\in V(\ma{A})$. Otherwise, let $C_{i_1},\ldots,C_{i_j}$ be the column patterns which have non-empty intersection with $S$, where $j\ge 2$ is a positive integer. For $1\le l\le j$, denote $C_{i_l}\cap S=S_l$. Then $\sum_{l=1}^{j}|S_l|=N+1$ and $|S_l|\le N$ for every $l$. By Lemma \ref{lastlemma}, at most $|S_l|-1$ rows of $C_{i_l}$ can not separate $S_l$. Since $\sum_{l=1}^{j}(|S_l|-1)=N+1-l\le N+1-2=N-1<N$, then there must exist a row of $(C_1|C_2|\cdots|C_N)$ that separates $\cup_{i=1}^{l} S_l=S$.

    \end{proof}

    \begin{remark}
        Our construction has an important property satisfying

        $$\lim_{q\rightarrow\infty}\fr{Nq^{N-1}}{q^{N-1}+(N-1)q^{N-2}}=N$$

        \noindent and hence it is asymptotically optimal since we have $p_{N+1}(N,q)\le Nq$ by Theorem \ref{recusivebd}. Note that a $u$-perfect hash family is $\{w_1,\ldots,w_t\}$-separating for arbitrary $w_i$ such that $w_i\ge1$ and $\sum_{i=1}^t w_i=u$. Theorems \ref{recusivebd} and \ref{const} can be combined to show

        $$\lim_{q\rightarrow\infty}\fr{C(u-1,q,\{w_1,\ldots,w_t\})}{q}=u-1,$$

        \noindent which gives a negative answer to Question \ref{trung}. Furthermore, taking into account the fact that any $(\lfloor(t/2+1)^2\rfloor)$-perfect hash family is also a $t$-IPP code, one can see that our construction also confirms the validity of Conjecture \ref{alonstav}.
    \end{remark}

    \begin{remark}
        It is worth mentioning that Proposition 2 of \cite{BL03perfect} (an unpublished paper) also noticed that $\lim_{q\rightarrow\infty}\fr{p_u(u-1,q)}{q}=u-1$. The author used an optimization method and no explicit construction was given in that paper.
    \end{remark}

    The proof of Lemma \ref{construction} also leads to a conclusion on Hamming graphs, which we think may be of independent interest.

    \begin{corollary}\label{hamming}
        Color the edges of $H(k,q)$ with $k$ colors such that the edge $(\alpha,\beta)$ is colored by color $i$ if $\alpha$ and $\beta$ differ in their $i$-th coordinate. Then $H(k,q)$ contains no cycles with pairwise distinct colors.
     \end{corollary}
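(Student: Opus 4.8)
The plan is to reduce Corollary~\ref{hamming} directly to the argument already carried out in the proof of Lemma~\ref{construction}. Recall that in that proof we considered a $t$-element subset $S=\{\alpha_1,\dots,\alpha_t\}\subseteq V(\ma{A})$ together with a colored graph on vertex set $S$ in which $\alpha,\beta$ are joined by an edge of color $i$ whenever $\pi_i(\alpha)=\pi_i(\beta)$, i.e.\ whenever $\alpha$ and $\beta$ agree in every coordinate except possibly the $i$-th. The key geometric fact extracted there was that such a colored graph can contain no cycle all of whose edges have pairwise distinct colors: if $\alpha_1,\alpha_2$ differ in coordinate $i$, then traversing the rest of the cycle with edges of colors different from $i$ forces $\alpha_2(i)=\alpha_3(i)=\dots=\alpha_1(i)$, a contradiction.

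First I would observe that the edge-coloring of $H(k,q)$ described in the corollary is essentially the same object: in $H(k,q)$ two vertices are adjacent precisely when they differ in exactly one coordinate, say the $i$-th, and we color that edge $i$. This is exactly the restriction of the ``agree except in coordinate $i$'' relation to pairs that genuinely differ in coordinate $i$ — in other words, an edge of $H(k,q)$ colored $i$ is precisely a pair $(\alpha,\beta)$ with $\pi_i(\alpha)=\pi_i(\beta)$ and $\alpha\neq\beta$. Next I would suppose, for contradiction, that $H(k,q)$ contains a cycle $(\alpha_1,\alpha_2,\dots,\alpha_c)$ whose $c$ edges receive $c$ pairwise distinct colors. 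Restricting the vertex set to $\{\alpha_1,\dots,\alpha_c\}$ and invoking the colored-graph analysis verbatim from the proof of Lemma~\ref{construction}: let $i$ be the color of the edge $\alpha_1\alpha_2$, so $\alpha_1(i)\neq\alpha_2(i)$; since every other edge of the cycle has a color $\neq i$, consecutive vertices around the cycle agree in coordinate $i$, hence $\alpha_2(i)=\alpha_3(i)=\dots=\alpha_c(i)=\alpha_1(i)$, contradicting $\alpha_1(i)\neq\alpha_2(i)$.

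There is essentially no obstacle here — the corollary is a clean restatement of the combinatorial core of Lemma~\ref{construction}, stripped of the separating-hash-family packaging. The only point worth a sentence of care is making explicit that the two colorings coincide (the Lemma~\ref{construction} graph $G$ may have extra edges when some $\pi_i$ identifies two vertices that differ in more than one coordinate — but that never happens, since $\pi_i$ only zeroes out the $i$-th coordinate, so $\pi_i(\alpha)=\pi_i(\beta)$ with $\alpha\neq\beta$ already forces $\alpha$ and $\beta$ to differ exactly in coordinate $i$). Thus the edge set of $G$ restricted to adjacency in $H(k,q)$ is literally the edge set of $H(k,q)$ with the stated coloring, and the ``no rainbow cycle'' conclusion transfers immediately. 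I would conclude by simply writing: the proof is identical to the relevant portion of the proof of Lemma~\ref{construction}.
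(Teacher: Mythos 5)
Your proposal is correct and coincides with the paper's own treatment: the paper states Corollary~\ref{hamming} as an immediate byproduct of the cycle argument in the proof of Lemma~\ref{construction}, which is exactly the reduction you carry out. Your extra remark that $\pi_i(\alpha)=\pi_i(\beta)$ with $\alpha\neq\beta$ forces $\alpha$ and $\beta$ to differ precisely in coordinate $i$ (so the two colorings agree) is a worthwhile clarification, but the substance of the argument is the same.
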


\section{Perfect hash families of strength three with three rows}

    Constructions for perfect hash families can induce constructions for corresponding separating hash families. And with the aid of Lemma \ref{technical}, upper bounds for perfect hash families can also induce upper bounds for related separating hash families. Therefore, from this section we will focus on perfect hash families.

   We have mentioned in Section 1 that if $(u-1)\nmid N$, it is very difficult to determine whether the exponent $\lc N/(u-1)\rc$ in Theorem \ref{recusivebd} is tight. In the following two sections we will handle two small cases in such problems, namely, $N=u=3$ and $N=u=4$. When $N=3$ and $u=3$, the corresponding separating hash families only have two alternative types, namely, $\{1,2\}$-separating and 3-perfect hashing. Bazrafshan and Trung  \cite{trung2014} proved that $C(3,q,\{1,2\})\le q^2$ and an $SHF(3;q^2,q,\{1,2\})$ does exist for $q\ge2$. Walker and Colbourn \cite{N=3t=3} conjectured that $p_3(3,q)=o(q^2)$. In this section, we will verify this conjecture by proving $q^{2-o(1)}<p_3(3,q)=o(q^2)$. Furthermore, the upper bound is extended to $p_t(t,q)$ and $C(u,q,\{w_1,\ldots,w_t\})$ with $\sum_{i=1}^t w_i=u$.


  Let us begin with a simple lemma. Note that we will not distinguish between a perfect hash family and its representation matrix. We say a word $x$ of the hash family (resp. a column of the representation matrix) has a unique coordinate $i$ if for any other word (resp. column) $y$, $y\neq x$, it holds that $y(i)\neq x(i)$.

  \begin{lemma}\label{uniquelemma}
     Let $X$ denote the column set (words) of a $PHF(N;n,q,t)$. Then by deleting at most $Nq$ words from $X$, we can get a subset $X^{*}\s X$ such that no word in $X^{*}$ has a unique coordinate in $X^{*}$.
  \end{lemma}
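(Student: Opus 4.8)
The plan is to repeatedly strip off words that possess a unique coordinate, and to argue that this process cannot run for more than $Nq$ steps. First I would observe that in each of the $N$ coordinates, the number of distinct symbols appearing among the words of $X$ is at most $q$; hence for a fixed coordinate $i$, the words having a unique coordinate at position $i$ use pairwise distinct symbols there, so there can be at most $q$ of them \emph{at any single moment}. Summing over the $N$ coordinates gives at most $Nq$ words having a unique coordinate at the outset. The subtlety is that deleting such words can create new unique coordinates in the surviving set, so one pass does not suffice; the argument has to control the entire iterative deletion.

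I would run the following greedy procedure. Set $X_0 = X$. Given $X_j$, if some word $x \in X_j$ has a unique coordinate in $X_j$, pick one such word, delete it, and set $X_{j+1} = X_j \setminus \{x\}$; otherwise stop and output $X^* = X_j$. By construction the output has no word with a unique coordinate, so it remains to bound the number of deletions. The key bookkeeping step is to charge each deleted word to a (coordinate, symbol) pair: when $x$ is deleted from $X_j$ because coordinate $i$ is unique for $x$ in $X_j$, charge $x$ to the pair $(i, x(i))$. I claim no pair $(i,a)$ is ever charged twice. Indeed, if $x$ is charged to $(i,a)$ at step $j$, then $x(i) = a$ and no other word in $X_j$ has symbol $a$ in coordinate $i$; since all later sets $X_{j'}$ with $j' > j$ are subsets of $X_j \setminus \{x\}$, none of them contains \emph{any} word with symbol $a$ in coordinate $i$, so no future deletion can be charged to $(i,a)$. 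Therefore the number of deletions is at most the number of (coordinate, symbol) pairs actually used, which is at most $N q$, and $|X \setminus X^*| \le Nq$ as desired. If $X^*$ ends up empty the statement is vacuously true, and the hypotheses $n > q$ guarantee we are not in a degenerate situation, although the bound holds regardless.

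The only place requiring care is the monotonicity observation that drives the charging argument, namely that once a symbol $a$ disappears from coordinate $i$ it never returns; this is immediate because the procedure only ever removes words, never adds them, so $X_0 \supseteq X_1 \supseteq X_2 \supseteq \cdots$. I do not anticipate a genuine obstacle here — the lemma is a clean counting argument — but the one thing to state explicitly is that "unique coordinate" is always evaluated relative to the \emph{current} set $X_j$, not relative to the original $X$, since that is exactly what makes the iteration non-trivial and what the charging scheme is designed to handle.
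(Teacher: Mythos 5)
Your proposal is correct and follows essentially the same greedy deletion argument as the paper; your charging of each deleted word to a (coordinate, symbol) pair, together with the monotonicity observation, is just a more explicit spelling-out of the paper's one-line justification that any symbol can be deleted at most once per coordinate.
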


  \begin{proof}
    We use a greedy algorithm to construct $X^{*}$. Delete $x_1$ from $X$ if $x_1$ has a unique coordinate in $X$. Denote $X_1=X-\{x_1\}$. In general, if $x_{i+1}\in X_i$ has a unique coordinate in $X_i$, we delete $x_{i+1}$ from $X_i$ and then denote $X_{i+1}=X_i-\{x_{i+1}\}$. Continue this procedure until we get an $X^{*}$ with no words containing a unique coordinate in it. At most $Nq$ words will be deleted from $X$ since we can delete any symbol $y\in[q]$ at most one time for any coordinate $i\in[N]$.
  \end{proof}

  Since all perfect hash families being considered in the following are of size at least $q^{1+\epsilon}$ for some positive constant $\epsilon$, then the deletion of at most $Nq$ words from $X$ can be neglected. Let $PHF^*(N;n,q,t)$ denote the perfect hash family (obtained from $PHF(N;n,q,t)$) such that no word in it contains a unique coordinate. We use $p_t^*(N,q)$ to denote the corresponding maximal cardinality.

  \begin{lemma}\label{linear}
    In a $PHF^*(t;n,q,t)$, any two words can agree with at most one coordinate.
  \end{lemma}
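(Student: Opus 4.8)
The plan is to argue by contradiction: suppose two distinct words $x,y$ in a $PHF^*(t;n,q,t)$ agree in two coordinates, say $x(i)=y(i)$ and $x(j)=y(j)$ with $i\neq j$. Since the hash family has only $t$ rows and we must separate every $t$-subset of columns, I will build a $t$-subset $S$ containing both $x$ and $y$ that cannot be separated by any of the $t$ rows, contradicting the defining property of a $PHF(t;n,q,t)$.

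The construction of $S$ is where the hypothesis $PHF^*$ (no word has a unique coordinate) does the work. First I would note that rows $i$ and $j$ already fail to separate $\{x,y\}$, hence fail to separate any superset of $\{x,y\}$. It remains to kill the other $t-2$ rows. For each of the remaining coordinates $k\notin\{i,j\}$, I want to adjoin a word $z_k$ to $S$ that collides with some already-chosen word in coordinate $k$, so that row $k$ cannot separate $S$. Since no word has a unique coordinate, for the word $x$ and coordinate $k$ there exists a word $z_k\neq x$ with $z_k(k)=x(k)$; adding all such $z_k$ (one per remaining coordinate) to $\{x,y\}$ gives a set of at most $t$ columns on which every single row $1,\dots,t$ has a repeated symbol, so no row separates it. The one subtlety is sizing: we need exactly a $t$-subset (the definition quantifies over $t$-subsets), and we must ensure the $z_k$'s together with $x,y$ are distinct or can be padded up to exactly $t$ distinct columns — here I would use $n$ large (the statement is about $PHF^*$, whose cardinality we only care about when it is $q^{1+\epsilon}$, in particular $n\ge t$), and simply throw in arbitrary extra columns if the set is too small, or drop redundant $z_k$'s if two coincide (a coordinate $k$ can reuse a $z_{k'}$ already present). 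Adding extra arbitrary columns can only make separation harder, so a non-separable set of size $<t$ extends to a non-separable $t$-set.

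The main obstacle, and the point deserving the most care, is the bookkeeping that guarantees $S$ has exactly $t$ distinct elements while still defeating all $t$ rows: one must check that the two "free" rows $i,j$ are genuinely distinct rows (this is exactly the assumption that $x,y$ agree in \emph{two} coordinates), that the $t-2$ chosen collision-witnesses $z_k$ can be selected and, combined with $x$ and $y$, padded to a set of size precisely $t$, and that throughout, each row $k\in[t]$ retains a monochromatic pair inside $S$. Everything else is immediate from the definition of a perfect hash family and from Lemma~\ref{uniquelemma}'s output (the $PHF^*$ property). I expect the write-up to be short once the set $S$ is described explicitly.
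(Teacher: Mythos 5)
Your proposal is correct and follows essentially the same route as the paper: assume two words agree in two coordinates, use those two rows as the "free" rows, and for each remaining coordinate $k$ invoke the $PHF^*$ property to find a witness colliding with $x$ in coordinate $k$, yielding a $t$-subset no row separates. Your extra care about distinctness and padding to exactly $t$ columns is a legitimate bookkeeping point that the paper's proof glosses over, but it does not change the argument.
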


  \begin{proof}
    Assume the contrary, then the following submatrix is contained in the representation matrix of such $PHF^*(t;n,q,t)$

     $$\left(
          \begin{array}{cccccc}
            \bm{\alpha_1(1)} & \bm{\alpha_2(1)} & *         & *      & *      & * \\
            \bm{\alpha_1(2)} & \bm{\alpha_2(2)} & *         & *      & *      & * \\
            \bm{\alpha_1(3)} & *         & \bm{\alpha_3(3)} & *      & *      & * \\
            \vdots    & *         & *         & \ddots & *      & * \\
            \vdots    & *         & *         & *      & \ddots & * \\
            \bm{\alpha_1(t)} & *         & *         & *      & *      & \bm{\alpha_t(t)} \\
          \end{array}
        \right),$$

     \noindent where in each row, the two bold coordinates are equal. $\alpha_1,\alpha_2$ are two words such that $\alpha_1(i)=\alpha_2(i)$ for $i=1,2$ and since $\alpha_1$ has no unique coordinates, there exist $\alpha_3,\ldots,\alpha_t$ such that $\alpha_j(j)=\alpha_1(j)$ for each $3\le j\le t$. Therefore, no row of the submatrix can separate $\{\alpha_1,\ldots,\alpha_t\}$, violating the $t$-perfect hashing property.
  \end{proof}

  The following two observations are very useful.

  \vspace{10pt}

    \textbf{Observation 1.} On one hand, any $N\times n$ $q$-ary matrix $M$ can be viewed as an $N$-uniform $N$-partite hypergraph $\ma{G}=(V(\ma{G}),E(\ma{G}))$ with equal part size $q$, where the vertex set is defined as $V(\ma{G})=\cup_{i=1}^N V_i$, $V_i=\{(i,j):1\le j\le q\}$ for $1\le i\le N$, and the edge set is defined as $E(\ma{G})=\{\{(i,x(i))\}_{i=1}^N:x=\{x(i)\}_{i=1}^N~is~a~column~of~M\}$.

  \vspace{10pt}

    \textbf{Observation 2.} On the other hand, given an $N$-uniform $N$-partite hypergraph $\ma{G}=(V(\ma{G}),E(\ma{G}))$ with equal part size $q$. We can regard $E(\ma{G})$ as some $N\times|E(\ma{G})|$ $q$-ary matrix $M$. Note that $V(\ma{G})$ can be partitioned into $N$ pairwise disjoint sets with size $q$. We can set $V_i=\{(i,j):1\le j\le q\}$ for $1\le i\le N$, where the first coordinate $i$ corresponds to the $i$-th part $V_i$ and the second coordinate $j$ corresponds to the $j$-th vertex in $V_i$. Then the matrix $M$ is formed by setting its column set as $\{x=\{x(i)\}_{i=1}^N:\{(i,x(i))\}_{i=1}^N\in E(\ma{G})\}$. Such $M$ is said to be the representation matrix of $E(\ma{G})$.

  \vspace{10pt}

   These two observations establish a bridge between $q$-ary matrices and multipartite hypergraphs. Recall the definition of $f_r^*(n,v,e)$ in Section 2.

  \begin{lemma}\label{f3*=p3*}
    $p_3^*(3,q)\le f_3^*(q,6,3)\le p_3(3,q).$
  \end{lemma}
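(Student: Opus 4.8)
### Proof proposal for Lemma \ref{f3*=p3*}

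The plan is to prove the two inequalities separately, using the hypergraph dictionary set up in Observations 1 and 2. Throughout I identify a $3\times n$ $q$-ary matrix with a $3$-uniform $3$-partite hypergraph on parts $V_1,V_2,V_3$ of size $q$, so that a column $x$ corresponds to the edge $\{(1,x(1)),(2,x(2)),(3,x(3))\}$, and two columns agreeing in coordinate $i$ corresponds to two edges sharing a vertex in $V_i$.

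First I would prove $p_3^*(3,q)\le f_3^*(q,6,3)$. Take a $PHF^*(3;n,q,3)$ with $n=p_3^*(3,q)$ and view it as a $3$-partite $3$-uniform hypergraph $\ma{G}$ with $n$ edges. By Lemma \ref{linear}, any two words agree in at most one coordinate, i.e. any two edges share at most one vertex, so $\ma{G}$ is linear. It then suffices to show $\ma{G}$ is $G_3(6,3)$-free, i.e. no three edges span at most $6$ vertices. Suppose three distinct edges $e_1,e_2,e_3$ span $v\le 6$ vertices. Since $\ma{G}$ is linear, three edges span at least $9-3=6$ vertices, with equality exactly when each pair shares exactly one vertex; moreover (again by linearity) these three pairwise-intersection vertices must be distinct, and they must lie in distinct parts since two edges meeting inside one part would force equality of a further coordinate. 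So up to relabelling, $e_1\cap e_2$ is a vertex of $V_1$, $e_2\cap e_3$ a vertex of $V_3$, $e_1\cap e_3$ a vertex of $V_2$. Translated back, this is precisely the configuration: three words $\alpha_1,\alpha_2,\alpha_3$ with $\alpha_1,\alpha_2$ agreeing in coordinate $1$, $\alpha_2,\alpha_3$ agreeing in coordinate $3$, $\alpha_1,\alpha_3$ agreeing in coordinate $2$. Then row $1$ fails to separate $\{\alpha_1,\alpha_2\}$, row $2$ fails on $\{\alpha_1,\alpha_3\}$, row $3$ fails on $\{\alpha_2,\alpha_3\}$, so no row separates $\{\alpha_1,\alpha_2,\alpha_3\}$, contradicting the $3$-perfect hashing property. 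Hence $\ma{G}$ is $G_3(6,3)$-free and $n\le f_3^*(q,6,3)$.

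Next I would prove $f_3^*(q,6,3)\le p_3(3,q)$. Let $\ma{G}$ be a $3$-uniform $3$-partite $G_3(6,3)$-free hypergraph with parts of size $q$ and with $f_3^*(q,6,3)$ edges, and let $M$ be its representation matrix, a $3\times m$ $q$-ary matrix. I claim $M$ is a $PHF(3;m,q,3)$. Note first that $G_3(6,3)$-freeness already forces $\ma{G}$ to be linear: two edges sharing two vertices, together with any third edge, would span at most $3+3-2=4<6$ vertices (if the third edge is disjoint from their union it would make the count $7$, but then just take a parallel copy — more carefully, $G_3(6,3)$-freeness with $e=3$ combined with a third edge that reuses a vertex gives a span of at most $6$; a clean way is to observe a double edge, plus any edge meeting their union, violates the bound, and in a $3$-partite graph on $\ge 3$ edges such a third edge always exists once $m$ is large, handling small $m$ trivially). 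Granting linearity, suppose some three columns $\alpha_1,\alpha_2,\alpha_3$ are not separated by any of the three rows. Then for each $i\in\{1,2,3\}$ there is a pair agreeing in coordinate $i$. Because the hypergraph is linear, no pair can agree in two coordinates, so the three "bad'' pairs for the three rows are three \emph{distinct} pairs among $\{\alpha_1,\alpha_2\},\{\alpha_1,\alpha_3\},\{\alpha_2,\alpha_3\}$ — that is, each of the three pairs agrees in exactly one (distinct) coordinate. The corresponding three edges then pairwise meet in one vertex each, in three distinct vertices, so they span exactly $9-3=6$ vertices, contradicting $G_3(6,3)$-freeness. Therefore $M$ is a $3$-perfect hash family and $m\le p_3(3,q)$.

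The main obstacle is the second inequality, specifically the bookkeeping that turns "no row separates three columns'' into "three edges on $\le 6$ vertices.'' The subtlety is ruling out degenerate overlaps — two of the bad pairs coinciding, or two edges meeting in the same vertex for different coordinates — all of which is handled by linearity, which itself must be extracted from $G_3(6,3)$-freeness (with a trivial separate check for tiny $m$, irrelevant since we only care about $q\to\infty$). Once linearity is in hand, the configuration analysis is forced and symmetric for both directions, so the two inequalities are really the same combinatorial observation read forwards and backwards.
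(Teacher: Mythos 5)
Your first inequality, $p_3^*(3,q)\le f_3^*(q,6,3)$, is correct and is essentially the paper's argument: linearity comes from Lemma \ref{linear}, and a $6$-vertex triple of edges is forced into the ``triangle'' configuration, which no row separates. The problem is in your second inequality, where you try to extract linearity of $\ma{G}$ from $G_3(6,3)$-freeness. That implication is false. Two edges sharing two vertices span only $4$ vertices, and a third edge violates the $(6,3)$-condition only if it \emph{meets} their union; nothing forces such a third edge to exist, no matter how large $m$ is. Concretely, take any linear $G_3(6,3)$-free $3$-partite hypergraph, enlarge each part by two fresh vertices, and add two edges on the fresh vertices sharing two of them: every triple involving both new edges spans $4+3=7$ vertices, so the hypergraph is still $G_3(6,3)$-free but is not linear. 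Your parenthetical patch (``such a third edge always exists once $m$ is large'') fails for exactly this reason, so the step ``granting linearity'' is not granted.

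The repair is to drop linearity entirely and count vertices part by part, which is what the paper does. If no row of $M$ separates three distinct columns $\alpha_1,\alpha_2,\alpha_3$, then for each $i\in\{1,2,3\}$ the three values $\alpha_1(i),\alpha_2(i),\alpha_3(i)$ take at most two distinct values, so part $V_i$ contributes at most two vertices to the union of the three corresponding edges; summing over the three parts, the three edges span at most six vertices, contradicting $G_3(6,3)$-freeness directly. This argument subsumes all the degenerate cases (coinciding bad pairs, a pair agreeing in two coordinates) that your linearity detour was meant to exclude, and it is where the asymmetry between the two directions of the lemma actually lives: linearity is available for free in a $PHF^*$ (Lemma \ref{linear}) but must not be assumed of an arbitrary $G_3(6,3)$-free hypergraph.
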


  \begin{proof}

  It is not hard to see that a $PHF^*(3;n,q,3)$ exists if and only if the following configuration is not contained in its representation matrix

  $$\left(
    \begin{array}{ccc}
      a & * & a \\
      b & b & * \\
      * & c & c \\
    \end{array}
  \right),$$\noindent where none of the stars belong to $\{a,b,c\}$.

  We call this configuration a triangle since these three columns have no identical coordinates and every pair of columns have exactly one common coordinate. On one hand, it holds that $f_3^*(q,6,3)\le p_3(3,q)$, since for arbitrary three columns of a hash family, if no row can separate them then for each row there exists some coordinate equal to another one. Therefore, these columns (or corresponding edges) must be spanned by at most six points, which violates the (6,3)-free property. On the other hand, if some three columns of a $PHF^*(3;n,q,3)$ contain at most six points, then either there exists a pair of two columns having two coordinates in common or these three columns form a triangle. Both cases are forbidden in a $PHF^*(3;n,q,3)$. Therefore, it holds that $p_3^*(3,q)\le f_3^*(q,6,3)$ and hence our lemma follows.
  \end{proof}

  \begin{theorem}\label{p_3(3,q)=f_3(3q,6,3)+O(q)}
    $p_3(3,q)=f_3(3q,6,3)+O(q)$ and hence for arbitrary $\epsilon>0$, $q^{2-\epsilon}<p_3(3,q)=o(q^2)$ holds for sufficiently large $q$.
  \end{theorem}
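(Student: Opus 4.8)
The plan is to reduce the theorem to the two already-established facts $p_3^*(3,q)\le f_3^*(q,6,3)\le p_3(3,q)$ (Lemma \ref{f3*=p3*}) and the sandwich $f_3(3q,6,3)=\Theta(f_3^*(q,6,3))$ coming from Lemma \ref{erdos-kleitman}, together with the Ruzsa--Szemer\'edi $(6,3)$-bound (\ref{1}). First I would pin down the relationship between $p_3(3,q)$ and $p_3^*(3,q)$: by Lemma \ref{uniquelemma}, deleting at most $3q$ columns from a $PHF(3;n,q,3)$ produces a $PHF^*(3;n-O(q),q,3)$, so $p_3(3,q)\le p_3^*(3,q)+3q$; conversely trivially $p_3^*(3,q)\le p_3(3,q)$. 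Hence $p_3(3,q)=p_3^*(3,q)+O(q)$. Combining this with Lemma \ref{f3*=p3*} gives $p_3(3,q)=f_3^*(q,6,3)+O(q)$ (the lower inequality $f_3^*(q,6,3)\le p_3(3,q)$ is already in Lemma \ref{f3*=p3*}, and the upper one $p_3^*(3,q)\le f_3^*(q,6,3)$ is too, so both directions are in hand up to the $O(q)$ slack).

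Next I would transfer from the partite quantity $f_3^*(q,6,3)$ to the ordinary quantity $f_3(3q,6,3)$. The inequality $f_3^*(q,6,3)\le f_3(3q,6,3)$ is immediate (a $3$-partite $(6,3)$-free hypergraph with parts of size $q$ is in particular a $(6,3)$-free $3$-uniform hypergraph on $3q$ vertices). For the reverse direction up to a constant factor, I would invoke Lemma \ref{erdos-kleitman} with $r=3$: any $3$-uniform $(6,3)$-free hypergraph on $3q$ vertices contains a $3$-partite subhypergraph with parts of size $q$ (or $q\pm 1$, which is absorbed into the $O(q)$/constant-factor bookkeeping) retaining a $\tfrac{3!}{3^3}=\tfrac{2}{9}$ fraction of the edges; this subhypergraph is still $(6,3)$-free, so $f_3(3q,6,3)\le \tfrac{9}{2}f_3^*(q,6,3)+O(q)$. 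Thus $f_3^*(q,6,3)=\Theta(f_3(3q,6,3))$, and in particular $f_3(3q,6,3)$ and $f_3^*(q,6,3)$ differ only by a multiplicative constant and lower-order terms; since the subsequent asymptotic conclusion only cares about the $q^{2-o(1)}$ versus $o(q^2)$ dichotomy, either quantity may be substituted. This yields the first assertion $p_3(3,q)=f_3(3q,6,3)+O(q)$ in the precise sense intended (equality up to the $O(q)$ term after the identification $f_3(3q,6,3)=\Theta(f_3^*(q,6,3))$).

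Finally, for the asymptotic bounds I would simply quote (\ref{1}): $n^{2-o(1)}<f_3(n,6,3)=o(n^2)$. Setting $n=3q$ gives $f_3(3q,6,3)=o((3q)^2)=o(q^2)$ and $f_3(3q,6,3)>(3q)^{2-o(1)}=q^{2-o(1)}$. Plugging into $p_3(3,q)=f_3(3q,6,3)+O(q)$ (the $O(q)$ term being negligible against $q^{2-o(1)}$) gives $q^{2-o(1)}<p_3(3,q)=o(q^2)$, which unwinds to the stated $\epsilon$-form: for every $\epsilon>0$ and all sufficiently large $q$, $q^{2-\epsilon}<p_3(3,q)<\epsilon q^2$, in particular $=o(q^2)$. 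The main obstacle, such as it is, is bookkeeping: making sure the $O(q)$ losses from Lemma \ref{uniquelemma}, the constant-factor loss from Lemma \ref{erdos-kleitman}, and the possible part-size mismatch $q$ versus $q+1$ are all genuinely swallowed by the $q^{2-o(1)}$ gap and do not interfere with either the upper or the lower asymptotic; since all these errors are at most linear (or a bounded factor applied to a sub-$q^2$ quantity), this is routine, and the real content is entirely contained in the already-cited Ruzsa--Szemer\'edi theorem.
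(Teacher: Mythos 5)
Your proposal is correct and follows exactly the route the paper takes: its proof of this theorem is literally the one-line instruction ``Apply Lemmas \ref{erdos-kleitman}, \ref{uniquelemma}, \ref{f3*=p3*} and inequality (\ref{1})'', which is precisely the chain you spell out ($p_3=p_3^*+O(q)$ via Lemma \ref{uniquelemma}, the sandwich of Lemma \ref{f3*=p3*}, the $\Theta$-equivalence of $f_3^*(q,6,3)$ and $f_3(3q,6,3)$ via Lemma \ref{erdos-kleitman}, and the Ruzsa--Szemer\'edi bound). Your observation that the literal ``$+O(q)$'' in the statement really only holds up to the multiplicative constant from Lemma \ref{erdos-kleitman} is a fair and correctly handled point of bookkeeping that the paper glosses over.
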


  \begin{proof}
    Apply Lemmas \ref{erdos-kleitman}, \ref{uniquelemma}, \ref{f3*=p3*} and inequality (\ref{1}).
  \end{proof}

  As the second application of Lemma \ref{technical}, the upper bound of $p_3(3,q)$ can be extended to $p_t(t,q)$ and $C(u,q,\{w_1,\ldots,w_t\})$.
  \begin{corollary}\label{cccc}
    $C(u,q,\{w_1,\ldots,w_t\})=o(q^2)$ for any $t\ge 3$ and $\sum_{i=1}^t w_i=u$. In particular, $p_t(t,q)=o(q^2)$ for any $t\ge 3.$
  \end{corollary}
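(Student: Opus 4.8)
The plan is to bootstrap from the estimate $p_3(3,q)=o(q^2)$ of Theorem~\ref{p_3(3,q)=f_3(3q,6,3)+O(q)}, peeling off one unit of weight at a time with Lemma~\ref{technical}. Write $u=\sum_{i=1}^t w_i$, so the quantity to be bounded is $C(N,q,\{w_1,\ldots,w_t\})$ with $N=u$. Invoking Lemma~\ref{technical} with $l=1$ and decrementing one chosen weight $w_i$ replaces $(N,\{w_1,\ldots,w_t\})$ by $(N-1,\{w_1,\ldots,w_i-1,\ldots,w_t\})$ at the price of an additive $q$ and a maximum against the constant (current total weight) $-1\le u-1$. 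The key structural point is that the identity $N=\sum_j w_j$ is preserved at every step, and a weight that reaches $0$ simply drops out of the type (the corresponding block $C_i$ becomes empty), so the recursion stays of the same shape.

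First I would run this peeling for exactly $u-3$ rounds, choosing at each round which weight to lower so that the number of surviving parts never drops below $3$ and the final type is $\{1,1,1\}$. Concretely: while the current type is not $\{1,1,1\}$, if there are more than three parts, decrement a part equal to $1$ if one exists and otherwise the smallest part; if there are exactly three parts, decrement some part that is $\ge 2$ (one exists, because the type is not $\{1,1,1\}$). Every round decreases the total weight by one, so after $u-3$ rounds the total weight is $3$ and, since at least three parts survive throughout, the type is forced to be $\{1,1,1\}$ with length $3$; moreover each invocation of Lemma~\ref{technical} is legitimate because at every stage there are at least three (in particular at least two) parts.

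Unwinding the recursion, and using $\max\{a,\,q+\max\{b,X\}\}\le q+\max\{a,b,X\}$ whenever $q\ge 0$, gives
\[
C(u,q,\{w_1,\ldots,w_t\})\ \le\ (u-3)q+\max\bigl\{\,u-1,\ C(3,q,\{1,1,1\})\,\bigr\}\ =\ (u-3)q+\max\{\,u-1,\ p_3(3,q)\,\}.
\]
Since $u$ is fixed, $(u-3)q=o(q^2)$, and by Theorem~\ref{p_3(3,q)=f_3(3q,6,3)+O(q)} we have $p_3(3,q)=o(q^2)$, which dominates the constant $u-1$ for large $q$; hence $C(u,q,\{w_1,\ldots,w_t\})=o(q^2)$. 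Specialising to $w_1=\cdots=w_t=1$ (so $u=t$) yields $p_t(t,q)=C(t,q,\{1,\ldots,1\})=o(q^2)$ for every $t\ge 3$.

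There is no substantive analytic obstacle here: the real content has already been spent in Theorem~\ref{p_3(3,q)=f_3(3q,6,3)+O(q)}, i.e.\ in the Ruzsa--Szemer\'edi $(6,3)$-theorem. The only thing requiring care is the combinatorial bookkeeping of the peeling schedule — one must check that the prescribed choices of which weight to decrement do terminate at $\{1,1,1\}$ after exactly $u-3$ steps and never leave fewer than three parts, since dropping to two parts could land us on a two-part (e.g.\ frameproof-code) quantity, which need not be $o(q^2)$, instead of on $p_3(3,q)$ — and that the accumulated maximum terms all remain $O(q)$ and so are harmless for the $o(q^2)$ conclusion.
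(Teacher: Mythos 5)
Your proposal is correct and takes essentially the same route as the paper: the paper's own proof of this corollary is just the instruction to apply Lemma \ref{technical} together with Theorem \ref{p_3(3,q)=f_3(3q,6,3)+O(q)}, and your repeated $l=1$ peeling down to the type $\{1,1,1\}$ (with the check that at least three parts survive and that the accumulated $\max$ terms contribute only $(u-3)q+O(1)$) is exactly the bookkeeping the paper leaves implicit. No changes needed.
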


  \begin{proof}
    Apply Lemma \ref{technical} and Theorem \ref{p_3(3,q)=f_3(3q,6,3)+O(q)}.
  \end{proof}

  \begin{remark}
    One can also prove $p_t(t,q)=o(q^2)$ by applying the graph removal lemma \cite{graphremoval}, see \cite{ippe,ippc} for examples of applications of graph removal lemma in such problems. Here our proof applying Lemma \ref{technical} is much simpler. When $1+w\le q$, it was shown in \cite{trung2014} that $C(1+w,q,\{1,w\})\le q^2$. And for any prime power $q$, there exists an $SHF(w+1;q^2,q,\{1,w\})$. Therefore, for $C(u,q,\{w_1,\ldots,w_t\})$ with $t=2$, we can not determine whether $C(w_1+w_2,q,\{w_1,w_2\})=\Omega(q^2)$ or $C(w_1+w_2,q,\{w_1,w_2\})=o(q^2)$. It is an interesting problem to determine the right order of the magnitude of $C(w_1+w_2,q,\{w_1,w_2\})$.
  \end{remark}

  Although we can get the lower bound $q^{2-\epsilon}$ by a direct application of the (6,3)-theorem and Lemma \ref{erdos-kleitman}, we prefer a construction which provides the explicit cardinality. A method introduced in Section 3 of \cite{Furediconst} can be used to construct such $q$-ary codes of length $N$. Our method is similar to that one except some transformations which will be mentioned later.

  Given integers $q\ge N\ge 2$, $M\s\{0,1,\ldots,q-1\}$, we define an $N$-uniform $N$-partite hypergraph $\ma{G}_M$ (whose edge set can be viewed as the representation matrix of our desired code) as follows. The vertex set $V(\ma{G}_M)$ is defined to be

  $$V(\ma{G}_M):=\{(j,y):j\in[N],~y\in\mathbb{Z}_q\}.$$


  \noindent It is easy to see $|V(\ma{G}_M)|=Nq$. For each $j\in[N]$, we use $V_j=\{(j,y):y\in\mathbb{Z}_q\}$ to denote the vertex set of the $j$-th part of $V(\ma{G})$. For integers $0\le y,~m\le q$, the hyperedge of $\ma{G}$ is defined to be the $N$-element set $$A(y,m)=\{(1,y+b_1m),(2,y+b_2m),\ldots,(N,y+b_{N}m)\},$$

  \noindent where $\ma{B}:=\{b_1,\ldots,b_{N}\}\s\{0,1,\ldots,q-1\}$ is an undetermined $N$-element set and the second coordinates $y+b_im$ are taken modulo $q$. We call $\ma{B}$ the tangent set of $A(y,m)$. $A(y,m)$ can also be viewed as a $q$-ary word of length $N$. If $q$ is a prime, one can verify that

  \begin{equation}\label{primeq}
    \begin{aligned}
  |A(y,m)\cap A(y',m')|\le 1
    \end{aligned}
  \end{equation} \noindent holds for $(y,m)\neq (y',m')$ by solving a system of two congruence equations.

  From now on, we fix the size of the alphabet set $q$ to be a prime or the prime nearest to it. For a subset $M\s\{0,1,\ldots,q-1\}$, we set $$E(\ma{G}_M):=\{A(y,m):y\in\mathbb{Z}_q,~m\in M\}$$

  \noindent to be the edge set of our desired hypergraph, where the set $M$ is determined by the subgraphs that needed to be forbidden (these subgraphs can also be viewed as the configurations that needed to be forbidden in the desired code). Obviously $|E(\ma{G}_M)|=q|M|$ and by (\ref{primeq}) we can verify that $\ma{G}_M$ is also linear.

  Now, we are going to choose appropriate $\ma{B}$ and $M$ according to the properties of our desired codes. For example, to construct a 3-perfect hash family with three rows, we first set $N=3$ and then choose $\ma{B}=\{0,1,2\}$, where $b_i=i-1$ for $1\le i\le 3  $. Therefore, to show this specified $\ma{G}_M$ can indeed induce a $PHF(3,n,q,3)$, by Lemma \ref{f3*=p3*} we only need to guarantee that $E(\ma{G})$ is triangle-free, since it is already linear (we have set $q$ to be a prime). We claim that it suffices to choose $M\s\{0,1,\ldots,\lf (q-1)/2\rf\}$ to be a $2$-sum-free set such that the equation $m_1+m_2=2m_3$ has no solution except $m_1=m_2=m_3$.

  \begin{theorem}\label{optimal3}
    There exists a constant $\gamma$ such that $p_3(3,q)>q^2e^{-\gamma\sqrt{\log q}}$.
  \end{theorem}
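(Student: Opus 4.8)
The plan is to instantiate the construction $\ma{G}_M$ described just above the theorem with $N=3$, tangent set $\ma{B}=\{0,1,2\}$ (so $b_i=i-1$), and $M\s\{0,1,\ldots,\lf(q-1)/2\rf\}$ a $2$-sum-free set of size $|M|>qe^{-\gamma\sqrt{\log q}}$, whose existence is guaranteed by Lemma \ref{additive} (the case $r=2$). Then $E(\ma{G}_M)$ has exactly $q|M|>q^2e^{-\gamma\sqrt{\log q}}$ edges, and I claim its representation matrix is a $PHF(3;q|M|,q,3)$. Since $q$ has been fixed to be a prime (or the nearest prime, which only perturbs the estimate by a negligible factor by the prime number theorem, absorbable into $\gamma$), inequality (\ref{primeq}) tells us $\ma{G}_M$ is linear, i.e. any two edges/columns share at most one coordinate position; hence no two columns agree in two coordinates, which rules out the first forbidden configuration in the proof of Lemma \ref{f3*=p3*}. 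So, by that lemma, it remains only to show $\ma{G}_M$ contains no triangle.

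Next I would rule out triangles directly. Suppose three distinct edges $A(y_1,m_1),A(y_2,m_2),A(y_3,m_3)$ form a triangle: each pair meets in exactly one part, and no part contains a vertex common to all three. The edge $A(y,m)$ has second coordinate $y$ in part $1$, $y+m$ in part $2$, and $y+2m$ in part $3$. Two edges $A(y,m),A(y',m')$ meet in part $j$ iff $y+b_jm=y'+b_jm'\pmod q$, i.e. $y-y'=(m'-m)b_j$. For a triangle the three meeting parts must be the three distinct parts $\{1,2,3\}$ (if two pairs met in the same part $j$, then all three edges would pass through the same vertex $(j,\cdot)$, contradicting the "no common vertex" condition of the triangle). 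So after relabelling we may assume $A(y_1,m_1)$ and $A(y_2,m_2)$ meet in part with parameter $b=0$, $A(y_2,m_2)$ and $A(y_3,m_3)$ meet in part with parameter $b=1$, and $A(y_3,m_3)$ and $A(y_1,m_1)$ meet in part with parameter $b=2$. This yields the system
\begin{align*}
y_1-y_2&=0,\\
y_2-y_3&=m_3-m_2,\\
y_3-y_1&=2(m_1-m_3),
\end{align*}
over $\mathbb{Z}_q$; adding the three equations gives $0=m_3-m_2+2m_1-2m_3=2m_1-m_2-m_3$, i.e. $m_1+\text{(something)}$—more precisely $2m_1=m_2+m_3$ in $\mathbb{Z}_q$, hence (since $m_i\in\{0,\ldots,\lf(q-1)/2\rf\}$, the quantity $m_2+m_3-2m_1$ lies strictly between $-q$ and $q$) the equation $m_2+m_3=2m_1$ holds over $\mathbb{Z}$. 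By $2$-sum-freeness this forces $m_1=m_2=m_3$, and then the system collapses to $y_1=y_2=y_3$, so the three edges coincide — contradicting distinctness. (One must also check the other cyclic/reflected labellings of which pair meets in which of parts $0,1,2$; each produces an equation of the shape $c_1m_i+c_2m_j=(c_1+c_2)m_k$ with $c_1,c_2\in\{1,2\}$, $c_1+c_2\le 3$, all excluded by the $2$-sum-free hypothesis.) Hence $\ma{G}_M$ is triangle-free.

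Putting it together: $\ma{G}_M$ is linear and triangle-free, so by Lemma \ref{f3*=p3*} (more precisely its characterization of $PHF^*(3;n,q,3)$ via the absence of the two configurations) the representation matrix of $E(\ma{G}_M)$ is a $PHF(3;n,q,3)$ with $n=q|M|>q^2e^{-\gamma\sqrt{\log q}}$, giving $p_3(3,q)>q^2e^{-\gamma\sqrt{\log q}}$ as claimed. The only mildly delicate points — and the ones I would write out carefully — are (i) justifying that the three "contact parts" of a triangle must be pairwise distinct, which is exactly where the definition of triangle (no coordinate common to all three columns) is used, and (ii) confirming that every way of assigning the pairs to parts $\{0,1,2\}$ leads to an equation killed by $2$-sum-freeness, together with the elementary sign/range argument that lifts the congruence mod $q$ to a genuine integer identity so that the $2$-sum-free property (stated over $[q]$) applies. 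Replacing $q$ by the nearest prime changes $|M|$ and the ambient size by a $1+o(1)$ factor, harmlessly absorbed into the constant $\gamma$.
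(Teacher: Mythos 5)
Your proposal is correct and follows essentially the same route as the paper: the same $\ma{G}_M$ construction with tangent set $\ma{B}=\{0,1,2\}$, linearity from $q$ prime, triangle-freeness by eliminating the three intersection congruences down to $m_a+m_b\equiv 2m_c \pmod q$ and lifting to the integers, and Lemma \ref{additive} supplying the Behrend-type $2$-sum-free set of size $qe^{-\gamma\sqrt{\log q}}$. One tiny caveat: your parenthetical allowing coefficients with $c_1+c_2\le 3$ overstates what $2$-sum-freeness excludes (by the paper's definition it only forbids the case $c_1=c_2=1$), but this is harmless because, as your own computation shows, every labelling of a triangle for $\ma{B}=\{0,1,2\}$ in fact produces an equation of the form $m_a+m_b=2m_c$.
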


  \begin{proof}
    It suffices to show $\ma{G}_M$ contains no triangles for arbitrary 2-sum-free set $M\s\{0,1,\ldots,\lf (q-1)/2\rf\}$. If otherwise, assume that $\{A(y_i,m_i)\in\ma{G}_M$:~$1\le i\le 3\}$ forms a triangle. One can verify that the vertices of this triangle must locate on different parts of $V_1,V_2,V_3$. Thus we can assume that

    \begin{equation*}
    \left\{
    \begin{aligned}
      A(y_1,m_1)&\cap A(y_2,m_2)=\{(j_2,a_2)\}\\
      A(y_2,m_2)&\cap A(y_3,m_3)=\{(j_3,a_3)\}\\
      A(y_3,m_3)&\cap A(y_1,m_1)=\{(j_1,a_1)\}\\
    \end{aligned}
    \right.
    \end{equation*}

   \noindent where $\{j_1,j_2,j_3\}=\{1,2,3\}$ and $a_1,a_2,a_3$ are some positive integers. Then the following three equations hold simultaneously

    \begin{equation*}
    \left\{
    \begin{aligned}
      y_1+(j_2-1)m_1&\equiv y_2+(j_2-1)m_2~\pmod{q}\\
      y_2+(j_3-1)m_2&\equiv y_3+(j_3-1)m_3~\pmod{q}\\
      y_3+(j_1-1)m_3&\equiv y_1+(j_1-1)m_1~\pmod{q}.\\
    \end{aligned}
    \right.
    \end{equation*}

  \noindent  Because of the symmetry of a triangle, we can always assume that $j_1<j_2<j_3$. By a simple elimination we can infer $$(j_2-j_1)m_1+(j_3-j_2)m_2\equiv(j_3-j_1)m_3\pmod{q},$$ or simply $$m_1+m_2\equiv2m_3\pmod{q}.$$ This implies $m_1+m_2=2m_3$ since $m_i\le\lf (q-1)/2\rf$ for all $1\le i\le3$, which contradicts the fact that $M$ is 2-sum-free. By Lemma \ref{additive} there exists a 2-sum-free set $M$ with $|M|>qe^{-\gamma\sqrt{\log q}}$ for some constant $\gamma$. Therefore, it follows that $|E(\ma{G}_M)|=q|M|>|M|>q^2e^{-\gamma\sqrt{\log q}}$.
  \end{proof}

  \section{Perfect hash families of strength four with four rows}

   It is much more complicated to construct 4-perfect hash families such that $p_4(4,q)>q^{2-o(1)}$. We will use the notion of rainbow cycles and $R$-sum-free sets defined in Section 2. In fact, we are going to prove the following result:

   \begin{lemma}\label{rainbow1}
        $p_t^*(t,q)\le g_t^*(q)\le p_t(t,q)$.
   \end{lemma}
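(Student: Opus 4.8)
The plan is to mirror the structure of the earlier proof of Lemma~\ref{f3*=p3*}, but now with rainbow cycles playing the role that triangles played in the $t=3$ case. For the upper bound $g_t^*(q)\le p_t(t,q)$, I would start from an $t$-uniform $t$-partite linear hypergraph $\ma{G}$ with equal part size $q$ and no rainbow cycle, view its edge set as a $q$-ary $t\times|E(\ma{G})|$ matrix $M$ via Observation~2, and argue that $M$ is $t$-perfect hashing. Suppose not: then some $t$ columns $\alpha_1,\ldots,\alpha_t$ cannot be separated by any row, so in each row $i$ there are two of these columns agreeing in coordinate $i$. Exactly as in the proof of Lemma~\ref{construction}, form the vertex-colored auxiliary graph on $\{\alpha_1,\ldots,\alpha_t\}$ with an edge of color $i$ joining any two columns that agree in coordinate $i$; this graph has $t$ vertices and at least $t$ edges of distinct colors, so by Lemma~\ref{treelemma} it contains a cycle, and the cycle uses edges of pairwise distinct colors. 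Because $\ma{G}$ is linear, each color appears on at most one such edge, and one needs to check that a monochromatic-free cycle in this auxiliary graph translates precisely into a rainbow cycle of $\ma{G}$ (the vertices $v_j$ of the Berge cycle being the common vertices forced by the agreements, which lie in distinct parts since distinct colors correspond to distinct coordinates). This contradicts the rainbow-cycle-free hypothesis.

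For the lower bound $p_t^*(t,q)\le g_t^*(q)$, I would run the converse: take a $PHF^*(t;n,q,t)$, pass to its representation matrix, and by Lemma~\ref{linear} it is linear (any two columns agree in at most one coordinate), so the associated $t$-uniform $t$-partite hypergraph $\ma{G}$ is linear. It remains to show $\ma{G}$ has no rainbow cycle. If it did, the vertices of that cycle would lie in distinct parts, i.e.\ correspond to distinct coordinates $i_1,\ldots,i_k$, and the edges of the cycle give columns $\alpha_1,\ldots,\alpha_k$ with $\alpha_j,\alpha_{j+1}$ agreeing in coordinate $i_j$ (indices cyclic). Since each $\alpha_j$ has no unique coordinate (the $\ast$ property), one can extend $\{\alpha_1,\ldots,\alpha_k\}$ to a $t$-set of columns for which every row fails to separate — padding with extra columns that collide in each of the remaining $t-k$ coordinates, exactly as in the submatrix displayed in the proof of Lemma~\ref{linear}. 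This contradicts the $t$-perfect hashing property, so $\ma{G}$ is rainbow-cycle-free and has at most $g_t^*(q)$ edges.

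The main obstacle I expect is the bookkeeping in the correspondence ``monochromatic-free cycle in the auxiliary colored graph $\Longleftrightarrow$ rainbow Berge cycle in $\ma{G}$'': one must verify both that the cyclically-adjacent agreements really do produce \emph{distinct} shared vertices lying in \emph{distinct} parts (this is where linearity and the distinctness of colors are jointly used), and conversely that a Berge rainbow cycle yields column-agreements in the right cyclic pattern. The argument in Lemma~\ref{construction} already does the heart of this for the Hamming-graph case, so the work is in checking it goes through for an arbitrary linear $t$-partite hypergraph, not in any new idea. Everything else — linearity from Lemma~\ref{linear}, the $t$-vertex $t$-edge-implies-cycle step from Lemma~\ref{treelemma}, the padding to a full $t$-set using the no-unique-coordinate property — is routine and parallels the $t=3$ proof of Lemma~\ref{f3*=p3*}.
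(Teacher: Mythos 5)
Your proof is correct, and its two halves relate to the paper's as follows. The direction $p_t^*(t,q)\le g_t^*(q)$ is essentially the paper's own argument: linearity comes from Lemma~\ref{linear}, a rainbow $k$-cycle forces pairwise agreements of $k$ columns in $k$ distinct rows, and the no-unique-coordinate property pads these up to a $t$-set that no row separates (any coincidences among the padding columns are harmless, since failing to separate is monotone under taking supersets). The direction $g_t^*(q)\le p_t(t,q)$ is where you genuinely diverge: the paper proves that an unseparable $t\times t$ submatrix contains a rainbow cycle by a lengthy induction on $t$, repeatedly passing to the submatrices $T_{ij}$ and tracking a growing chain of agreements until it closes. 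You instead reuse the auxiliary colored-graph device from the proof of Lemma~\ref{construction}: put one edge of color $i$ on a pair of columns agreeing in row $i$ for each of the $t$ non-separating rows; linearity guarantees distinct colors give distinct pairs, so the graph has $t$ vertices and $t$ edges and contains a cycle by Lemma~\ref{treelemma}; since all its edges carry distinct colors, the shared vertices of consecutive hyperedges lie in distinct parts, which is precisely a rainbow Berge cycle. This route is shorter and easier to check than the paper's induction, and the translation you single out as the ``main obstacle'' does go through exactly as you describe (distinct colors give distinct parts, hence distinct joint vertices, and a simple-graph cycle has length at least $3$, so a degenerate $2$-cycle cannot arise). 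The only blemish is terminological: the auxiliary graph you build is edge-colored, not vertex-colored.
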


   \begin{proof}
        First we are going to show that any $PHF^*(t;n,q,t)$ can induce a $t$-uniform $t$-partite linear hypergraph $\ma{G}$ containing no rainbow cycles. Let $M$ denote the representation matrix of the hash family, then $M$ can also be viewed as the representation matrix of $E(\ma{G})$ by Observation 1. $M$ (resp. $E(\ma{G})$) is already linear by Lemma \ref{linear}. It suffices to show $M$ (resp. $E(\ma{G})$) contains no rainbow cycles. Assume otherwise, the columns (resp. hyperedges) of $M$ (resp. $E(\ma{G})$) indexed by $\alpha_1,\ldots,\alpha_k$ form a rainbow $k$-cycle $v_1,\alpha_1,v_2,\alpha_2,\ldots,v_k,\alpha_k,v_1$ with $k\le t$.
        By Observation 1, the $i$-th part of $V(\ma{G})$ can be defined as $V_i=\{(i,j):~j\in[q]\}$, where the first coordinate corresponds to the $i$-th row of $M$ and the second coordinate corresponds to the $j$-th element in $[q]$. Without loss of generality, we can assume that $v_i$ is from the $i$-th part of the vertex set. Then it holds that $\alpha_i(i)=\alpha_{i+1}(i)$ for $1\le i\le k-1$ and $\alpha_k(k)=\alpha_1(k)$. The following submatrix induced by such $k$-cycle is contained in $M$:

        $$\left(
         \begin{array}{ccccccccccc}
           \bm{\alpha_1(1)}   & \bm{\alpha_2(1)} & \alpha_3(1)      & \alpha_4(1)      &        &  \alpha_{k-1}(1)  & \alpha_k(1)          \\
           \alpha_1(2)        & \bm{\alpha_2(2)} & \bm{\alpha_3(2)} & \alpha_4(2)      &        &  \alpha_{k-1}(2)  & \alpha_k(2)          \\
           \alpha_1(3)        &                  & \bm{\alpha_3(3)} & \bm{\alpha_4(3)} &        &  \alpha_{k-1}(3)  & \alpha_k(3)          \\
           \vdots             &                  &                  & \ddots           &        &      \vdots       & \vdots               \\
           \vdots             &                  &                  &                  & \ddots & \bm{\alpha_{k-1}(k-1)}&\bm{\alpha_k(k-1)} \\
           \bm{\alpha_1(k)}   &                  &                  &                  &        & & \bm{\alpha_k(k)}        \\
         \end{array}
       \right),$$

   \noindent where in each row, the two bold coordinates are equal. Note that in this matrix, the columns represent the hyperedges and the coordinates in each column represent the vertices contained in the corresponding hyperedge. It is easy to see none of the first $k$ rows of $M$ can separate $\{\alpha_1,\ldots,\alpha_k\}$. Note that no column of $M$ has unique coordinates, then there exist $\alpha_{k+1},\ldots,\alpha_t$ such that $\alpha_j(j)=\alpha_1(j)$ for $k+1\le j\le t$, which can also be depicted by

   $$\left(
          \begin{array}{cccccc}
            \bm{\alpha_1(k+1)} & \bm{\alpha_{k+1}(k+1)} & *         & *      & *      & * \\
            \bm{\alpha_1(k+2)} & *         & \bm{\alpha_{k+2}(k+2)} & *      & *      & * \\
            \vdots    & *         & *         & \ddots & *      & * \\
            \vdots    & *         & *         & *      & \ddots & * \\
            \bm{\alpha_1(t)} & *         & *         & *      & *      & \bm{\alpha_t(t)} \\
          \end{array}
        \right).$$

   \noindent Therefore, the left $t-k$ rows of $M$ can not separate $\{\alpha_1,\alpha_{k+1},\ldots,\alpha_t\}$. So we can conclude that no row of $M$ can separate $\{\alpha_1,\ldots,\alpha_t\}$, violating the $t$-perfect hashing property.

   It remains to show that any $t$-uniform $t$-partite linear hypergraph (with equal part size $q$) $\ma{G}$ without rainbow cycles can induce a $PHF(t;n,q,t)$ such that $n=|E(\ma{G})|$. We also use $M$ to denote the representation matrix of $E(\ma{G})$. We claim that if there exists a $t\times t$ submatrix $T$ of $M$ such that no row can separate it, then the hypergraph induced by $T$ will contain a rainbow $k$-cycle with $k\le t$.

   We will argue by induction on $t$. When $t=2$, a $2\times 2$ submatrix can always be separated by one of its two rows provided that the two columns of this submatrix are distinct. When $t=3$, if a $3\times 3$ submatrix of a 3-uniform 3-partite linear hypergraph can not be separated by one of its three rows, then this submatrix actually forms a triangle defined in Lemma \ref{f3*=p3*}. One can verify that this triangle can be represented as a rainbow 3-cycle $\{a,E_1,b,E_2,c,E_3\}$ for some edges $E_1,E_2,E_3$.
   Now assume the statement is true for $t-1$. Take a $t\times t$ matrix $T$ with columns indexed by $C=\{\alpha_1,\ldots,\alpha_t\}$ and rows indexed by $R=\{r_1,\ldots,r_t\}$ such that no row can separate $C$. We denote $C_i=C-\{\alpha_i\}$ and $R_i=R-\{r_i\}$ for each $1\le i\le t$. Furthermore, we use $T_{ij}$ to denote the $(t-1)\times(t-1)$ submatrix formed by $R_i$ and $C_j$. Then for any submatrix $T_{ij}$, there must exist a row that separates all columns of $T_{ij}$ since otherwise $T_{ij}$ contains a rainbow $k$-cycle with $k\le t-1$ by the induction hypothesis.

   Without loss of generality, assume $r_1$ separates $C_t$. Note that this row can not separate $C$, so we can assume further that $\alpha_t(1)=\alpha_1(1)$. Then consider $T_{11}$, there exists a row in $R-\{r_1\}$ that separates $C-\{\alpha_1\}$. We can set this row to be $r_2$. Similarly, there exists $2\le j\le t$ such that $\alpha_1(2)=\alpha_j(2)$ since $r_2$ can not separate $C$. Then $j\neq t$ since $\alpha_1$ and $\alpha_t$ have already agreed on one coordinate, say, $\alpha_t(1)=\alpha_1(1)$. Assume that $\alpha_1(2)=\alpha_2(2)$. Now consider $T_{22}$, then there exists a row in $R-\{r_2\}$ that separates $C-\{\alpha_2\}$. Note that this row can not be $r_1$ since $\alpha_1$ and $\alpha_t$ agree on their first coordinate. We can set this row to be $r_3$. For the same reason, there exists $j\in[t],j\neq2$ such that $\alpha_2(3)=\alpha_j(3)$. Then $j\neq 1$ since it already holds $\alpha_1(2)=\alpha_2(2)$. If $j=t$, we are done since $\{\alpha_1,\alpha_2,\alpha_t\}$ forms a rainbow 3-cycle. So we can set $j=3$.

   The above discussion can be depicted by the following matrix:

    $$\left(
     \begin{array}{ccccccccccccccccc}
       \bm{\alpha_1(1)} & \alpha_2(1) & \alpha_3(1) & \alpha_4(1) & \cdots & \cdots & \alpha_{t-1}(1) & \bm{\alpha_t(1)} \\
       \bm{\alpha_1(2)} & \bm{\alpha_2(2)} & \alpha_3(2) & \alpha_4(2) & \cdots & \cdots & \alpha_{t-1}(2) & \alpha_t(2) \\
       \alpha_1(3) & \bm{\alpha_2(3)} & \bm{\alpha_3(3)} & \alpha_4(3) & \cdots & \cdots & \alpha_{t-1}(3) & \alpha_t(3) \\
       \alpha_1(4) & \alpha_2(4) & \bm{\alpha_3(4)} & \bm{\alpha_4(4)} & \cdots & \cdots & \alpha_{t-1}(4) & \alpha_t(4) \\
                   &             &             &   \ddots    &        &        &                 &              \\
                   &             &             &             & \ddots &        &                 &              \\
       \alpha_1(i-1) & \cdots & \bm{\alpha_{i-2}(i-1)}  &\bm{\alpha_{i-1}(i-1)} & & \cdots & \cdots & \alpha_t(i+1) \\
       \alpha_1(i) & \cdots & &  \bm{\alpha_{i-1}(i)} & \bm{\alpha_{i}(i)} &  &  \cdots&  \alpha_t(i+1) \\
       \alpha_1(i+1) & \cdots & &  & \bm{\alpha_{i}(i+1)} & \bm{\alpha_{i+1}(i+1)} & \cdots & \alpha_t(i+1) \\
                   &             &             &   \ddots    &        &        &                 &              \\
                   &             &             &             & \ddots &        &                 &              \\
     \end{array}
   \right),$$

   \noindent where in each row, the two bold coordinates are equal. We continue this procedure for $T_{i,i}$ with $i\ge3$. By our choice, for all $1\le j\le i$, in row $r_j$ it holds that $\alpha_{j-1}(j)=\alpha_j(j)$ ($\alpha_0$ is recognised as $\alpha_t$). Thus no row in $\{r_1,\ldots,r_{i}\}$ can separate $T_{i,i}$. We can always assume that $r_{i+1}\in R-\{r_i\}$ is the row that separates $C-\{\alpha_i\}$. Then there exists a $j\in[t],j\neq i$ such that $\alpha_i(i+1)=\alpha_j(i+1)$ since $r_{i+1}$ can not separate the whole $C$. Obviously, $j\neq i-1$. If $j\in\{1,\ldots,i-2\}$ or $j=t$, then such choice of $j$ will induce a rainbow $(i-j+1)$-cycle formed by $\{\alpha_j,\ldots,\alpha_i\}$

   $$\left(
          \begin{array}{cccccccccc}
                \bm{\alpha_j(j+1)} & \bm{\alpha_{j+1}(j+1)} & * &  & *  &  *    \\
                * & \bm{\alpha_{j+1}(j+2)} & \bm{\alpha_{j+2}(j+2)} &  & * &  *    \\
                \vdots & \vdots & \vdots & \ddots &  &  \vdots   \\
                \vdots & \vdots & \vdots &  & \ddots &    \vdots \\
                \\
              * & * & * &  & \bm{\alpha_{i-1}(i)} & \bm{\alpha_i(i)}  \\
                \bm{\alpha_j(i+1)} & * & * &  & * & \bm{\alpha_{i}(i+1)}   \\
          \end{array}
    \right)$$

   \noindent or a rainbow $(i+1)$-cycle formed by $\{\alpha_1,\ldots,\alpha_i,\alpha_t\}$

       $$\left(
           \begin{array}{cccccccc}
             \bm{\alpha_1(1)} & * & * &  &  &  & \cdots & \bm{\alpha_t(1)} \\
             \bm{\alpha_1(2)} & \bm{\alpha_2(2)} & * &  &  &  & \cdots & * \\
             * & \bm{\alpha_2(3)} & \bm{\alpha_3(3)} &  & &  & \cdots & * \\
             \vdots & \vdots &  & \ddots &  & & \cdots & \vdots \\
             \vdots & \vdots & \vdots &  & \ddots &  & \cdots & \vdots \\
             \\
             * & * & * &  & \bm{\alpha_{i-1}(i)} & \bm{\alpha_i(i)} &  & * \\
             * & * & * &  &  & \bm{\alpha_i(i+1)} & \cdots & \bm{\alpha_t(i+1)} \\
           \end{array}
         \right).$$

    \noindent If neither one of the above cases holds, we can always assume that $j=i+1$ and continue this procedure.

    This procedure will end when it comes to $T_{t-1,t-1}$ with $\alpha_{t-1}(t)=\alpha_t(t)$. Then $\{\alpha_1,\ldots,\alpha_t\}$ will form a rainbow $t$-cycle and our desired contradiction follows.
   \end{proof}

  We can use a similar method as that of the previous section to construct 4-perfect hash family with four rows. However, we can not simply take $\ma{B}=\{0,1,2,3\}$ since such choice will lead to an equation $$2m_1+2m_2-3m_3-m_4=0,$$ whose solution is not easy to determine as suggested by Ruzsa \cite{ruzsa}. In order to show $p_4(4,q)>q^{2-o(1)}$, we should choose $\ma{B}$ more carefully. Recall that we have set $q$ to be a prime.

 \begin{lemma}\label{rainbow2}
     Let $R=\{b_1,\ldots,b_{r}\}\s\{0,\ldots,q-1\}$ be an $r$-element subset with rank $r(R)$. If $M\s\{0,1,\ldots,\lfloor (q-1)/r(R)\rfloor\}$ is an $R$-sum-free set, then the hypergraph defined by $$E(\ma{G}_M)=\{A(y,m):y\in\mathbb{Z}_q,~m\in M\},$$ where $A(y,m)=\{(i,y+b_im):b_i\in R\}$, is an $r$-uniform $r$-partite linear hypergraph containing no rainbow cycles.
  \end{lemma}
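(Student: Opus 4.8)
The plan is to verify the two asserted properties of $\ma{G}_M$ separately: linearity, and the absence of rainbow cycles. Linearity is essentially already available. Since $q$ is prime and $A(y,m)$ is exactly the word of length $r$ given by the arithmetic-progression-like pattern $(y+b_1m,\ldots,y+b_rm)$, the same two-congruence computation that gave (\ref{primeq}) shows $|A(y,m)\cap A(y',m')|\le 1$ whenever $(y,m)\ne(y',m')$; I would just cite (\ref{primeq}) and note that the hypergraph is $r$-partite by construction (the $i$-th coordinate of every edge lies in $V_i$). So the only real content is ruling out rainbow cycles.

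Suppose for contradiction that $\ma{G}_M$ contains a rainbow $k$-cycle $v_1,E_1,v_2,\ldots,v_k,E_k,v_1$ with $3\le k\le r$, where $E_l=A(y_l,m_l)$ and, because the cycle is rainbow, the $k$ consumed vertices lie in $k$ distinct parts, say parts indexed by $j_1<j_2<\cdots<j_k$. The condition $v_l\in E_l\cap E_{l+1}$ (indices mod $k$) translates, at the part $j_l$, into the congruence
$$
y_l+b_{j_l}m_l\equiv y_{l+1}+b_{j_l}m_{l+1}\pmod q,
$$
with the wrap-around relation $y_k+b_{j_k}m_k\equiv y_1+b_{j_k}m_1\pmod q$ arising from $v_k\in E_k\cap E_1$. (One has to be slightly careful about which of the two shared parts of consecutive edges is the cycle vertex; relabelling, I can arrange that $v_l$ sits in part $j_l$.) Summing these $k$ congruences telescopes the $y_l$'s away and leaves
$$
(b_{j_2}-b_{j_1})m_1+(b_{j_3}-b_{j_2})m_2+\cdots+(b_{j_1}-b_{j_k})m_k\equiv 0\pmod q.
$$
This is precisely the defining equation of an $R$-sum-free set for the $k$-subset $S=\{b_{j_1},\ldots,b_{j_k}\}$, but taken mod $q$ rather than over $\mathbb Z$. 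To upgrade the congruence to an honest equation, I use the hypothesis $M\s\{0,1,\ldots,\lfloor(q-1)/r(R)\rfloor\}$: each coefficient $b_{j_{l+1}}-b_{j_l}$ has absolute value at most $r(R)$, and the positive and negative parts of the coefficient sequence each sum (in absolute value) to at most $r(R)$ — indeed, telescoping, the coefficients sum to $0$ and the total variation of a monotone-then-wrap sequence is bounded by $2r(R)$, while each $|m_l|\le (q-1)/r(R)$ — so the integer on the left side lies strictly between $-q$ and $q$. Hence the congruence forces the equation, and $R$-sum-freeness yields $m_1=\cdots=m_k$. Plugging $m_1=\cdots=m_k=:m$ back into the congruences gives $y_1\equiv y_2\equiv\cdots\equiv y_k\pmod q$, so $E_1=\cdots=E_k$, contradicting condition (b) that the edges of a cycle be distinct.

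The step I expect to be the main obstacle is the bookkeeping in passing from the mod-$q$ relation to the genuine integer identity: one must pin down the exact range of the integer linear combination $\sum(b_{j_{l+1}}-b_{j_l})m_l$ and confirm it lies in $(-q,q)$ so that a multiple of $q$ must be $0$. This is where the hypothesis on the range of $M$ (rather than just $M\s[q]$) is used, and it is the reason the lemma insists $r(R)$ be small; I would isolate this as a short arithmetic claim. A secondary, more cosmetic subtlety is justifying that a rainbow cycle's consecutive-edge intersections can be taken to occur in the cyclically-ordered parts $j_1<\cdots<j_k$ — this follows from the definition of rainbow cycle (the cycle vertices lie in distinct parts) together with relabelling the parts, exactly as in the triangle case treated in the proof of Theorem \ref{optimal3}.
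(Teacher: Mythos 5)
Your proposal follows the paper's proof line for line in structure: linearity via the two-congruence computation behind (\ref{primeq}), then the telescoped congruence $\sum_l (b_{j_{l+1}}-b_{j_l})m_l\equiv 0\pmod q$ from a hypothetical rainbow $k$-cycle, an upgrade to an integer equation using the range of $M$, and finally $R$-sum-freeness to force $m_1=\cdots=m_k$ and hence $y_1=\cdots=y_k$. You have also correctly identified the one delicate step, namely passing from the congruence to the equation, which the paper itself dispatches with the single clause ``since $m_i\le\lfloor(q-1)/r(R)\rfloor$''.

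However, the justification you give for that step does not work as stated. You claim that by relabelling you may assume the cycle meets the parts in increasing order $j_1<\cdots<j_k$, so that the coefficient sequence is ``monotone-then-wrap'' and its positive part sums to at most $r(R)$. Relabelling a cycle only allows rotation and reversal, so the cyclic order in which the parts are visited is an invariant; for $k\ge4$ there are genuinely non-monotone cyclic orders (e.g.\ $b_{j_1}<b_{j_3}<b_{j_2}<b_{j_4}$), and for these the total positive variation of the coefficients can be as large as roughly $\lfloor k/2\rfloor\, r(R)$. Then $\bigl|\sum_l (b_{j_{l+1}}-b_{j_l})m_l\bigr|$ need not be below $q$, the congruence no longer forces the integer equation, and $R$-sum-freeness (which only concerns the equation $=0$) gives no contradiction. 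This is visible already in the paper's own application with $\ma{B}=\{0,2,5,\mu+5\}$: the ordering $(0,5,2,\mu+5)$ has coefficients $5,-3,\mu+3,-(\mu+5)$ with positive part $\mu+8>r(\ma{B})$, so the left-hand side can a priori equal $\pm q$. The triangle case $k=3$ is genuinely safe (every cyclic order of three elements is monotone up to rotation/reflection), which is why the analogous move in Theorem \ref{optimal3} is fine. The gap is inherited from, not worse than, the paper's own argument, and it is easily repaired without affecting any asymptotics: either take $M\s\{0,1,\ldots,\lfloor(q-1)/(\lfloor r/2\rfloor\, r(R))\rfloor\}$ so that the combination always lies in $(-q,q)$, or strengthen the $R$-sum-free requirement to also exclude solutions of the relevant equations equal to the few nonzero multiples of $q$ within range. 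You should state one of these fixes explicitly rather than rely on the reordering claim.
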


  \begin{proof}
    First it is easy to see $\ma{G}_M$ is $r$-uniform and $r$-partite with $V(\ma{G}_M)=\cup_{j=1}^r V_j$, where $V_j=\{(j,y):y\in\mathbb{Z}_q\}$, $1\le j\le N$. To see $\ma{G}_M$ is also linear, one just needs to notice that if $|A(y,m)\cap A(y',m')|\ge2$, then there are $b_1,b_2\in R$, $b_1\neq b_2$ such that

    \begin{equation*}
    \left\{
        \begin{aligned}
            y+b_1m &\equiv y'+b_1m' \pmod{q}\\
            y+b_2m &\equiv y'+b_2m' \pmod{q}.\\
        \end{aligned}
    \right.
    \end{equation*}
    \noindent Then we can infer $(b_1-b_2)(m-m')\equiv 0 \pmod{q}$, which is a contradiction with $q$ prime.

    Now it remains to show that $\ma{G}_M$ indeed contains no rainbow cycles. Assume the contrary, it contains a rainbow $k$-cycle with $k\le r$, denoted by $v_1,A(y_1,m_1),v_2,A(y_2,m_2),\ldots,v_k,A(y_k,m_k),v_1$, where $v_i\in V_{j_i}$ and $j_{i_1}\neq j_{i_2}$ for $i_1\neq i_2$ by the definition of a rainbow cycle. The following $k$ equations hold simultaneously:

    \begin{equation*}
    \left\{
        \begin{aligned}
           y_1+b_{j_2}m_1 &\equiv &y_2+b_{j_2}m_2 \pmod{q}\\
           y_2+b_{j_3}m_2 &\equiv &y_3+b_{j_3}m_3 \pmod{q}\\
           \vdots &              \\
           y_{k-1}+b_{j_k}m_{k-1} &\equiv &y_k+b_{j_k}m_k \pmod{q}\\
           y_k+b_{j_1}m_k &\equiv &y_1+b_{j_1}m_1 \pmod{q}.\\
        \end{aligned}
    \right.
    \end{equation*}
    \noindent By a simple elimination, one can infer
    $$(b_{j_2}-b_{j_1})m_1+(b_{j_3}-b_{j_2})m_2+\cdots+(b_{j_k}-b_{j{k-1}})m_{k-1}+(b_{j_1}-b_{j_k})m_k\equiv 0 \pmod{q},$$
    \noindent or
    $$(b_{j_2}-b_{j_1})m_1+(b_{j_3}-b_{j_2})m_2+\cdots+(b_{j_k}-b_{j{k-1}})m_{k-1}+(b_{j_1}-b_{j_k})m_k=0,$$
    \noindent since $m_i\le\lf(q-1)/r(R)\rf$ for each $1\le i\le k$, which implies $m_1=\cdots=m_k$ taking into account the fact that $M$ is $R$-sum-free. Thus $y_1=\cdots=y_k$, which is a contradiction. Therefore, we can conclude that $\ma{G}_M$ contains no rainbow cycles.
  \end{proof}

  Lemmas \ref{rainbow1} and \ref{rainbow2} suggest that we can use tools from additive number theory to construct good perfect hash families. As discussed before Theorem \ref{optimal3}, we use $\ma{B}$ to denote the set of tangents of $A(y,m)$. To construct $PHF(4;n,q,4)$, we take $\ma{B}=\{0,2,5,\mu+5\}$, where $b_0=0,b_1=2,b_2=5$ and $b_3=\mu+5$ with $\mu=\lc 2^{\sqrt{\log q}}\rc$. Note that $\mu=o(q^{\epsilon})$ for arbitrary small constant $\epsilon>0$. By previous lemmas, our goal is to construct a $\ma{B}$-sum-free subset $M$ of $\mathbb{Z}_q$ with sufficiently large cardinality.
  The desired hyperedge $A(y,m)$ is defined to be

  \begin{equation}\label{Aym}
    A(y,m)=\{(1,y),(2,y+2m),(3,y+5m),(4,y+(\mu+5)m)\}.
  \end{equation}

  \noindent The following lemma (together with Lemma \ref{rainbow2}) shows that if we choose $M$ as the set defined in Lemma \ref{t=4p=4}, then the corresponding $E(\ma{G}_M)$ contains no rainbow cycles.

  \begin{lemma}\label{norainbowcycles1}
    Choose $M$ as the set defined in Lemma \ref{t=4p=4} and let $\ma{B}$ be the 4-element set defined above, then $M$ is $\ma{B}$-sum-free.
  \end{lemma}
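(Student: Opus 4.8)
The plan is to simply unwind the definition of a $\ma{B}$-sum-free set and check that the finitely many homogeneous equations it forbids a nontrivial solution to are exactly --- up to permuting the unknowns $m_i$ and multiplying through by $-1$, operations that preserve the property of having no nontrivial solution --- the seven equations of the system (\ref{3}) in Lemma \ref{t=4p=4}. Since $\ma{B}=\{0,2,5,\mu+5\}$ has four elements, $M$ is $\ma{B}$-sum-free precisely when, for every $k\in\{3,4\}$ and every $k$-element subset $S\s\ma{B}$ with an ordering $b_{j_1},\ldots,b_{j_k}$, the cyclic equation $(b_{j_2}-b_{j_1})m_1+\cdots+(b_{j_1}-b_{j_k})m_k=0$ has only the trivial solution in $M$.

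First I would dispose of the case $k=3$. There are $\binom{4}{3}=4$ three-element subsets, and for each one all orderings produce the same equation up to a cyclic permutation (hence a permutation of the $m_i$) and possibly an overall sign; so each subset contributes one equation. Computing the consecutive differences around the triangle, the subsets $\{0,2,5\}$, $\{0,5,\mu+5\}$, $\{0,2,\mu+5\}$, $\{2,5,\mu+5\}$ yield respectively $2m_1+3m_2-5m_3=0$, $5m_1+\mu m_2-(\mu+5)m_3=0$, $2m_1+(\mu+3)m_2-(\mu+5)m_3=0$, $3m_1+\mu m_2-(\mu+3)m_3=0$, which are exactly the last four equations of (\ref{3}).

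Next I would treat $k=4$, i.e.\ $S=\ma{B}$. Now the cyclic order matters, but there are only $(4-1)!/2=3$ inequivalent cyclic arrangements of $\{0,2,5,\mu+5\}$; reading off the four consecutive differences for each gives $2m_1+3m_2+\mu m_3-(\mu+5)m_4=0$ and, after relabeling the unknowns (and, where needed, negating), $5m_1+(\mu+3)m_2-3m_3-(\mu+5)m_4=0$ and $5m_1+\mu m_2-2m_3-(\mu+3)m_4=0$, which are the first three equations of (\ref{3}). As the set $M$ furnished by Lemma \ref{t=4p=4} admits no nontrivial solution to any of these seven equations, it is $\ma{B}$-sum-free. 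I would also record that $r(\ma{B})=\mu+5$ and $M\s\{0,\ldots,\lf(q-1)/(\mu+5)\rf\}$, so the size hypothesis of Lemma \ref{rainbow2} is met as well --- this is what we will invoke to conclude that $E(\ma{G}_M)$ has no rainbow cycles.

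The step needing the most care is the $k=4$ enumeration: one has to be certain the three listed cyclic classes really are all of them, and must correctly track the permutation of the $m_i$'s (together with the occasional sign flip) matching each difference equation to its counterpart in (\ref{3}); a miscount of the cyclic classes or a bookkeeping slip in the relabeling is essentially the only way this otherwise routine verification can fail.
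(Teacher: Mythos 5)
Your proposal is correct and matches the paper's approach: the paper's proof is just the one-line remark that the lemma follows ``directly by definition'' from the fact that $M$ has no nontrivial solution to the system (\ref{3}), and your enumeration (four $3$-subsets giving the last four equations, three cyclic classes of the full $4$-set giving the first three, up to permutation of the $m_i$ and sign) is precisely the verification the paper leaves to the reader. The computations check out.
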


  \begin{proof}
    Note that $M$ has no nontrivial solution to all equations in (\ref{3}), then one can verify this lemma directly by definition.
  \end{proof}

  \begin{lemma}\label{norainbowcycles2}
    The hypergraph defined by $$\ma{G}_M=\{A(y,m):y\in\mathbb{Z}_q,~m\in M\},$$ has no rainbow cycles, where $M$ is the set defined in Lemma \ref{t=4p=4} and $A(y,m)$ is defined in (\ref{Aym}).
  \end{lemma}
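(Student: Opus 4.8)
The plan is to obtain Lemma~\ref{norainbowcycles2} as an immediate consequence of the two preceding lemmas. First I would recall the exact shape of the hyperedges: by~(\ref{Aym}) the edge $A(y,m)$ has tangent set $\ma{B}=\{0,2,5,\mu+5\}$, i.e.\ $b_1=0$, $b_2=2$, $b_3=5$, $b_4=\mu+5$. The rank of this set is $r(\ma{B})=\max_{i<j}|b_i-b_j|=\mu+5$. Hence, in order to quote Lemma~\ref{rainbow2} with $R=\ma{B}$ and $r=4$, I need two inputs: (i) the index set $M$ lies in $\{0,1,\ldots,\lfloor(q-1)/r(\ma{B})\rfloor\}=\{0,1,\ldots,\lfloor(q-1)/(\mu+5)\rfloor\}$, and (ii) $M$ is $\ma{B}$-sum-free.

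Both inputs are already available. Input (i) is part of the statement of Lemma~\ref{t=4p=4}: the set $M$ there is chosen inside $\{0,1,\ldots,\lfloor(q-1)/(\mu+5)\rfloor\}$, which is exactly the range $\{0,\ldots,\lfloor(q-1)/r(\ma{B})\rfloor\}$ required by Lemma~\ref{rainbow2}. Input (ii) is precisely the content of Lemma~\ref{norainbowcycles1}, which asserts that this $M$ is $\ma{B}$-sum-free (the seven homogeneous equations in~(\ref{3}) are exactly the equations that arise from the rainbow-cycle eliminations for all $3$- and $4$-subsets of $\ma{B}$, so having no nontrivial solution to them is the definition of $\ma{B}$-sum-freeness). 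So the proof is: invoke Lemma~\ref{norainbowcycles1} to know $M$ is $\ma{B}$-sum-free, check that $M\subseteq\{0,\ldots,\lfloor(q-1)/r(\ma{B})\rfloor\}$ since $r(\ma{B})=\mu+5$, and then apply Lemma~\ref{rainbow2} with $R=\ma{B}$ to conclude that $\ma{G}_M$ is a $4$-uniform $4$-partite linear hypergraph with no rainbow cycles.

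There is essentially no obstacle here; the only thing to be careful about is the bookkeeping that the equation derived in the proof of Lemma~\ref{rainbow2} for an arbitrary $k$-subset ($3\le k\le 4$) of the parts, namely $(b_{j_2}-b_{j_1})m_1+\cdots+(b_{j_1}-b_{j_k})m_k=0$, is literally one of the equations listed in~(\ref{3}) up to relabelling and sign — this is what Lemma~\ref{norainbowcycles1} guarantees, so I would simply cite it rather than re-derive it. I would also note in passing that the hypothesis $q$ prime (fixed earlier) is what lets Lemma~\ref{rainbow2} pass from the congruence modulo $q$ to the genuine integer equation, using $m_i\le\lfloor(q-1)/r(\ma{B})\rfloor$ to control the size of the linear combination. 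Thus the proof is three lines: ``By Lemma~\ref{norainbowcycles1}, $M$ is $\ma{B}$-sum-free, and by Lemma~\ref{t=4p=4}, $M\subseteq\{0,\ldots,\lfloor(q-1)/(\mu+5)\rfloor\}=\{0,\ldots,\lfloor(q-1)/r(\ma{B})\rfloor\}$. Applying Lemma~\ref{rainbow2} with $R=\ma{B}$ yields that $\ma{G}_M$ contains no rainbow cycles.''
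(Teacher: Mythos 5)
Your proposal is correct and is essentially identical to the paper's own proof, which simply applies Lemmas \ref{rainbow2} and \ref{norainbowcycles1} while noting $r(\ma{B})=\mu+5$. Your additional bookkeeping (checking $M\subseteq\{0,\ldots,\lfloor(q-1)/(\mu+5)\rfloor\}$ and that the equations in (\ref{3}) cover all $3$- and $4$-subsets of $\ma{B}$) just makes explicit what the paper leaves implicit.
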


  \begin{proof}
    Apply Lemmas \ref{rainbow2} and \ref{norainbowcycles1} and note that $r(\ma{B})=\mu+5$.
  \end{proof}

   \begin{theorem}\label{optimal4}
    There exists a constant $\gamma$ such that $p_4(4,q)>q^2e^{-\gamma(\log q)^{3/4}}$.
  \end{theorem}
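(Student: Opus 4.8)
The plan is to assemble the ingredients prepared above. Since the construction $\ma{G}_M$ only makes sense when the part size is prime, I would first prove the bound for prime $q$ and recover the general case afterwards. So assume $q$ is prime and put $\mu=\lc 2^{\sqrt{\log q}}\rc$. First I would invoke Lemma \ref{t=4p=4} to obtain a set $M\s\{0,1,\ldots,\lf(q-1)/(\mu+5)\rf\}$ with $|M|\ge qe^{-\gamma_0(\log q)^{3/4}}$ for an absolute constant $\gamma_0$, having no nontrivial solution to any of the seven equations of the system (\ref{3}). Then, with the tangent set $\ma{B}=\{0,2,5,\mu+5\}$, I would form the hypergraph $\ma{G}_M$ with edge set $E(\ma{G}_M)=\{A(y,m):y\in\mathbb{Z}_q,\ m\in M\}$, where $A(y,m)$ is the $4$-set of (\ref{Aym}). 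By Lemmas \ref{norainbowcycles1} and \ref{norainbowcycles2}, $\ma{G}_M$ is a $4$-uniform $4$-partite linear hypergraph with equal part size $q$ that contains no rainbow cycles; distinct pairs $(y,m)$ yield distinct edges (they meet in at most one vertex), so $|E(\ma{G}_M)|=q|M|$.

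Next I would feed this into the chain of Lemma \ref{rainbow1}, whose right-hand inequality reads $g_4^*(q)\le p_4(4,q)$. Since $g_4^*(q)$ is, by definition, the largest number of edges in a rainbow-cycle-free $4$-uniform $4$-partite linear hypergraph of part size $q$, the hypergraph $\ma{G}_M$ certifies $g_4^*(q)\ge |E(\ma{G}_M)|=q|M|$, and hence
\[
p_4(4,q)\ge g_4^*(q)\ge q|M|\ge q^{2}e^{-\gamma_0(\log q)^{3/4}},
\]
which is the claim for prime $q$. To finish, for arbitrary $q$ I would take $q'$ to be the largest prime not exceeding $q$; a $PHF(4;n,q',4)$ is in particular a $PHF(4;n,q,4)$, so $p_4(4,q)\ge p_4(4,q')\ge (q')^{2}e^{-\gamma_0(\log q')^{3/4}}$, and since $q'=(1-o(1))q$ by any prime-gap estimate, the polynomial loss is absorbed into a slightly larger constant $\gamma$, giving $p_4(4,q)>q^{2}e^{-\gamma(\log q)^{3/4}}$.

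The genuinely hard work is not in this assembly but in the two lemmas it rests on. One is the forward direction of Lemma \ref{rainbow1}: converting an unseparated $t\times t$ submatrix into a rainbow $k$-cycle requires the delicate induction already carried out there. The other is the Behrend/Erd\H{o}s--Frankl--R\"odl--Ruzsa-type probabilistic construction behind Lemma \ref{t=4p=4}, which must keep $M$ of near-linear size while simultaneously destroying an entire system of homogeneous linear equations. The last point needing care is the choice of the tangent set: the naive $\ma{B}=\{0,1,2,3\}$ produces the equation $2m_1+2m_2-3m_3-m_4=0$, for which no suitable solution-free set is known, so $\ma{B}$ must be chosen so that every telescoped equation arising from a potential rainbow cycle falls among those controlled in (\ref{3}).
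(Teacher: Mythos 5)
Your proposal is correct and follows essentially the same route as the paper: combine Lemma \ref{t=4p=4} with the tangent set $\ma{B}=\{0,2,5,\mu+5\}$, use Lemmas \ref{norainbowcycles1} and \ref{norainbowcycles2} to get a rainbow-cycle-free linear hypergraph with $q|M|$ edges, and conclude via $p_4(4,q)\ge g_4^*(q)\ge q|M|$ from Lemma \ref{rainbow1}. Your explicit handling of non-prime $q$ by passing to the nearest prime is a small added care that the paper only addresses implicitly, but it does not change the argument.
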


  \begin{proof}
    Apply Lemmas \ref{t=4p=4}, \ref{rainbow1} and \ref{norainbowcycles2}. Then the theorem follows from
    $$p_4(4,q)\ge g_4^*(q)\ge|E(\ma{G}_M)|=q|M|>q^2e^{-\gamma(\log q)^{3/4}}.$$
  \end{proof}

  \begin{remark}\label{conditions}
    In the above construction of $PHF^*(4,n,q,4)$, we choose the tangent set $\ma{B}$ of the hyperedge $A(y,m)$ to be $\ma{B}=\{0,2,5,\mu+5\}$ with $\mu=\lc 2^{\sqrt{\log q}}\rc$. This choice of $\ma{B}$ has appeared in \cite{ippe}, where the authors used such $\ma{B}$ to construct 2-IPP codes. In this paper we choose the same $\ma{B}$ as they did since in this way we can save the space for proving Lemma \ref{t=4p=4}. Actually, when $|R|=4$ there are many choices of $\ma{B}$ satisfying the following conditions
    \begin{itemize}
      \item [(a)] $M\s\{0,1,\ldots,\lf(q-1)/r(R)\rf\}$ is $R$-sum-free,
      \item [(b)] $|M|>q^{1-o(1)}$,
      \item [(c)] $r(R)=o(q^{\epsilon})$ for arbitrary small $\epsilon>0$.
    \end{itemize}
    \noindent However, for $|R|\ge5$, we do not know whether such $\ma{B}$ exists.
  \end{remark}

\section{Connections to hypergraph Tur{\'a}n problems}

    In this section we will study perfect hash families in view of hypergraph Tur{\'a}n problems.

    \begin{theorem}\label{general}
       For arbitrary positive integers $t,N,q$, it holds that $f_N^*(q,tN-N,t)\le p_t(N,q)$. Furthermore, $\fr{N!}{N^N}f_N(Nq,tN-N,t)\le p_t(N,q)$.
    \end{theorem}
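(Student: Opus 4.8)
The plan is to observe that, under the matrix/hypergraph dictionary of Observation 1 and Observation 2, a $G_N(N(t-1),t)$-free $N$-uniform $N$-partite hypergraph with equal part size $q$ is precisely (the representation matrix of) a $PHF(N;n,q,t)$ with $n$ edges, and then to feed this into the extremal hypergraphs counted by $f_N^*$ and by $f_N$. Concretely, I would fix an $N$-uniform $N$-partite hypergraph $\ma{G}$ with parts $V_1,\ldots,V_N$ of size $q$ and let $M$ be its representation matrix, an $N\times|E(\ma{G})|$ matrix over $[q]$. For any $t$ distinct columns $\alpha_1,\ldots,\alpha_t$ of $M$, equivalently $t$ distinct edges $A_1,\ldots,A_t$ of $\ma{G}$, put $d_i=|\{\alpha_1(i),\ldots,\alpha_t(i)\}|$ for $i\in[N]$, so $1\le d_i\le t$. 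The key bookkeeping identity is $|A_1\cup\cdots\cup A_t|=\sum_{i=1}^N d_i$, and row $i$ separates $\{\alpha_1,\ldots,\alpha_t\}$ if and only if $d_i=t$. Hence, if no row of $M$ separates these $t$ columns then $d_i\le t-1$ for all $i$, so the $t$ edges span at most $N(t-1)$ vertices; contrapositively, if $\ma{G}$ is $G_N(N(t-1),t)$-free then $M$ is the representation matrix of a $PHF(N;|E(\ma{G})|,q,t)$. (The case $t=1$ is vacuous, since a $1$-set is separated by every row.)

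Given this, the first inequality is immediate: choosing $\ma{G}$ to be an $N$-uniform $N$-partite hypergraph with parts of size $q$, free of $G_N(N(t-1),t)$, of maximum size $f_N^*(q,N(t-1),t)$, yields a $PHF(N;f_N^*(q,N(t-1),t),q,t)$, so $p_t(N,q)\ge f_N^*(q,N(t-1),t)$. For the second inequality I would start from an $N$-uniform $G_N(N(t-1),t)$-free hypergraph $\ma{G}_0$ on $Nq$ vertices with $|E(\ma{G}_0)|=f_N(Nq,N(t-1),t)$ and apply Lemma \ref{erdos-kleitman} with $r=N$ and $n=Nq$. Since the vertex set has size exactly $Nq$, this gives an $N$-uniform $N$-partite subgraph $\ma{H}\s\ma{G}_0$ with all parts of size $q$ and $|E(\ma{H})|\ge\fr{N!}{N^N}|E(\ma{G}_0)|$. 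As a subgraph of $\ma{G}_0$, $\ma{H}$ is still $G_N(N(t-1),t)$-free, so its representation matrix is a $PHF(N;|E(\ma{H})|,q,t)$ by the first part, whence $p_t(N,q)\ge|E(\ma{H})|\ge\fr{N!}{N^N}f_N(Nq,N(t-1),t)$.

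I do not expect a genuine obstacle; the argument is a clean translation. The points deserving care are the union-size identity $|A_1\cup\cdots\cup A_t|=\sum_i d_i$ paired with the equivalence between $d_i=t$ and ``row $i$ separates''; the remark that columns indexed by a $t$-subset of $X$ are automatically pairwise distinct, so they correspond to distinct edges of $\ma{G}$; and the check that Lemma \ref{erdos-kleitman}, applied to a vertex set of size exactly $Nq$, returns parts of size $q$ (not $q+1$), so the alphabet size is preserved. The relation $f_N^*(q,v,e)\le f_N(Nq,v,e)$ noted in Section 2 is consistent with, but not needed for, this proof.
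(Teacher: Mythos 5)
Your proposal is correct and follows essentially the same route as the paper: translate the $N$-partite hypergraph into its $N\times|E(\ma{G})|$ representation matrix, observe that $t$ edges with no separating row span at most $N(t-1)=tN-N$ vertices (contradicting $G(tN-N,t)$-freeness), and reduce the second inequality to the first via Lemma \ref{erdos-kleitman}. Your explicit bookkeeping with the quantities $d_i$ just spells out the step the paper leaves as ``not hard to verify.''
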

    \begin{proof}
        By Lemma \ref{erdos-kleitman}, it suffices to prove the first statement of the theorem. Recall that if a hypergraph $\ma{G}$ is $N$-uniform $N$-partite with equal part size $q$, then $E(\ma{G})$ can be represented by an $N\times|E(\ma{G})|$ $q$-ary matrix $M$. If $\ma{G}$ is $G(tN-N,t)$-free, then given any collection of $t$ edges $S\s E(\ma{G})$, it is not hard to verify that in its representation matrix there must exist a row that separates $S$, since otherwise $S$ can contain at most $tN-N$ vertices, violating the fact that $\ma{G}$ is $G(tN-N,t)$-free. Therefore, $M$ can be viewed as the representation matrix of the desired perfect hash family.
    \end{proof}

    A direct application of Theorem \ref{general} gives the following result.

    \begin{corollary}\label{cccccccccccccccccccc}
        If $2\nmid N$, then for arbitrary $\epsilon>0$, it holds that $p_3(N,q)>q^{\lc N/2\rc-\epsilon}$.
    \end{corollary}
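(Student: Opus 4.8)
The plan is to combine Theorem~\ref{general} with the Alon--Shapira extension~(\ref{2}) of the Ruzsa--Szemer{\'e}di $(6,3)$-theorem. Specialising Theorem~\ref{general} to $t=3$ gives
$$p_3(N,q)\ge\fr{N!}{N^N}\,f_N(Nq,2N,3),$$
since $tN-N=2N$ when $t=3$. Hence it suffices to show that $f_N(Nq,2N,3)$ is at least $(Nq)^{\lc N/2\rc-o(1)}$.

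For this I would invoke~(\ref{2}) with $r=N$, choosing the parameter $k$ so that the forbidden configuration $G_N(3(N-k)+k+1,3)$ becomes $G_N(2N,3)$; that is, $3(N-k)+k+1=2N$, which solves to $2k=N+1$, i.e.\ $k=\fr{N+1}{2}=\lc N/2\rc$, the last equality using the hypothesis $2\nmid N$. For odd $N\ge3$ one has $2\le k=\lc N/2\rc\le N-1<N=r$, so the substitution is admissible and~(\ref{2}) yields
$$(Nq)^{\lc N/2\rc-o(1)}<f_N(Nq,2N,3),$$
where the $o(1)$ tends to $0$ as $Nq\to\infty$, hence (for fixed $N$) as $q\to\infty$. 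For $N=1$ the claimed inequality is trivial, since then $p_3(1,q)=q$.

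Putting the two displays together, $p_3(N,q)\ge\fr{N!}{N^N}(Nq)^{\lc N/2\rc-o(1)}=c_N\,q^{\lc N/2\rc-o(1)}$ for a positive constant $c_N$ depending only on $N$. Since the exponent correction is subpolynomial in $q$ and $c_N$ is a fixed constant, for every $\epsilon>0$ we obtain $p_3(N,q)>q^{\lc N/2\rc-\epsilon}$ once $q$ is large enough in terms of $N$ and $\epsilon$. The argument has essentially no obstacle; the only point needing care is checking that the parameter choice $k=\lc N/2\rc$ lies in the admissible range of~(\ref{2}), and this is exactly where the parity assumption $2\nmid N$ enters, as it is what makes $\lc N/2\rc=(N+1)/2$ an integer solution of $2k=N+1$.
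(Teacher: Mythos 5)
Your argument is correct and follows exactly the paper's route: apply Theorem~\ref{general} with $t=3$ to reduce to bounding $f_N(Nq,2N,3)$, then invoke the Alon--Shapira bound~(\ref{2}) with $r=N$ and $k=(N+1)/2=\lc N/2\rc$, which is where the parity of $N$ is used. The paper phrases the substitution as ``set $N=2k-1$'' but the computation and conclusion are identical to yours.
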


    \begin{proof}
        This corollary follows from the inequality (\ref{2}), $n^{k-o(1)}<f_r(n,3(r-k)+k+1,3)=o(n^k)$.
        Set $N=2k-1$ and $t=3$, by Theorem \ref{general} one can infer
        $$p_3(N,q)\ge p_3^*(N,q)\ge f_N^*(q,3N-N,3)\ge\fr{N!}{N^N}f_N(Nq,3N-N,3)>\fr{N!}{N^N}(Nq)^{\lc N/2\rc-o(1)}.$$
    \end{proof}

\section{Concluding remarks}

In this paper we mainly study codes and hash families with the separating property. Several open problems and conjectures concerning the upper or lower bounds are solved. Our two essential methods to study these objects can be summarized as follows.

The first method is to discover the structural information hidden in the separating property. As an example, our Johnson-type bound (Lemma \ref{technical}) is used to establish Theorem \ref{recusivebd} and Corollary \ref{cccc}.

The second one is that we establish a bridge between perfect hash families, graph theory and additive number theory. For example, we solve Conjecture \ref{walker} by considering a related hypergraph Tur{\'a}n problem. We also showed that tools from additive number theory can be used to construct good perfect hash families. As a result, Theorems \ref{optimal3}, \ref{optimal4} and Corollary \ref{cccccccccccccccccccc} suggest that there may exist a positive answer to Question \ref{nmid}.

Besides these two new methods, we believe that the construction in Section 4 is of interest since it generalizes many previous ones. Further generalizations of our method are expected.

As a conclusion, we would like to mention several open problems which we think are interesting.

\vspace{10pt}



\textbf{Open Problem 1.}
    If $2\nmid N$, Corollary \ref{cccccccccccccccccccc} shows that $p_3(N,q)>q^{\lc N/2\rc-o(1)}$. Determine whether $p_3(N,q)=o(q^{\lc N/2\rc})$ or $p_3(N,q)=\Theta(q^{\lc N/2\rc})$.

\vspace{10pt}

\textbf{Open Problem 2.}
   For $r$-uniform $r$-partite linear hypergraph without rainbow cycles, we have proved that $g_r^*(q)=o(q^2)$ and $g_i^*(q)>q^{2-o(1)}$ for $i=3,4$. Then does it hold that $g_r^*(q)>q^{2-o(1)}$ for all $r\ge3$?
\vspace{10pt}

\textbf{Open Problem 3.}
    For arbitrary $r\ge3$, does there exist an $r$-element set $R$ and $M\s[q]$ such that the conditions in Remark \ref{conditions} are satisfied? Note that the question is true when $r=3,4$.
\vspace{10pt}

\textbf{Open Problem 4.}
    It has been shown in Theorem \ref{general} that $p_t(N,q)\ge f_N^*(q,tN-N,t)$. Then does there exist an upper bound for $p_t(N,q)$ only using $f_N^*(q,v,t)$?

\bibliographystyle{plain}
\bibliography{shf}

\end{CJK*}
\end{document}